\documentclass{article}

\usepackage{PRIMEarxiv}
\usepackage[utf8]{inputenc}
\usepackage[T1]{fontenc}
\usepackage{hyperref}
\usepackage{url}
\usepackage{booktabs}
\usepackage{amsfonts}
\usepackage{nicefrac}
\usepackage{microtype}
\usepackage{lipsum}
\usepackage{fancyhdr}
\usepackage{graphicx}
\usepackage{tabularx}
\usepackage{enumitem}
\usepackage{amsmath,amssymb,amsfonts}
\usepackage{amsthm}
\usepackage{algorithm}
\usepackage{algpseudocode}
\usepackage{caption}
\usepackage{thmtools}
\usepackage{thm-restate}
\graphicspath{{media/}}

\pagestyle{fancy}
\thispagestyle{empty}
\rhead{ \textit{ }} 

\fancyhead[LO]{Geometric Structure and FPTAS of Game Equilibria}

\newtheorem{theorem}{Theorem}
\newtheorem{proposition}[theorem]{Proposition}
\newtheorem{lemma}[theorem]{Lemma}
\newtheorem{definition}{Definition}
\setlist[enumerate,1]{label=(\roman*),font=\normalfont}
\captionsetup[figure]{labelfont={bf},name={Fig.},labelsep=space}

\title{Geometric Structure and Polynomial-time Algorithm of Game Equilibria
}

\author{
  Hongbo Sun \thanks{For discussing technical details, please contact this author.} \\
  Shenzhen International\\ Graduate School \\
  Tsinghua University \\
  Shenzhen\\
  \texttt{shb20@tsinghua.org.cn} \\
  \And
  Chongkun Xia \\
  School of Advanced Manufacturing \\
  Sun Yat-sen University \\
  Shenzhen\\
  \texttt{xiachk5@mail.sysu.edu.cn} \\
  \And
  Junbo Tan \\
  Shenzhen International\\ Graduate School \\
  Tsinghua University \\
  Shenzhen\\
  \texttt{tjblql@sz.tsinghua.edu.cn} \\
  \And
  Bo Yuan \\
  Research Institute of\\ Tsinghua University in Shenzhen \\
  Tsinghua University \\
  Shenzhen\\
  \texttt{boyuan@ieee.org} \\
  \And
  Xueqian Wang \\
  Shenzhen International Graduate School \\
  Tsinghua University \\
  Shenzhen\\
  \texttt{wang.xq@sz.tsinghua.edu.cn} \\
  \And
  Bin Liang \\
  Department of Automation \\
  Tsinghua University \\
  Beijing\\
  \texttt{bliang@tsinghua.edu.cn} \\
}

\begin{document}
\maketitle

\begin{abstract}
    Whether a PTAS (polynomial-time approximation scheme) exists for game equilibria has been an open question, and its absence has indications and consequences in three fields:
    the practicality of methods in algorithmic game theory, non-stationarity and curse of multiagency in MARL (multi-agent reinforcement learning), and the tractability of PPAD in computational complexity theory.
    In this paper, we formalize the game equilibrium problem as an optimization problem that splits into two subproblems with respect to policy and value function, which are solved respectively by interior point method and dynamic programming.
    Combining these two parts, we obtain an FPTAS (fully PTAS) for the weak approximation (approximating to an $\epsilon$-equilibrium) of any perfect equilibrium of any dynamic game, implying PPAD=FP since the weak approximation problem is PPAD-complete.
    In addition, we introduce a geometric object called equilibrium bundle,
    regarding which, first, perfect equilibria of dynamic games are formalized as zero points of its canonical section,
    second, the hybrid iteration of dynamic programming and interior point method is formalized as a line search on it,
    third, it derives the existence and oddness theorems as an extension of those of Nash equilibria.
    In experiment, the line search process is animated, and the method is tested on 2000 randomly generated dynamic games where it converges to a perfect equilibrium in every single case.
\end{abstract}

\keywords{game theory \and equilibrium \and dynamic programming \and interior point method \and polynomial-time approximation scheme}
\textbf{MSC codes:} 90C39, 90C51, 91A15

\section{Introduction}
\subsection{The computational problem of game equilibria}
Whether game equilibria can be efficiently solved has been an open question since Nash equilibrium\cite{nash_orginal} is proposed.
The absence of such a polynomial-time algorithm has indications and consequences in three different fields.
\begin{itemize}
    \item In algorithmic game theory, the most widely used algorithms for approximating equilibria are no-regret\cite{noregret,cfr}, self-play\cite{fp,fsp}, and their variants.
          \begin{itemize}
              \item Neither of them converges to Nash equilibria in general static games.
                    No-regret converges to coarse correlated equilibria, and self-play converges to strict Nash equilibria in static games that satisfy the fictitious play property.
              \item When applied to dynamic games, neither of them guarantees the approximated equilibria to be perfect, namely, being a Nash equilibrium at every stage of the game, and thus non-perfect equilibria are not optimal in dynamic games.
          \end{itemize}
    \item In MARL, there are problems known as non-stationarity and curse of multiagency\cite{mrl_problem}, and these two problems are related to problems in algorithmic game theory, such that there is currently no algorithm that can converge to Nash equilibria or perfect equilibria in polynomial time for general games.
          \begin{itemize}
              \item Non-stationarity means that the policies are hard to converge when reinforcement learning agents simultaneously maximize their utilities.
              \item Curse of multiagency means that the computation needed for the policies of all agents to achieve optimal is exponential to the number of agents.
          \end{itemize}
    \item In computational complexity theory, there are the following results about approximating Nash equilibria.
          \begin{itemize}
              \item There are two different approximations: weak approximation and strong approximation\cite{different_appro}.
                    \begin{itemize}
                        \item Weak approximation is the approximation to an $\epsilon$-equilibrium, namely, a policy profile such that every player is at most $\epsilon$ away from its maximum utility.
                        \item Strong approximation is the approximation to an $\epsilon$-neighborhood of an exact equilibrium, namely, a policy profile whose distance from an exact equilibrium is less than $\epsilon$.
                        \item A weak approximation of an exact equilibrium could be far from the exact equilibrium.
                    \end{itemize}
              \item For static games with any number of players, computing a weak approximation of Nash equilibria is PPAD-complete\cite{nash_weak_complex1}, if $\epsilon$ is inversely proportional to a polynomial in the game size\cite{nash_weak_complex0},
                    indicating that an FPTAS does not exist for weak approximation of Nash equilibria unless PPAD=FP.
              \item For static games with any number of players, computing a weak approximation of Nash equilibria for fixed $\epsilon$ is also PPAD-complete\cite{nash_weak_complex2},
                    indicating that an PTAS does not exist for weak approximation of Nash equilibria unless PPAD=FP.
              \item For static games with two players, computing an exact Nash equilibrium is PPAD-complete\cite{nash_strong_complex0}.
              \item For static games with three or more players, computing an exact Nash equilibrium and computing a strong approximation of Nash equilibria are both FIXP-complete\cite{nash_strong_complex1}.
          \end{itemize}
\end{itemize}

In this paper, we deal with fully observable dynamic games, and all the results of dynamic games hold for static games as single-state degenerations.
Our results eventually lead to an FPTAS for weak approximation of any perfect equilibrium of any dynamic game, implying PPAD=FP.

PTAS, FPTAS, PPAD, FP are all general concepts in computational complexity theory.
A PTAS (polynomial-time approximation scheme) for a class of problems is an algorithm that approximates an answer to precision $\epsilon$ in polynomial time with respect to the problem size.
An FPTAS (fully polynomial-time approximation scheme) is a PTAS whose running time is not only a polynomial with respect to the problem size, but also a polynomial with respect to $1/\epsilon$.
FP is a complexity class that consists of function problems that can be solved in polynomial time, corresponding to the well-known complexity class P, which consists of decision problems that can be solved in polynomial time.
Decision problems are the computational problems for which the answer is simply yes or no, while function problems are the computational problems for which the answer can be anything.
PPAD is a complexity class defined as all the function problems that reduce to an {\it End-Of-The-Line} problem in polynomial time\cite{ppad},
and {\it End-Of-The-Line} is pretty convincingly intractable in polynomial time, making PPAD believed to contain hard problems, that is, it is believed to be unlikely that PPAD=FP.

In {\it End-Of-The-Line}, there is a directed graph $DG$, and a polynomial-time computable function $f$ given by a boolean circuit.
$DG$ consists of vertices that are connected by arrows, where each vertex has at most one predecessor and at most one successor.
Each vertex is encoded in $n$ bits, and $f$ takes the bits of a vertex as input to output its predecessor and successor, either of which may be none.
A vertex is called an unbalanced vertex if exactly one of its predecessor and successor is none.
    {\it End-Of-The-Line} is the problem that: given an unbalanced vertex, find another unbalanced vertex.
In $DG$, every vertex is either on a chain or isolated, and the problem seems to let us follow a potentially exponentially long chain from a given source to a sink,
making {\it End-Of-The-Line} pretty convincingly intractable in polynomial time.

PPAD stands for Polynomial Parity Arguments on Directed graphs, where the parity argument refers to that given an unbalanced vertex, there exist an odd number of other unbalanced vertices.
The oddness theorem\cite{oddness2} of Nash equilibria stating that there are an odd number of Nash equilibria for almost all\footnote{Almost all means other cases form a null set in all the cases, that is, the probability of encountering them is zero.} static games is also a parity argument.
The Sperner's lemma in combinatorics, which is equivalent to Brouwer's fixed point theorem that is used to prove the existence\cite{nash_orginal2} of Nash equilibria, is another parity argument.
All these three parity arguments are based on the fact that the solutions are connected in pairs.

\subsection{Notation}

Notations in this paper generally follow Einstein summation convention since the operations involved are all tensor operations.
In mathematics and physics, Einstein summation convention is a widely used notation to achieve brevity in denoting tensor operations.
In computer science, Einstein summation is also one of the basic functions implemented in tensor computing platforms such as NumPy and PyTorch.
In this paper, we have attached an experimental code implementing our algorithm, which is based on the {\it einsum} function in NumPy and PyTorch,
and thus a good illustration of Einstein summation convention can be found in their documentation.

First, a tensor is denoted as a symbol with superscripts and subscripts such as $\mu_a^i$, where the superscripts and subscripts represent the indices of the tensor, and the range of an index such as $a\in \mathcal{A}$ is implicitly inferred from context.
A tensor can be denoted using different index symbols, such as $\mu_a^i$ and $\mu_{a''}^k$ represent the same tensor.
In particular, $\mathbf{1}_s$ represents the tensor whose elements are all $1$, and $I_{xy}$, $I^{xy}$, or $I_x^y$ is the identity matrix, which is a second-order tensor with two indices.

Second, Einstein summation in this paper is an operation that first takes element-wise product and then sums over vanishing indices, which follows the {\it einsum} function in NumPy and PyTorch, instead of the classic Einstein summation.
For example, the Einstein summation of $Y_a^s$ and $Z_a^{si}$ is denoted as $Y_a^s Z_a^{si}$, and the indices it sums over is inferred from context, such as the equation
$$W_s^i+X^i=Y_a^s Z_a^{si}:=\sum_{a\in \mathcal{A}} Y_a^s Z_a^{si},$$
where indices $s$ and $i$ appear in other additives of the equation, while index $a$ vanishes.
Thus, the Einstein summation $Y_a^s Z_a^{si}$ is the summation over index $a$, where the range of $a$ is implicitly inferred from outer context.
In particular, when an Einstein summation appears in the objective of an optimization problem where there is no equation, such as $w^s \pi_a^{si} r_a^{si}$,
the summation over all the indices is implied since the objective is a scalar by definition such that all the indices vanish.
In addition, element-wise product with no summation over any index is explicitly denoted using Hadamard product, such as $\pi_a^{si}\circ r_a^{si}$, though it can be implicitly inferred from context.

Third, there are a few unconventional notations to further simplify the expressions, which make sense after the problem definition in the next subsection.
For policy $\pi_a^{si}$, let
\begin{equation*}
    \pi_A^s:=\left(\prod_{k\in N}\pi_{a_k}^{sk}\right)_A^s,
    \pi_{Aa}^{si-}:=\left(I_{a_i a}\prod_{k\in N-\{i\} }\pi_{a_k}^{sk}\right)_{Aa}^{si},
    \pi_{Aaa'}^{sij-}:=\left([i\neq j]\cdot I_{a_i a}I_{a_j a'}\prod_{k\in N-\left\{i,j\right\} }\pi_{a_k}^{sk}\right)_{Aaa'}^{sij},
\end{equation*}
where $A$ represents $(a_i)_{i\in N}$ and $[\cdot]$ is the Iverson bracket that outputs $1$ if the input is true and outputs $0$ otherwise.
Then, let $\pi_A^sU_{\pi A}^{si}:=(\pi_A^sU_{\pi A}^{si})^{si}$, $\pi_{Aa}^{si-}U_{\pi A}^{si}:=(\pi_{Aa}^{si-}U_{\pi A}^{si})_a^{si}$, and $\pi_{Aaa'}^{sij-}U_{\pi A}^{si}:=(\pi_{Aaa'}^{sij-}U_{\pi A}^{si})_{aa'}^{sij}$.
Let $T_{\pi s'}^s:=\pi_a^s T_{s'a}^s$ and $u_\pi^s:=\pi_a^s u_a^s$.
Let $\max_a$ represent the maximum with respect to index $a$ for every other indices.
Finally, in dealing with static games, we may drop the index $s$ when one state is referred and put it on when all states are referred, such as $\pi_a^i$ represents $\pi_a^{si}(x)$ for some given state $x$.

\subsection{Problem definition}

\begin{definition}[Dynamic game]
    \label{dynamic_game}
    A dynamic game $\Gamma$ is a tuple $(N,\mathcal{S},\mathcal{A},T,u,\gamma)$, where $N$ is a set of players, $\mathcal{S}$ is a state space, $\mathcal{A}$ is an action space,
    $T:\mathcal{S} \times \prod_{i\in N} \mathcal{A} \to \Delta(\mathcal{S}) $ is a transition function, $\Delta(\mathcal{S})$ is a space of probability distributions on $\mathcal{S}$,
    $u:\mathcal{S} \times \prod_{i\in N} \mathcal{A} \to \prod_{i\in N} \mathbb{R}$ is an utility function, $\mathbb{R}$ is the set of real numbers, $\gamma \in [0,1)$ is a discount factor.
    A static game is a dynamic game with only one element in its state space $\mathcal{S}$.
\end{definition}
\begin{definition}[Perfect equilibrium]
    \label{perfect_equil}
    A perfect equilibrium of dynamic game $\Gamma$ is a policy $\pi:\mathcal{S} \to \prod_{i\in N} \Delta(\mathcal{A} )$ such that $V_s^i=\max_a\pi_{Aa}^{si-}(u_A^{si}+\gamma T_{s'A}^s V_{s'}^i)$,
    where $V:\mathcal{S} \to \prod_{i\in N} \mathbb{R}$ is the unique value function that satisfies $V_s^i=\pi_A^s(u_A^{si}+\gamma T_{s'A}^s V_{s'}^i)$.
    A Nash equilibrium is a perfect equilibrium of a static game.
\end{definition}

Dynamic game and perfect equilibrium inherit the major characteristics of stochastic game and Markov perfect equilibrium\cite{stocgame,mpe} in existing research, and they are defined in the most intuitive and general way.
Then, static games and Nash equilibria are the single-state degenerations of dynamic games and perfect equilibria.

Dynamic game generally follows the state-space representation.
First, on a given state $s\in\mathcal{S}$, a set $N$ of players choose their actions in the action space $\mathcal{A}$ simultaneously.
Second, according to the action combination $(a_i)_{i\in N}$, the utility function $u$ specifies the utility of each player, and the transition function $T$ specifies the next state $s'\in\mathcal{S}$ of the game.
Third, the process is repeated, and the goal of each player $i\in N$ is to maximize its cumulative utility discounted by factor $\gamma$ on every initial state.
We denote the utility function as tensor $u_A^{si}$ and the transition function as tensor $T_{s'A}^s$, where $A:=(a_i)_{i\in N}$ is the index of action combination.

The policy $\pi$ specifies a probability distribution $\Delta(\mathcal{A})$ of actions for each player $i\in N$ on every state $s\in\mathcal{S}$.
The value function $V$ measures the value of each state $s\in\mathcal{S}$ to each player $i\in N$.
Denote $\mathcal{P}=\{\pi:\mathcal{S} \to \prod_{i\in N} \Delta(\mathcal{A} )\}$ as the policy space, and $\mathcal{V}=\{V:\mathcal{S} \to \prod_{i\in N} \mathbb{R}\}$ as the value function space.
We denote the policy as tensor $\pi_a^{si}$ and the value function as tensor $V_s^i$.
Note that for any policy $\pi_a^{si}$, there is a unique value function satisfying $V_s^i=\pi_A^s(u_A^{si}+\gamma T_{s'A}^s V_{s'}^i)$, which we denote as $V_{\pi s}^i$,
because $u_\pi^s=(I_{s'}^s-\gamma T_{\pi s'}^s)V_{\pi s'}$ and $I_{s'}^s-\gamma T_{\pi s'}^s$ is invertible by Lemma~\ref{stoc_mat_lemm} (ii), where $T_{\pi s'}^s:=\pi_a^s T_{s'a}^s$ and $u_\pi^s:=\pi_a^s u_a^s$.
In addition, we call $V_{\pi s}^i$ the value function of policy $\pi_a^{si}$ because, intuitively, for any state $x\in\mathcal{S}$, $V_{\pi s}^i(x)$ is the discounted cumulative utility generated by $\pi_a^{si}$ on initial state $x$.

A perfect equilibrium is a policy where all the players simultaneously earn their maximum discounted cumulative utilities on every initial state, which involves two equations.
First, $V_s^i=\pi_A^s(u_A^{si}+\gamma T_{s'A}^s V_{s'}^i)$ is to ensure that value function $V_s^i$ is the discounted cumulative utility generated by policy $\pi_a^{si}$ on every initial state.
Second, $V_s^i=\max_a\pi_{Aa}^{si-}(u_A^{si}+\gamma T_{s'A}^s V_{s'}^i)$ is to ensure that $V_s^i$ is the maximum discounted cumulative utilities of each player on every initial state.
A weak approximation of a perfect equilibrium is an $\epsilon$-perfect equilibrium, which is a policy that every player is at most $\epsilon$ away from its maximum utility.
There are many equivalent definitions of $\epsilon$-Nash equilibrium, but we use equation \eqref{eps_equil} as the definition of $\epsilon$-perfect equilibrium, and its single-state degeneration can be shown equivalent to existing $\epsilon$-Nash equilibrium definitions.

\subsection{Major results}

In this paper, we introduce an equation called unbiased KKT conditions and a geometric object called equilibrium bundle, along with unbiased barrier problem and Brouwer function.
The major results are centered around the unbiased KKT conditions and the equilibrium bundle, which are respectively introduced as we go through the following sections, but we put their definitions here only to describe our major results.

\begin{restatable}[Unbiased KKT conditions of dynamic games]{definition}{ukktdy}
    Unbiased KKT conditions are given by the following simultaneous equations,
    where $U_{\pi A}^{si}=u_A^{si}+\gamma T_{s'A}^s V_{\pi s'}^i$ and $V_{\pi s}^i$ is the unique value function satisfying $V_s^i=\pi_A^s(u_A^{si}+\gamma T_{s'A}^s V_{s'}^i)$.
    \begin{subequations}
        \label{ukkt_dy_equ}
        \begin{equation}
            \label{ukkt_dy_equ1}
            \begin{bmatrix}
                \hat{\pi}_a^{si}\circ r_a^{si}-\mu_a^{si}   \\
                r_a^{si}-v_s^i+\pi_{Aa}^{si-}U_{\pi A}^{si} \\
                \mathbf{1}_a\hat{\pi}_a^{si}-\mathbf{1}^{si}
            \end{bmatrix}=0
        \end{equation}
        \begin{equation}
            \pi_a^{si}=\hat{\pi}_a^{si}
        \end{equation}
    \end{subequations}
\end{restatable}

\begin{restatable}[Equilibrium bundle of dynamic games]{definition}{equilbundl}
    An equilibrium bundle is the tuple $(E, \mathcal{P}, \alpha:E\to \mathcal{P})$ given by the following equations,
    where $U_{\pi A}^{si}=u_A^{si}+\gamma T_{s'A}^s V_{\pi s'}^i$ and $V_{\pi s}^i$ is the unique value function satisfying $V_s^i=\pi_A^s(u_A^{si}+\gamma T_{s'A}^s V_{s'}^i)$.
    \begin{equation}
        \label{equil_bund_dy_equ}
        \begin{aligned}
             & E=\bigcup _{\pi_a^{si} \in \mathcal{P}} \{\pi_a^{si}\} \times B(\pi_a^{si})                                                 \\
             & B(\pi_a^{si})=\left\{v_s^i\circ\pi_a^{si}+\bar{\mu}_a^{si}(\pi_a^{si})|v_s^i\geq 0\right\}                                  \\
             & \bar{\mu}_a^{si}(\pi_a^{si})=\pi_a^{si} \circ \left(\max_a \pi_{Aa}^{si-}U_{\pi A}^{si}-\pi_{Aa}^{si-}U_{\pi A}^{si}\right) \\
             & \alpha\left((\pi_a^{si},\mu_a^{si})\right) =\pi_a^{si}                                                                      \\
        \end{aligned}
    \end{equation}
    The map $\bar{\mu}_a^{si}:\mathcal{P}\to \{\mu_a^{si}|\min_a^{si}\mu_a^{si}=\mathbf{0}^{si}\}$ is called the canonical section of the equilibrium bundle.
\end{restatable}

The equilibrium bundle is a geometric object called fiber bundle\footnote{We only use the basic idea that the total space is the disjoint union of the fibers over the base space to formalize a geometric structure, no deeper theories are involved.} in differential geometry,
its base space is policy space $\mathcal{P}$, its total space is $E$ consisting of pairs $(\pi_a^{si},\mu_a^{si})$, its fiber over $\pi_a^{si}\in \mathcal{P}$ is $\alpha^{-1}(\pi_a^{si})=\{\pi_a^{si}\} \times B(\pi_a^{si})$,
and canonical section $\bar{\mu}_a^{si}$ is its section that maps each $\pi_a^{si}\in \mathcal{P}$ to the least element in $B(\pi_a^{si})$.
Conventionally, we can also call $E$ as the equilibrium bundle, and call $B(\pi_a^{si})$ as its fiber over $\pi_a^{si}$.
The major results are several facts about the unbiased KKT conditions and the equilibrium bundle, which we show as we go through the following sections.

\begin{itemize}
    \item There are many equivalent characterizations of perfect equilibria of dynamic games centered around the unbiased KKT conditions.
          \begin{itemize}
              \item For every barrier parameter, unbiased KKT conditions derive a Brouwer function that maps a primal policy to a dual policy, such that the Brouwer's fixed point theorem applies, leading to the existence theorem of perfect equilibrium as an extension of that of Nash equilibrium\cite{nash_orginal2}.
                    \begin{itemize}
                        \item There is an unbiased barrier problem depicting the approximation of the fixed points of the Brouwer function,
                              such that the fixed points are its global optimal points, or equivalently its zero points.
                    \end{itemize}
              \item Unbiased KKT conditions derive a polynomial function that maps a policy to a barrier parameter, such that the Newton-Puiseux theorem applies, leading to the oddness theorem of perfect equilibrium as an extension of that of Nash equilibrium\cite{oddness2}.
              \item The solution space of unbiased KKT conditions can be structured as an equilibrium bundle, such that perfect equilibria are the zero points of the canonical section of the equilibrium bundle, where the canonical section depicts the global distribution of perfect equilibria in policy space.
          \end{itemize}
    \item There is a hybrid iteration of dynamic programming and interior point method formalized as a line search method on the equilibrium bundle.
          \begin{itemize}
              \item The method consists of two levels of iteration: updating onto the equilibrium bundle by alternating the steps of projected gradient (of the unbiased barrier problem) descent and dynamic programming,
                    and hopping across the fibers of the equilibrium bundle to a zero point of the canonical section.
              \item For any perfect equilibrium of any dynamic game, there are infinite many path on the equilibrium bundle that lead exactly to it.
                    The method achieves a weak approximation, namely, the approximation to an $\epsilon$-approximate perfect equilibrium, in fully polynomial time,
                    and the time complexity of achieving a strong approximation, namely, the approximation to an $\epsilon$-neighborhood of an actual perfect equilibrium, depends on the gradient of the canonical section near the actual perfect equilibrium.
              \item The computation of the algorithm is based on a variant of the expected utility instead of a specific game model, and thus the algorithm works with any game model that has a polynomial-time algorithm for the expected utility problem, and also with model-free cases.
              \item As for intermediate results, the game equilibrium problem is formalized as an optimization problem, which splits into two subproblems with respect to policy and value function.
                    \begin{itemize}
                        \item The subproblem with respect to policy is equivalent to the Nash equilibrium problem of static games. We introduce two equivalent concepts called unbiased barrier problem and unbiased KKT conditions,
                              making the interior point method to satisfy a so-called primal-dual unbiased condition using these two concepts,
                              so that it can approximate Nash equilibria by local optimization.
                        \item The subproblem with respect to value function is equivalent to the convergence problem of dynamic programming in dynamic games. We introduce a concept called policy cone, giving the iteration properties of dynamic programming and the equivalent conditions of equilibria in the context of policy cone,
                              and then use them to give the sufficient and necessary condition for dynamic programming to converge to perfect equilibria.
                    \end{itemize}
          \end{itemize}
\end{itemize}

Considering the research status and our major contributions, there are three direct impacts of our discovery.
\begin{itemize}
    \item Our discovery leads to a fundamental leap in the understanding of game equilibria through the equilibrium bundle.
    \item There is previously no efficient algorithms solving general game equilibrium problem, but now there is, and it is general enough to derive an efficient algorithm for any problem involving strategic interactions.
    \item PPAD=FP, where PPAD class is previously considered to contain hard problems, but our discovery shows that it actually does not, which means there is an efficient algorithm for every problem in the PPAD class to be discovered.
\end{itemize}

\section{Technical route and related work}
\subsection{Formalizing game equilibrium problem as optimization problem}
\begin{definition}[Regret minimization problem of dynamic games]
    A regret minimization problem of a dynamic game $\Gamma$ is optimization problem \eqref{regmin_problem},
    where $(\pi_a^{si},r_a^{si},V_s^i)$ is a tuple of policy, regret, and value,
    and $w^s$ may be any vector such that $w^s>0$.
    \begin{equation}
        \label{regmin_problem}
        \begin{aligned}
            \min_{(\pi_a^{si},r_a^{si},V_s^i,v_s^i)}\quad & w^s \pi_a^{si} r_a^{si}                                             \\
            \textrm{s.t.}\quad                            & r_a^{si}-v_s^i+\pi_{Aa}^{si-}(u_A^{si}+\gamma T_{s'A}^s V_{s'}^i)=0 \\
                                                          & \mathbf{1}_a\pi_a^{si}-\mathbf{1}^{si}=0                            \\
                                                          & v_s^i-V_s^i=0                                                       \\
                                                          & (\pi_a^{si},r_a^{si})\geq 0
        \end{aligned}
    \end{equation}
\end{definition}

The road to our FPTAS for perfect equilibria of dynamic games starts from the regret minimization problem \eqref{regmin_problem} of dynamic games.
The global optimal points of problem \eqref{regmin_problem} are equivalently the perfect equilibria and the optimal value is zero as shown in Theorem~\ref{per_equil_thm}.

\begin{proposition}
    \label{single_dp_lp}
    The following statements about linear programming problem \eqref{primal_lp} and \eqref{dual_lp} hold, where $\bar{w}^s$ may be any vector such that $\bar{w}^s>0$.
    \begin{equation}
        \label{primal_lp}
        \begin{aligned}
            \min_{V_s}\quad    & \bar{w}^s V_s                         \\
            \textrm{s.t.}\quad & V_s\geq u_a^s+\gamma T_{s'a}^s V_{s'} \\
        \end{aligned}
    \end{equation}
    \begin{equation}
        \label{dual_lp}
        \begin{aligned}
            \max_{\pi_a^s}\quad & \bar{w}^s V_{\pi s}              \\
            \textrm{s.t.}\quad  & \mathbf{1}_a\pi_a^s=\mathbf{1}^s \\
                                & \pi_a^s\geq 0
        \end{aligned}
    \end{equation}
    \begin{enumerate}
        \item Linear programming problem \eqref{primal_lp} and \eqref{dual_lp} are the dual problem of each other, and the optimal value is the optimal value function of the corresponding single-player dynamic game.
        \item The single-player case of regret minimization problem \eqref{regmin_problem} is the minimization of the duality gap of linear programming problem \eqref{primal_lp} and \eqref{dual_lp},
              where $\bar{w}^{s'}=w^s(I_{s'}^s-\gamma T_{\pi s'}^s)$.
    \end{enumerate}
\end{proposition}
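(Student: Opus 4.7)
The plan is to treat the two claims separately: part (i) is a routine LP-duality computation followed by a change of variables, and part (ii) is a direct substitution that is set up so the duality gap appears on the nose.

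For part (i), I would first take the Lagrangian dual of \eqref{primal_lp} in the standard way, introducing multipliers $x_a^s\ge 0$ for the constraints $V_s - u_a^s - \gamma T_{s'a}^s V_{s'}\ge 0$. Stationarity in the free variable $V_s$ yields the linear equation $\mathbf{1}_a x_a^s - \gamma x_a^{s'} T_{sa}^{s'} = \bar w^s$, so the LP dual maximizes $x_a^s u_a^s$ over $x\ge 0$ subject to this constraint. Next I would reparametrize $x_a^s = d^s\pi_a^s$ with $d^s := \mathbf{1}_a x_a^s$. Since $\bar w^s>0$, the dual constraint forces $d^s>0$ at every feasible point, so this is a bijection onto pairs $(d^s,\pi_a^s)$ with $\mathbf{1}_a\pi_a^s=1$ and $\pi\ge 0$; conversely, Lemma~\ref{stoc_mat_lemm}(ii) guarantees that $d^s := \bar w^{s'}(I_{s'}^s - \gamma T_{\pi s'}^s)^{-1}$ is well defined and nonnegative for any such $\pi$, so $d^s$ is redundant as a variable. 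Telescoping the dual objective via the identity $V_{\pi s}=u_\pi^s+\gamma T_{\pi s'}^s V_{\pi s'}$ then reduces it to $\bar w^s V_{\pi s}$, recovering \eqref{dual_lp}. The claim that the optimal value is the optimal value function of the single-player dynamic game is then the classical LP formulation of an MDP: the pointwise smallest $V$ satisfying the Bellman inequality is $V^*$, and weighting by $\bar w^s>0$ preserves that optimum; strong duality (which holds because the primal is bounded below by $\bar w^s V^{*s}$ and the dual is feasible for every $\pi$) transfers the same optimal value to \eqref{dual_lp}.

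For part (ii), I would specialize \eqref{regmin_problem} to $|N|=1$. Here $\pi_{Aa}^{si-}$ collapses to $I_{a_i a}$, so the constraints become $r_a^s = v^s - u_a^s - \gamma T_{s'a}^s V_{s'}$, $v^s = V^s$, $\mathbf{1}_a\pi_a^s = 1$, and $(\pi,r)\ge 0$. Thus $r\ge 0$ is equivalent to primal feasibility in \eqref{primal_lp} while $\pi\ge 0,\mathbf{1}_a\pi_a^s=1$ is dual feasibility in \eqref{dual_lp}. Eliminating $r$ and $v$ collapses the objective to $w^s V_s - w^s(u_\pi^s + \gamma T_{\pi s'}^s V_{s'})$, and it remains to check that with $\bar w^{s'} = w^s(I_{s'}^s - \gamma T_{\pi s'}^s)$ this equals the duality gap $\bar w^s V_s - \bar w^s V_{\pi s}$. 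The first term expands as $w^s V_s - \gamma w^s T_{\pi s'}^s V_{s'}$ directly, and the second collapses to $w^s u_\pi^s$ using $(I_{s'}^s - \gamma T_{\pi s'}^s)V_{\pi s'} = u_\pi^s$; subtracting reproduces the objective exactly.

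I do not expect a genuine obstacle here — the argument is bookkeeping. The only subtle point is ensuring that the bijection $x\leftrightarrow(d^s,\pi_a^s)$ genuinely identifies the LP dual with \eqref{dual_lp}, which is where the positivity of $\bar w^s$ and Lemma~\ref{stoc_mat_lemm}(ii) are both essential; without either, the dual visitation measure could vanish at some state or fail to exist, and the equivalence of the two objectives would break down.
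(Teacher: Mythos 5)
Your proposal is correct and follows essentially the same route as the paper: take the LP dual of \eqref{primal_lp}, rescale the dual multipliers as $\lambda_a^s=d^s\pi_a^s$ with $d^s=\mathbf{1}_a\lambda_a^s$, and use $u_\pi^s=(I_{s'}^s-\gamma T_{\pi s'}^s)V_{\pi s'}$ together with the dual constraint to identify the objective with $\bar{w}^s V_{\pi s}$, with part (ii) a direct substitution (the paper simply states it "directly follows"). The only tiny imprecision is citing Lemma~\ref{stoc_mat_lemm}(ii) for the nonnegativity of $d^s$: that lemma concerns $(I_{s'}^s-\gamma T_{\pi s'}^s)X\geq 0\rightarrow X\geq 0$ for column vectors, whereas you need the transposed system $d(I-\gamma T_\pi)=\bar{w}$, which holds by the same Neumann-series reasoning but is not literally the stated lemma.
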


MDP (Markov decision process) is the model for single-player decision problem\cite{mdp}, and an optimal policy of an MDP is the single-player degeneration of a perfect equilibrium of a dynamic game.
Value iteration $V_{s,k+1}=\max_a(u_a^s+\gamma T_{s'a}^s V_{s',k})$ and policy iteration $\pi_{a,k+1}^s=\arg\max_a(u_a^s+\gamma T_{s'a}^s V_{\pi_{k} s'})$ are two dynamic programming methods to compute the optimal policies of MDPs in existing studies.
Proposition~\ref{single_dp_lp}~(i) restates an existing result about the relation between dynamic programming and linear programming.
In addition, the two dynamic programming methods correspond to the two linear programmings, such that value iteration solves linear programming problem \eqref{primal_lp}, and policy iteration solves linear programming problem \eqref{dual_lp}, where the iteration variables correspond to the optimization variables.

Proposition~\ref{single_dp_lp}~(ii) shows that the single-player case of problem \eqref{regmin_problem} is in the form of the duality gap of a pair of dual linear programming problems, making alternately updating $\pi_a^{si}$ and $V_s^i$ feasible.
However, the general case of problem \eqref{regmin_problem} is a multilinear programming problem with joint constraints such that $\pi_a^{si}$ and $V_s^i$ simultaneously appear in a constraint, making it not possible to be decomposed with respect to $\pi_a^{si}$ and $V_s^i$.
Although decomposition with respect to $\pi_a^{si}$ and $V_s^i$ is unfeasible, we can still try to alternately update $\pi_a^{si}$ and $V_s^i$ by taking one of $\pi_a^{si}$ and $V_s^i$ as variable while the other as constant, splitting problem \eqref{regmin_problem} into two subproblems.
\begin{itemize}
    \item Regret minimization problem \eqref{regmin_equ} of static games with respect to $\pi_a^{si}$, which is equivalent to solving Nash equilibria of static games.
    \item Dynamic programming problem \eqref{dp_problem} with respect to $V_s^i$, which is equivalent to the convergence problem of dynamic programming in dynamic games.
\end{itemize}
Then, the perfect equilibrium problem is transformed to a set of Nash equilibrium problems indexed by states plus a dynamic programming problem.
In the next subsection, we introduce the technical routes of this paper to construct polynomial-time algorithms solving the two subproblems and combine them to a polynomial-time algorithm solving problem \eqref{regmin_problem}, which equivalently approximates perfect equilibria.

\subsection{Technical route}
In section \ref{ipm_sec}, we study the subproblem with respect to $\pi_a^{si}$ of problem \eqref{regmin_problem} and develope an algorithm solving it to update $\pi_a^{si}$.
The subproblem is regret minimization problem \eqref{regmin_equ}, and it is equivalent to the problem of computing Nash equilibria of static games.
\begin{itemize}
    \item {\bf Solving Nash equilibria is equivalent to solving the global optimal points of problem \eqref{regmin_equ}.}
          \begin{itemize}
              \item Existing methods like no-regret and self-play lack the ability to converge to Nash equilibria.
                    \begin{itemize}
                        \item According to the purification theorem\cite{pure_theo}, a policy is a Nash equilibrium if and only if for every player, every non-zero weighted action has the same utility.
                        \item As a consequence, mixed Nash equilibria are not stable equilibrium points for any method based on optimizing the utility of the updated mixed strategy, such as no-regret and self-play.
                        \item In no-regret\cite{noregret}, regret is a vector of differences between the utility of the updated mixed strategy and the utilities of every action, and the regret is minimized.
                        \item In self-play\cite{fp}, the utility of the updated mixed strategy is maximized.
                        \item In regret minimization problem \eqref{regmin_equ}, the regret is a vector of differences between the maximum utility and the utilities of every action, plus a scalar.
                    \end{itemize}
              \item {\bf Solving Nash equilibria is equivalent to solving the primal-dual unbiased local extreme points of problem \eqref{regmin_equ}.}
                    We introduce a primal-dual unbiased condition here, so that we can use local optimization methods that are polynomial-time.
              \item {\bf Using the interior point method to perform local optimization.}
                    The interior point method approximate a local extreme point along the central path through {\bf two levels of iteration}:
                    updating onto the central path by locally optimizating the barrier problem, and updating along the central path to a local extreme point by gradually reducing the barrier parameter to zero,
                    where the central path consists of solutions of the perturbed KKT conditions.
          \end{itemize}
    \item {\bf Using interior point method to approximate primal-dual unbiased local extreme points of problem \eqref{regmin_equ}.}
          We want the interior point method to always lead to a primal-dual unbiased local extreme point of problem \eqref{regmin_equ}, instead of just an ordinary local extreme point.
          \begin{itemize}
              \item {\bf Equivalently characterizing primal-dual unbiased local extreme points of the barrier problem of problem \eqref{regmin_equ} through unbiased barrier problem \eqref{ubarr_equ} or unbiased KKT conditions \eqref{ukkt_equ}.}
                    We introduce (primal-dual) unbiased barrier problem \eqref{ubarr_equ}, (primal-dual) unbiased KKT conditions \eqref{ukkt_equ}, and a Brouwer function here,
                    so that we can make the following transformation.
              \item {\bf Transforming the interior point method into a primal-dual unbiased interior point method.}
                    The primal-dual unbiased interior point method approximate a primal-dual unbiased local extreme point along the (primal-dual) unbiased central path through {\bf two levels of iteration},
                    where the unbiased central path consists of solutions of unbiased KKT conditions \eqref{ukkt_equ}.
                    \begin{itemize}
                        \item In the first iteration level, usually, the barrier problem is numerically optimized by SQP (sequential quadratic programming)\cite{interior_point}, which is a numerical second-order method.
                              But for unbiased barrier problem \eqref{ubarr_equ}, there exists an analytical expression of its projected gradient.
                              So we use projected gradient \eqref{projgrad_formula} to update onto the unbiased central path, and this analytical first-order method is much more simple.
                        \item In the second iteration level, we reduce the barrier parameter to update along the unbiased central path.
                    \end{itemize}
              \item {\bf There are three assumptions for the primal-dual unbiased interior point method to work.}
                    \begin{itemize}
                        \item Starting point: there has to be a starting point sufficiently close to the unbiased central path.
                        \item Differentiability: the updated point has to move an infinitesimal step along the unbiased central path with an infinitesimal reduction of the barrier parameter.
                        \item Convexity: unbiased barrier problem has to be strictly locally convex for projected gradient descent \eqref{projgrad_formula} to update onto the unbiased central path.
                    \end{itemize}
          \end{itemize}
    \item {\bf Introducing the equilibrium bundle to settle the three assumptions.}
          \begin{itemize}
              \item We introduce a geometric object called equilibrium bundle, along with its canonical section and singular point.
                    \begin{itemize}
                        \item Equilibrium bundle is a fiber bundle in differential geometry, which consists of disjoint fibers over every policy in the policy space.
                        \item {\bf Equilibrium bundle is the structured solution space of unbiased KKT conditions \eqref{ukkt_equ}.}
                        \item {\bf Nash equilibria are equivalently the zero points of the canonical section of the equilibrium bundle.}
                    \end{itemize}
              \item {\bf Settling the three assumptions on the equilibrium bundle.}
                    Equilibrium bundle corresponds to the previous unbiased central path in the following statements.
                    \begin{itemize}
                        \item Starting point: for any given policy, a large enough barrier parameter on its fiber is nearly on the equilibrium bundle.
                        \item Differentiability: on non-singular points of the equilibrium bundle, policy moves an infinitesimal step with an infinitesimal step of barrier parameter.
                        \item Convexity: on non-singular points of the equilibrium bundle, unbiased barrier problem \eqref{ubarr_equ} is strictly locally convex.
                        \item Singularity: for any given policy, a large enough barrier parameter on its fiber is guaranteed to be a non-singular point pairing with the policy.
                              Note that moving along a fiber changes neither the policy nor the canonical section, so this singular avoidance does not affect the line search to Nash equilibria.
                    \end{itemize}
              \item {\bf Transforming the primal-dual unbiased interior point method into a line search method on the equilibrium bundle.}
                    This method {\bf approximates any Nash equilibrium of any static game} through {\bf two levels of iteration}.
                    \begin{itemize}
                        \item In the first iteration level, we update onto the equilibrium bundle by projected gradient descent \eqref{projgrad_formula}.
                        \item In the second iteration level, singular points must be avoided as we update along the equilibrium bundle, which is done by canonical section descent \eqref{canosect_formula} leveraging the structure of the equilibrium bundle.
                              So we hop across the fibers of the equilibrium bundle to a zero point of its canonical section by canonical section descent \eqref{canosect_formula}.
                    \end{itemize}
          \end{itemize}
    \item To show the significance of our results, we additionally extend the existence and oddness theorems of Nash equilibria to the unbiased KKT conditions,
          such that the two theorems of Nash equilibria are implied by degenerating the two theorems of the unbiased KKT conditions.
          \begin{itemize}
              \item Existence theorem\cite{nash_orginal2}: for any barrier parameter, there is always a solution of the unbiased KKT conditions.
                    Because unbiased KKT conditions derive a Brouwer function mapping primal policies to dual policies for any barrier parameter, then a primal-dual unbiased policy always exists by Brouwer's fixed point theorem.
              \item Oddness theorem\cite{oddness2}: for any barrier parameter, there are almost always an odd number of solutions on the unbiased KKT conditions.
                    Because unbiased KKT conditions derive a polynomial function mapping policies to barrier parameters, then the equilibria and a unique starting point are almost always connected in pairs by Newton-Puiseux theorem.
          \end{itemize}
\end{itemize}

In section \ref{dp_sec}, we study the subproblem with respect to $V_s^i$ of problem \eqref{regmin_problem} and develope an algorithm solving it to update $V_s^i$.
The subproblem is dynamic programming problem \eqref{dp_problem}, and it is equivalent to the convergence problem of dynamic programming in dynamic games.
\begin{itemize}
    \item {\bf Solving dynamic programming problem \eqref{dp_problem} is equivalent to making the convergence point of dynamic programming to lie within the constraint of problem \eqref{dp_problem}.}
          \begin{itemize}
              \item Existing methods that directly use the Bellman operator in dynamic games lack the ability to converge at all.
                    \begin{itemize}
                        \item Value iteration is a dynamic programming method that uses the Bellman operator to iterate the value function,
                              and it is a polynomial-time exact algorithm for computing optimal policies of MDPs\cite{bellman}.
                              The iterative convergence of Bellman operator is contraction mapping convergence.
                        \item The Bellman operator is also directly used to solve perfect equilibria of dynamic games in early attempts\cite{mdp_game,nash_q}, showing that it generally do not converge.
                    \end{itemize}
              \item {\bf Settling for monotone convergence instead of contraction mapping convergence.}
                    \begin{itemize}
                        \item We introduce two dynamic programming operators respectively given by the objective and constraint of problem \eqref{dp_problem},
                              and introduce the policy cone and the best response cone respectively based on the two dynamic programming operators.
                        \item {\bf It is discovered that our dynamic programming operator achieves monotone convergence within the policy cone as shown in the following.}
                        \item Contraction mapping convergence can be considered as a convergence in every direction, and monotone convergence can be considered as a convergence in only a certain direction.
                        \item The Bellman operator is a special case of our dynamic programming operator, allowing us to analyze why Bellman operator does not converge in dynamic games later.
                    \end{itemize}
              \item {\bf For a fixed policy, dynamic programming operator achieves monotone convergence to the apex within the policy cone, and adding a large enough scalar keeps the iteration within the best response cone.}
                    \begin{itemize}
                        \item Policy cone is the monotonic and closed domain of dynamic programming operator such that monotone convergence theorem applies.
                        \item Every value function plus a large enough scalar is in the best response cone, and the best response cone lies inside the policy cone.
                        \item Policy cone has an apex that is the value function of its policy, and the apex is the limit point of the dynamic programming operator iteration, and plusing scalars during the iteration does not affect the convergence.
                    \end{itemize}
              \item {\bf A policy is a perfect equilibrium if and only if the apex of its policy cone lies in its best response cone, and if and only if the apex is the global optimal point of dynamic programming problem \eqref{dp_problem}.}
          \end{itemize}
    \item {\bf Making dynamic programming operator to monotonically converge to a policy cone apex that lies in the best response cone.}
          \begin{itemize}
              \item {\bf Explaining why Bellman operator cannot be simply generalized from MDPs to dynamic games.}
                    Bellman operator achieves both contraction mapping convergence and monotone convergence in single-player case, but it achieves neither the convergence in multi-player case.
              \item {\bf The sufficient and necessary condition for dynamic programming iteration \eqref{dp_formula} to converges to a perfect equilibrium value function.}
                    \begin{itemize}
                        \item Not only is the policy cone bounded below given a fixed policy, but the apex of it is also bounded below with respect to a varying policy.
                              Thus, monotone convergence with a varying policy is maintained by adding a large enough scalar so that the iteration lies within the policy cone and letting this scalar converge to zero.
                        \item The limit point lies in the best response cone if the iteration is kept within the best response cone.
                              Thus, we further require the scalar large enough so that the iteration lies within the best response cone.
                    \end{itemize}
          \end{itemize}
\end{itemize}

In section \ref{hybrid_sec}, we combine the above two algorithms by alternating the steps updating $\pi_a^{si}$ and $V_s^i$ such that the resulting algorithm solves problem \eqref{regmin_problem},
which is equivalent to the problem of computing perfect equilibria of dynamic games.
\begin{itemize}
    \item {\bf Perfect equilibria are equivalently the global optimal points of regret minimization problem \eqref{regmin_problem} of dynamic games.}
          \begin{itemize}
              \item Existing methods that directly use variants of the static game method in dynamic games lack the ability to converge to perfect equilibria.
                    \begin{itemize}
                        \item For every dynamic game, there is a corresponding static game, in which the behavioral strategies of the dynamic game serve as the static strategies of the corresponding static game.
                        \item When the variants of no-regret\cite{cfr} and self-play\cite{fsp} are used in dynamic games, they deal with interactions of behavioral strategies, instead of static strategies at each stage of the dynamic game.
                              This means that their resulting equilibria are at most Nash equilibria of the corresponding static game of behavioral strategy interactions, while not guaranteed to be a Nash equilibrium at every stage of the dynamic game.
                    \end{itemize}
              \item {\bf Combining the line search on equilibrium bundle and the dynamic programming within policy cone.}
                    \begin{itemize}
                        \item {\bf A policy is a perfect equilibrium if and only if it is the global optimal point of problem \eqref{regmin_problem}.}
                              There is a relation between canonical section and the two dynamic programming operators, which then establishs a relation between the results in the last two sections.
                        \item {\bf Determining the nesting order of the three iteration formulas of the two sub-algorithms.}
                              Dynamic programming \eqref{dp_formula} has to be at the same level of iteration as the projected gradient descent \eqref{projgrad_formula},
                              because dynamic programming moves more than an infinitesimal step every iteration, which would cause the updated point to be too far away from the equilibrium bundle to update back,
                              except for when dynamic programming is nearly converged.
                        \item {\bf Defining the unbiased KKT conditions and the equilibrium bundle of dynamic games.}
                              By setting the value function as the convergence point of dynamic programming, which is the value function of a specific policy,
                              there are still unbiased KKT conditions and the equilibrium bundle of dynamic games, such that those of static games are the single-state degenerations.
                              Furthermore, {\bf the existence and oddness theorems generalize to dynamic games.}
                        \item The combined method is still formalized as {\bf a line search on the equilibrium bundle},
                              which {\bf approximates any perfect equilibrium of any dynamic game} through {\bf two levels of iteration}:
                              updating onto the equilibrium bundle by projected gradient descent \eqref{projgrad_formula} and dynamic programming \eqref{dp_formula},
                              and hopping across the fibers of the equilibrium bundle to a zero point of its canonical section by canonical section descent \eqref{canosect_formula}.
                    \end{itemize}
              \item Finally, we show that the resulting algorithm achieves a weak approximation in fully polynomial time by showing the convergence rates of the three iterations \eqref{projgrad_formula}, \eqref{dp_formula}, and \eqref{canosect_formula}.
          \end{itemize}
\end{itemize}

\section{Interior point method on static games}\label{ipm_sec}
\subsection{Approximating Nash equilibrium by local optimization}\label{ipm_sec1}
In this section, we deal with the subproblem with respect to $\pi_a^{si}$ of problem \eqref{regmin_problem}, which is the regret minimization problem \eqref{regmin_equ} with $\mu_a^i=0$ and $U_A^i=(u_A^{si}+\gamma T_{s'A}^s V_{s'}^i)(x)$.
The global optimal points of problem \eqref{regmin_equ} are equivalently Nash equilibria, and thus the goal of this section is to find a polynomial-time algorithm for Nash equilibria.
We first use the purification theorem to explain why existing approximation methods for Nash equilibrium lack convergence guarantee while methods based on regret minimization problem \eqref{regmin_equ} do.
Then we give Theorem~\ref{regmin_equiv} stating that a Nash equilibrium is equivalently a global optimal point of problem \eqref{regmin_equ},
and is also equivalently a local extreme point that also satisfies a condition called primal-dual unbiased condition.
Finally, the second equivalent condition would allow us to approximate Nash equilibria by local optimization, which is chosen as the interior point method.
In addition, this section mainly deals with static games, so we drop the state index $s$ as we state before.

\begin{definition}[Regret minimization problem]
    Let $G$ be a static game, where the utility function is $U:\prod_{i\in N} \mathcal{A} \to \prod_{i\in N} \mathbb{R}$.
    A regret minimization problem of $G$ is optimization problem \eqref{regmin_equ} with $\mu_a^i=0$,
    where $(\pi_a^i,r_a^i,v^i)$ is a tuple of policy, regret, and value, and $\mu_a^i$ is a barrier parameter.
    \begin{equation}
        \label{regmin_equ}
        \begin{aligned}
            \min_{(\pi_a^i,r_a^i,v^i)}\quad & \pi_a^i r_a^i-\mu_a^i\ln r_a^i-\mu_a^i\ln \pi_a^i \\
            \textrm{s.t.}\quad              & r_a^i-v^i+\pi_{Aa}^{i-}U_A^i=0                    \\
                                            & \mathbf{1}_a\pi_a^i-\mathbf{1}^i=0
        \end{aligned}
    \end{equation}
    The case where $\mu_a^i> 0$ is a barrier problem of regret minimization problem \eqref{regmin_equ}.
\end{definition}

According to the purification theorem\cite{pure_theo}, $\pi_a^i$ is a Nash equilibrium if and only if for every player $i$, every action $a$ that satisfies $\pi_a^i>0$ has the same utility.
Thus, mixed Nash equilibria are not stable equilibrium points for any method based on optimizing the utility of the updated mixed strategy, such as no-regret\cite{noregret,cfr} and self-play\cite{fp,fsp}.
Note that the meanings of regret are different in no-regret methods and in this paper.
The regret in no-regret methods is a vector of differences between the utility of the updated mixed strategy and the utilities of every action,
while the regret in regret minimization problem \eqref{regmin_equ} is a vector of differences between the maximum utility and the utilities of every action plus a scalar.
The difference upon the quantity being optimized is why our method has convergence guarantee while existing methods based on no-regret or self-play do not,
and consequently exhibit non-stationarity when used in MARL.

\begin{theorem}
    \label{regmin_equiv}
    Let $G$ be a static game and $(\pi_a^i,r_a^i,v^i)$ be a tuple of policy, regret and value of $G$.
    Then the following statements are equivalent.
    \begin{enumerate}
        \item $(\pi_a^i,v^i)$ is a Nash equilibrium of $G$.
        \item $(\pi_a^i,r_a^i,v^i)$ is a global optimal point of regret minimization problem \eqref{regmin_equ}.
        \item There exist Lagrangian multipliers $(\bar{\lambda}_a^i,\tilde{\lambda}^i,\hat{\pi}_a^i,\hat{r}_a^i)$ that satisfies two conditions:
              the KKT conditions shown by equation \eqref{perturbed_kkt} with $\mu_a^i=0$, and the primal-dual unbiased condition $\pi_a^i=\hat{\pi}_a^i$.
              \begin{equation}
                  \label{perturbed_kkt}
                  \begin{bmatrix}
                      \bar{\lambda}_a^j\pi_{Aaa'}^{ij-}U_A^i+\tilde{\lambda}^i\mathbf{1}_{a'}+r_{a'}^i-\hat{r}_{a'}^i \\
                      \bar{\lambda}_a^i+\pi_a^i-\hat{\pi}_a^i                                                         \\
                      -\mathbf{1}_a\bar{\lambda}_a^i                                                                  \\
                      r_a^i\circ \hat{\pi}_a^i-\mu_a^i                                                                \\
                      \pi_a^i\circ \hat{r}_a^i-\mu_a^i                                                                \\
                      r_a^i-v^i+\pi_{Aa}^{i-}U_A^i                                                                    \\
                      \mathbf{1}_a\pi_a^i-\mathbf{1}^i
                  \end{bmatrix}=0
              \end{equation}
              The case where $\mu_a^i> 0$ of equation \eqref{perturbed_kkt} is the perturbed KKT conditions of regret minimization problem \eqref{regmin_equ}.
    \end{enumerate}
    Furthermore, when these statements hold, the objective function $\pi_a^i r_a^i$ of \eqref{regmin_equ} is $0$,
    and $(\bar{\lambda}_a^i,\tilde{\lambda}^i,\pi_a^i-\hat{\pi}_a^i,r_a^i-\hat{r}_a^i)=0$.
\end{theorem}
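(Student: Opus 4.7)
The plan is to decompose the three-way equivalence into the two chains (i) $\Leftrightarrow$ (ii) and (ii) $\Leftrightarrow$ (iii), and to read off the supplementary claim about the objective value and the multipliers from the explicit construction used in the second chain. The key observation throughout is that the objective $\pi_a^i r_a^i$ is nonnegative on the feasible set, since $\pi_a^i\ge 0$ is enforced by the simplex constraint and $r_a^i\ge 0$ is enforced explicitly at $\mu_a^i=0$ (or implicitly by the logarithmic barrier when $\mu_a^i>0$). For (i) $\Rightarrow$ (ii), if $(\pi_a^i,v^i)$ is a Nash equilibrium then the purification characterization gives $v^i\ge \pi_{Aa}^{i-}U_A^i$ for every $a$ with equality on the support of $\pi^i$, so taking $r_a^i:=v^i-\pi_{Aa}^{i-}U_A^i$ yields a feasible tuple with $\pi_a^i r_a^i=0$, matching the global lower bound. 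For (ii) $\Rightarrow$ (i), at any global optimum the value must be $0$ (attained by the tuple just constructed via Nash's existence theorem), so $\pi_a^i r_a^i=0$ termwise; combined with $r_a^i=v^i-\pi_{Aa}^{i-}U_A^i\ge 0$, this gives $v^i\ge \pi_{Aa}^{i-}U_A^i$ for every $a$ with equality whenever $\pi_a^i>0$, which is exactly the Nash condition.

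For (ii) $\Leftrightarrow$ (iii), I would form the Lagrangian of \eqref{regmin_equ} with $\bar{\lambda}_a^i,\tilde{\lambda}^i$ as multipliers of the equality constraints and $\hat{\pi}_a^i,\hat{r}_a^i$ as nonnegativity multipliers for $r_a^i$ and $\pi_a^i$ respectively (the labeling suggested by rows 4 and 5 of \eqref{perturbed_kkt}); stationarity in $\pi_{a'}^i$, $r_{a'}^i$, $v^i$ together with primal feasibility and complementary slackness assembles into \eqref{perturbed_kkt} at $\mu_a^i=0$. For (ii) $\Rightarrow$ (iii), at a global optimum one sets $\bar{\lambda}_a^i=0$, $\tilde{\lambda}^i=0$, $\hat{\pi}_a^i=\pi_a^i$, $\hat{r}_a^i=r_a^i$ and verifies each row by direct substitution, using $\pi_a^i r_a^i=0$ in rows 4 and 5. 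For (iii) $\Rightarrow$ (ii), row 4 at $\mu_a^i=0$ reads $r_a^i\hat{\pi}_a^i=0$, which the unbiased condition $\pi_a^i=\hat{\pi}_a^i$ upgrades to $\pi_a^i r_a^i=0$ termwise, driving the objective to its lower bound $0$. The supplementary identity $(\bar{\lambda}_a^i,\tilde{\lambda}^i,\pi_a^i-\hat{\pi}_a^i,r_a^i-\hat{r}_a^i)=0$ then follows by a short uniqueness argument: the unbiased condition gives $\hat{\pi}_a^i=\pi_a^i$; row 2 then forces $\bar{\lambda}_a^i=0$; row 1 with $\bar{\lambda}_a^i=0$ gives $\hat{r}_{a'}^i=r_{a'}^i+\tilde{\lambda}^i$; substituting into row 5 and summing over $a$ using $\mathbf{1}_a\pi_a^i=1$ together with $\pi_a^i r_a^i=0$ isolates $\tilde{\lambda}^i=0$, and therefore $\hat{r}_a^i=r_a^i$.

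The main obstacle is that \eqref{regmin_equ} is multilinear and hence nonconvex, so KKT stationarity is not automatically sufficient for global optimality and one cannot simply invoke the convex KKT theorem for the implication (iii) $\Rightarrow$ (ii). What rescues the argument is precisely the primal-dual unbiased condition $\pi_a^i=\hat{\pi}_a^i$: it is exactly the extra constraint that turns the complementary-slackness identity $r_a^i\hat{\pi}_a^i=0$ into the variable-only identity $\pi_a^i r_a^i=0$ that the nonconvex objective needs to sit at its global minimum. The remaining work is the tensor bookkeeping of writing stationarity in Einstein-summation form, where differentiating the multilinear expected-utility constraints $\bar{\lambda}_a^j\pi_{Aa}^{j-}U_A^j$ through player $i$'s policy at action $a'$ produces the cross-term structure encoded in the $\pi_{Aaa'}^{ij-}$ factor appearing in row 1 of \eqref{perturbed_kkt}.
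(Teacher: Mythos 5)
Your proposal is correct and follows essentially the same route as the paper: construct $r_a^i=v^i-\pi_{Aa}^{i-}U_A^i$ at a Nash equilibrium to hit the objective's lower bound $0$ (whose attainability rests on Nash existence), exhibit the trivial multipliers $(\bar{\lambda}_a^i,\tilde{\lambda}^i,\hat{\pi}_a^i,\hat{r}_a^i)=(0,0,\pi_a^i,r_a^i)$ for (ii)$\Rightarrow$(iii), and run the same algebra ($\pi_a^i=\hat{\pi}_a^i\Rightarrow\bar{\lambda}_a^i=0\Rightarrow r_a^i-\hat{r}_a^i=-\tilde{\lambda}^i\mathbf{1}_a$, then $\mathbf{1}_a\pi_a^i=\mathbf{1}^i$ forces $\tilde{\lambda}^i=0$) to get the supplementary claim and recover the Nash condition from $\pi_a^i\circ r_a^i=0$ with $r_a^i\geq 0$. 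The only difference is organizational — you prove (i)$\Leftrightarrow$(ii) and (ii)$\Leftrightarrow$(iii) separately while the paper runs the cycle (i)$\Rightarrow$(ii)$\Rightarrow$(iii)$\Rightarrow$(i) — and your implicit reliance on $\pi_a^i,r_a^i\geq 0$ for the lower bound matches the paper's own level of rigor.
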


Theorem~\ref{regmin_equiv} points out that the Nash equilibrium is not only equivalent to the global optimal point of regret minimization problem \eqref{regmin_equ},
but also equivalent to its local extreme point that satisfies the primal-dual unbiased condition, where the local extreme point is namely the point that satisfies the KKT conditions\cite{kkt_orginal} in (iii).
In previous research\cite{regmin1,regmin2}, linear programming, quadratic programming, linear complementarity problem, and Nash equilibrium of bimatrix games have been unified,
and the unification is actually a degeneration of $(i)\Leftrightarrow(ii)$ into two-player static games.
$(i)\Leftrightarrow(ii)$ points out the connection between our method and existing methods, but it is not necessary in constructing our method.
$(i)\Leftrightarrow(iii)$ makes it possible to use local optimization that leads to local extreme points where primal-dual bias $\pi_a^i-\hat{\pi}_a^i$ is $0$ to compute global optimal points of \eqref{regmin_equ}, and equivalently Nash equilibria.

We choose \textbf{interior point method} to perform the local optimization, where \textbf{central path} is a path in the solution space of perturbed KKT conditions \eqref{perturbed_kkt} with the barrier parameter $\mu_a^i=\mathbf{1}_a^i\mu$,
the \textbf{first iteration level} is to update onto the central path by locally optimizating the barrier problem \eqref{regmin_equ},
the \textbf{second iteration level} is to update along the central path by reducing the scalar barrier parameter $\mu$,
and the central path leads to a local extreme point of the orginal problem as $\mu$ reduces to $0$.
In our case, we additionally intend to keep the update on a particular central path on which primal-dual bias $\pi_a^i-\hat{\pi}_a^i$ is $0$.

\subsection{Unbiased barrier problem and unbiased KKT conditions}\label{ipm_sec2}
Note that Theorem~\ref{regmin_equiv} (iii) only applies to the $\mu_a^i=0$ case, but we have to study the $\mu_a^i> 0$ case to find a way to keep the update on the central path where primal-dual bias is $0$.
Thus, we introduce two concepts called unbiased barrier problem and unbiased KKT conditions,
and give Theorem~\ref{ipm_equiv} to extend our equivalent result about the $\mu_a^i=0$ case in Theorem~\ref{regmin_equiv} (iii) onto the subject of interior point method where $\mu_a^i> 0$,
so that we can construct a primal-dual unbiased interior point method.
In addition, we use these two concepts to give an extension of the existence theorem of Nash equilibria.

\begin{figure}
    \centering
    \includegraphics[width=0.9\textwidth]{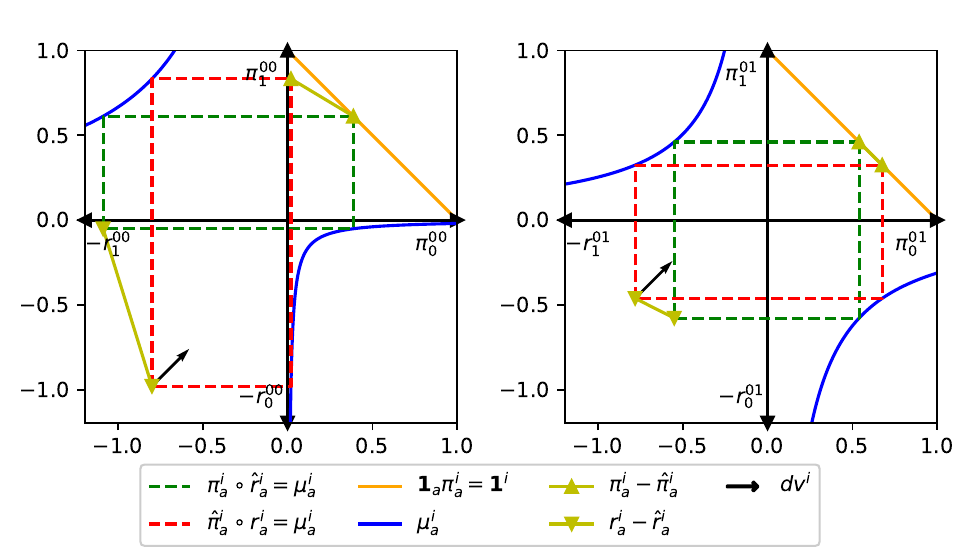}
    \caption{
        Graph of unbiased barrier problem.
        This figure is plotted with a dynamic game where $N=\mathcal{A}=\{0,1\},\mathcal{S}=\{0\}$.
        The graph is based on the joint space of policy $\pi_a^i$ and regret $r_a^i$.
        The positive half of the two axes represent two action indices $a\in\{0,1\}$ of $\pi_a^i$,
        the negative half of the two axes represent two action indices $a\in\{0,1\}$ of $r_a^i$,
        and the two subfigures represent two player indices $i\in\{0,1\}$.
        Plotting $\pi_a^i$ and $\hat{\pi}_a^i$ on the all positive orthant, $r_a^i$ and $\hat{r}_a^i$ on the all negative orthant, and $\mu_a^i$ between positive half and negative half of the axes as hyperbolas,
        $\hat{\pi}_a^i\circ r_a^i=\mu_a^i$ and $\pi_a^i\circ\hat{r}_a^i=\mu_a^i$ have rectangular shapes, and $(\pi_a^i-\hat{\pi}_a^i,r_a^i-\hat{r}_a^i)$ is the bias of two rectangles.
        $dv^i$ is the direction $r_a^i$ can move with fixed $\pi_a^i$ and within the constraint $r_a^i=v^i-\pi_{Aa}^{i-}U_A^i$ of unbiased barrier problem \eqref{ubarr_equ}.
        With $r_a^i$ moving in direction $dv^i$ and the other two corners of the rectangle fixed on the hyperbolas, the figure illustrates how there is a unique $v^i$ to let $\hat{\pi}_a^i$ satisfy $\mathbf{1}_a\hat{\pi}_a^i=\mathbf{1}^i$ as stated in Theorem~\ref{ipm_theo}~(i), and the right subfigure shows a case where it is satisfied.
    }
    \label{barrproblem_graph}
\end{figure}
\begin{figure}
    \centering
    \includegraphics[width=0.9\textwidth]{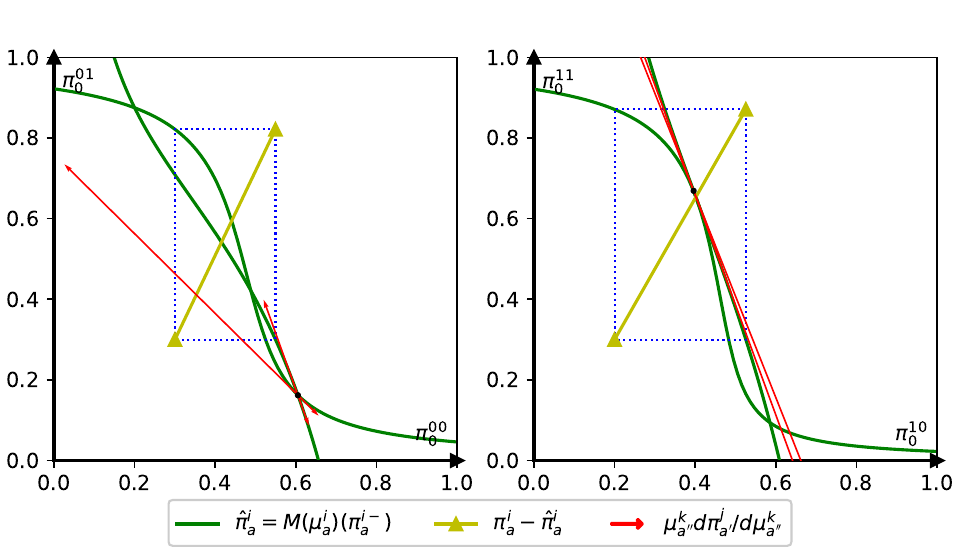}
    \caption{
        Graph of unbiased KKT conditions.
        This figure is plotted with a dynamic game where $N=\mathcal{S}=\mathcal{A}=\{0,1\}$.
        The graph is based on the policy space, where the two axes represent two player indices $i\in\{0,1\}$ of $\pi_a^{si}$, the two subfigures represent two state indices $s\in\{0,1\}$ of $\pi_a^{si}$,
        and only one of the two action indices $a=0$ is needed to represent $\pi_a^{si}$ since $\pi_a^{si}$ sums to $1$ over action indices.
        $\hat{\pi}_a^i=M(\mu_a^i)(\pi_a^{i-})$ is a set of hypersurfaces indexed by $i\in N$ and induced by the Brouwer function $M(\mu_a^i)$ for a given $\mu_a^i$, $\pi_a^i-\hat{\pi}_a^i$ shows the mapping of $\hat{\pi}_a^i=M(\mu_a^i)(\pi_a^i)$, and the intersections of the hypersurfaces are fixed points of $M(\mu_a^i)$.
        There is at least one intersection of the hypersurfaces according to Theorem~\ref{ipm_theo}~(iii), and there are almost always an odd number of intersections according to Theorem~\ref{odd_thm} (ii), as extensions of the existence and oddness theorems of Nash equilibria.
        Differential $(d\pi_{a'}^j/\pi_{a'}^j)/(d\mu_{a''}^k/\mu_{a''}^k)$ illustrates that a intersection $\pi_a^i$ moves with the $i$-th hypersurface as $\mu_a^i$ varies on index $i$, since the $i$-th hypersurface is only relevant to $\mu_a^i$ on index $i$.
        The right subfigure shows a singular point in Definition~\ref{equil_bund_def}, where the differential grows infinite large.
    }
    \label{kktcondition_graph}
\end{figure}

\begin{definition}[Unbiased barrier problem]
    \label{ubarr_def}
    An unbiased barrier problem is the optimization problem
    \begin{equation}
        \label{ubarr_equ}
        \begin{aligned}
            \min_{(\pi_a^i,r_a^i,v^i)}\quad & \left(\pi_a^i-\hat{\pi}_a^i\right)\left(r_a^i-\hat{r}_a^i\right) \\
            \textrm{s.t.}\quad              & r_a^i-v^i+\pi_{Aa}^{i-}U_A^i=0                                   \\
                                            & \mathbf{1}_a\pi_a^i-\mathbf{1}^i=0
        \end{aligned}
    \end{equation}
    parameterized by $\hat{\pi}_a^i$ and $\hat{r}_a^i$, and $\hat{\pi}_a^i=\mu_a^i/r_a^i$ and $\hat{r}_a^i=\mu_a^i/\pi_a^i$, which are called dual policy and dual regret respectively,
    and the tuple $(\pi_a^i-\hat{\pi}_a^i,r_a^i-\hat{r}_a^i)$ is called primal-dual bias.
\end{definition}
\begin{definition}[Unbiased KKT conditions]
    \label{ukkt_def}
    Unbiased KKT conditions are simultaneous equations
    \begin{subequations}
        \label{ukkt_equ}
        \begin{equation}
            \label{ukkt_equ1}
            \begin{bmatrix}
                \hat{\pi}_a^i\circ r_a^i-\mu_a^i \\
                r_a^i-v^i+\pi_{Aa}^{i-}U_A^i     \\
                \mathbf{1}_a\hat{\pi}_a^i-\mathbf{1}^i
            \end{bmatrix}=0,
        \end{equation}
        \begin{equation}
            \pi_a^i=\hat{\pi}_a^i.
        \end{equation}
    \end{subequations}
\end{definition}
\begin{definition}[Brouwer function]
    \label{brouwer_def}
    Brouwer function is the family of maps $M:\{\mu_a^i|\mu_a^i\geq 0\}\to(\mathcal{P}\to \mathcal{P})$ parameterized by $\mu_a^i$ such that $M(\mu_a^i)(\pi_a^i)=\hat{\pi}_a^i$ satisfies equation \eqref{ukkt_equ1}.
\end{definition}

\begin{theorem}
    \label{ipm_equiv}
    Given $\mu_a^i>0$, for the tuple $(\pi_a^i,r_a^i,v^i)$, the following properties satisfy $(i)\Leftrightarrow(ii)\Leftrightarrow(iii)\Leftrightarrow(\pi_a^i=\hat{\pi}_a^i \wedge (iv))$,
    and also satisfy $(iv)\Leftrightarrow(v)\Leftrightarrow(vi)$.
    \begin{enumerate}
        \item Being a fixed point of Brouwer function $\hat{\pi}_a^i=M(\mu_a^i)(\pi_a^i)$.
        \item Being a global optimal point of unbiased barrier problem \eqref{ubarr_equ}.
        \item Being a solution of unbiased KKT conditions \eqref{ukkt_equ}.
        \item Being a solution of perturbed KKT conditions \eqref{perturbed_kkt} for some $(\bar{\lambda}_a^i,\tilde{\lambda}^i,\hat{\pi}_a^i,\hat{r}_a^i)$.
        \item Being a KKT point of unbiased barrier problem \eqref{ubarr_equ}.
        \item Being a local extreme point of barrier problem \eqref{regmin_equ}.
    \end{enumerate}
    In particular, $\pi_a^i$ is a Nash equilibrium if and only if $(\pi_a^i,\mathbf{0}_a^i)$ is a solution of unbiased KKT conditions \eqref{ukkt_equ}.
\end{theorem}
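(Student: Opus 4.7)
The plan is to establish the two equivalence chains $(i)\Leftrightarrow(ii)\Leftrightarrow(iii)\Leftrightarrow(\pi_a^i=\hat\pi_a^i\wedge(iv))$ and $(iv)\Leftrightarrow(v)\Leftrightarrow(vi)$ separately, and then to obtain the Nash equilibrium corollary by specialising \eqref{ukkt_equ} to $\mu_a^i=\mathbf{0}_a^i$. Most pieces are unpackings of definitions; the main technical subtlety is $(iv)\Leftrightarrow(v)$, because the objectives of the two barrier problems \eqref{regmin_equ} and \eqref{ubarr_equ} are different once $\hat\pi,\hat r$ are treated as functions of $\pi,r$.

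For $(i)\Leftrightarrow(iii)$, Definition~\ref{brouwer_def} sets $M(\mu_a^i)(\pi_a^i)=\hat\pi_a^i$ exactly so that \eqref{ukkt_equ1} holds (with the corresponding $r_a^i,v^i$ reconstructed from the constraints), so a fixed point $\hat\pi_a^i=\pi_a^i$ is precisely a solution of the full system \eqref{ukkt_equ}. For $(ii)\Leftrightarrow(iii)$, substituting $\hat\pi_a^i=\mu_a^i/r_a^i$ and $\hat r_a^i=\mu_a^i/\pi_a^i$ rewrites the objective of \eqref{ubarr_equ} as the Einstein sum of $\pi_a^i r_a^i+(\mu_a^i)^2/(\pi_a^i r_a^i)-2\mu_a^i$, which by AM--GM is termwise non-negative with equality iff $\pi_a^i r_a^i=\mu_a^i$; combining this with the feasibility constraints (which are exactly the bottom two rows of \eqref{ukkt_equ1}) and noting that $\pi_a^i r_a^i=\mu_a^i$ together with $\hat\pi_a^i r_a^i=\mu_a^i$ forces $\pi_a^i=\hat\pi_a^i$ gives the claim. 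For $(iii)\Leftrightarrow(\pi=\hat\pi\wedge(iv))$, the forward direction assigns $\bar\lambda_a^i=0$, $\tilde\lambda^i=0$, $\hat r_a^i=r_a^i$ and verifies \eqref{perturbed_kkt} row by row; the reverse direction uses the second row of \eqref{perturbed_kkt} under $\pi=\hat\pi$ to force $\bar\lambda_a^i=0$, then the first row to yield $\tilde\lambda^i\mathbf{1}_{a'}+r_{a'}^i-\hat r_{a'}^i=0$, after which $\pi\hat r=\mu$ together with $\pi=\hat\pi$ gives $r=\hat r$ and hence $\tilde\lambda^i=0$, so the remaining rows collapse to \eqref{ukkt_equ}.

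For the second chain, $(iv)\Leftrightarrow(vi)$ is the textbook Lagrangian derivation: with multipliers $\bar\lambda_a^i,\tilde\lambda^i$ on the two equality constraints of \eqref{regmin_equ}, with $\hat\pi_a^i:=\mu_a^i/r_a^i$ and $\hat r_a^i:=\mu_a^i/\pi_a^i$ arising from the $\partial/\partial r$ and $\partial/\partial\pi$ derivatives of the two logarithmic barrier terms, and with $\partial_{\pi_{a'}^i}\pi_{Aa}^{j-}=\pi_{Aaa'}^{ji-}$ by the notation of the excerpt, the stationarity and feasibility conditions reproduce \eqref{perturbed_kkt} row by row; strict convexity of the logarithmic barrier on the positive orthant makes local extrema coincide with KKT points. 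The main obstacle is $(iv)\Leftrightarrow(v)$: direct differentiation of $\pi_a^i r_a^i+(\mu_a^i)^2/(\pi_a^i r_a^i)-2\mu_a^i$ produces stationarity of the form $r_a^i-\hat r_a^i\hat\pi_a^i/\pi_a^i+(\text{constraint terms})=0$ and $\pi_a^i-\hat\pi_a^i\hat r_a^i/r_a^i+(\text{constraint terms})=0$, differing from the first two rows of \eqref{perturbed_kkt} by the positive factors $\hat\pi_a^i/\pi_a^i$ and $\hat r_a^i/r_a^i$. The plan is to show that these factors are absorbed by a bijective rescaling of the multipliers $(\bar\lambda_a^i,\tilde\lambda^i)\leftrightarrow(\beta_a^i,\lambda^i)$, so that the two KKT systems cut out the same primal locus inside the common feasible set of \eqref{regmin_equ} and \eqref{ubarr_equ}. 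I expect this multiplier bookkeeping to be the most delicate step, since the coupling term $\bar\lambda_a^j\pi_{Aaa'}^{ij-}U_A^i$ mixes indices across players.

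Finally, the Nash equilibrium claim is the $\mu_a^i=\mathbf{0}_a^i$ degeneration of $(iii)$: under $\pi=\hat\pi$ the first row of \eqref{ukkt_equ1} becomes complementary slackness $\pi_a^i\circ r_a^i=0$, while $r_a^i=v^i-\pi_{Aa}^{i-}U_A^i$, $r_a^i\ge 0$, and $\mathbf{1}_a\pi_a^i=1$ together say that $v^i=\max_a\pi_{Aa}^{i-}U_A^i$ and that $\pi_a^i$ is supported on best responses, which is the classical best-response characterisation of a Nash equilibrium.
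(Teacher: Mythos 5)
Your first chain $(i)\Leftrightarrow(ii)\Leftrightarrow(iii)\Leftrightarrow(\pi_a^i=\hat{\pi}_a^i\wedge(iv))$, your $(iv)\Leftrightarrow(vi)$, and your $\mu_a^i=\mathbf{0}_a^i$ specialisation for the Nash statement are essentially the paper's own argument (definition of $M$, the AM--GM rewriting $\pi_a^i\circ r_a^i+\mu_a^{i2}/(\pi_a^i\circ r_a^i)-2\mu_a^i$, zero multipliers one way and forcing $\bar{\lambda}_a^i=\tilde{\lambda}^i=0$, $r_a^i=\hat{r}_a^i$ the other way, and the Lagrange-multiplier identification of the KKT system of \eqref{regmin_equ} with \eqref{perturbed_kkt}).

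The genuine gap is your step $(iv)\Leftrightarrow(v)$. You read ``KKT point of the unbiased barrier problem'' as the KKT system obtained after substituting $\hat{\pi}_a^i=\mu_a^i/r_a^i$, $\hat{r}_a^i=\mu_a^i/\pi_a^i$ into the objective and differentiating, which produces the stationarity terms $\pi_a^i-\hat{\pi}_a^i\hat{r}_a^i/r_a^i$ and $r_a^i-\hat{r}_a^i\hat{\pi}_a^i/\pi_a^i$. Your proposed repair---absorbing the discrepancy by a bijective rescaling of $(\bar{\lambda}_a^i,\tilde{\lambda}^i)$---does not work: the extra factors sit on the \emph{primal} difference terms, not on the multipliers, so no reparametrisation of the multipliers turns $\pi_a^i-\mu_a^{i2}/(\pi_a^i\circ r_a^{i2})$ into $\pi_a^i-\hat{\pi}_a^i$. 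Concretely, eliminating the multiplier from the $r$- and $v$-stationarity rows gives, in \eqref{perturbed_kkt}, the primal condition $\mathbf{1}_a(\hat{\pi}_a^i-\pi_a^i)=0$, i.e.\ $\mathbf{1}_a\hat{\pi}_a^i=\mathbf{1}^i$, whereas your substituted system gives $\mathbf{1}_a\left(\mu_a^{i2}/(\pi_a^i\circ r_a^{i2})-\pi_a^i\right)=0$; these cut out different loci, so the two KKT systems are not interchangeable and the equivalence would fail (or at least require a real argument you have not supplied). The fix is to use the reading the paper intends and uses throughout: Definition~\ref{ubarr_def} declares $\hat{\pi}_a^i,\hat{r}_a^i$ to be \emph{parameters} of \eqref{ubarr_equ} (compare the proof of Theorem~\ref{ipm_theo}~(ii), which differentiates holding them fixed, and Proposition~\ref{conv_rate}~(iii), which treats the substituted version separately precisely because its gradient differs). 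Under that reading, stationarity of $(\pi_a^i-\hat{\pi}_a^i)(r_a^i-\hat{r}_a^i)$ subject to the two equality constraints is exactly the first three rows of \eqref{perturbed_kkt}, and adjoining the parameter relations $\hat{\pi}_a^i=\mu_a^i/r_a^i$, $\hat{r}_a^i=\mu_a^i/\pi_a^i$ together with feasibility reproduces \eqref{perturbed_kkt} verbatim, making $(iv)\Leftrightarrow(v)$ immediate rather than ``the most delicate step.'' (A minor further caution: your appeal to ``strict convexity of the logarithmic barrier'' for $(iv)\Leftrightarrow(vi)$ overstates matters, since the bilinear term $\pi_a^i r_a^i$ destroys joint convexity; the paper only invokes the Lagrange-multiplier characterisation of local extreme points of equality-constrained problems.)
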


The Brouwer function connects unbiased barrier problem and unbiased KKT conditions.
The unbiased KKT conditions derive the Brouwer function mapping primal policies to dual policies, such that the fixed points of the Brouwer function are solutions of the unbiased KKT conditions.
The unbiased barrier problem depicts the approximation of fixed points of the Brouwer function, such that the fixed points are the global optimal points, or zero points, of the unbiased barrier problem.

There are two equivalence formulas in Theorem~\ref{ipm_equiv}.
The first one $(i)\Leftrightarrow(ii)\Leftrightarrow(iii)\Leftrightarrow((iv)\wedge \pi_a^i=\hat{\pi}_a^i)$ establishs the equivalence between the unbiased barrier problem, Brouwer function, unbiased KKT conditions, and the $\mu_a^i> 0$ case in Theorem~\ref{regmin_equiv} (iii), allowing us the construct a primal-dual unbiased interior point method.
Actually, we only need the first formula to make subsequent constructions,
while the second formula $(iii)\Leftrightarrow(iv)\Leftrightarrow(v)$ only illustrates the connection between our method and interior point method, that is, a path that consists of solutions of the unbiased KKT conditions really is a central path in interior point methods on which the primal-dual bias is $0$, ignoring the difference that the barrier parameter in our method is actually a vector instead of a scalar.

\begin{theorem}
    \label{ipm_theo}
    The following properties about unbiased barrier problem \eqref{ubarr_equ} and unbiased KKT conditions \eqref{ukkt_equ} hold.
    \begin{enumerate}
        \item For any $\pi_a^i$ and $\mu_a^i$, there exists a unique $v^i$ that satisfies equation \eqref{ukkt_equ1}.
        \item If $\mathbf{1}_a\hat{\pi}_a^i-\mathbf{1}^i=0$, then the projected gradient of \eqref{ubarr_equ} satisfies
              \begin{equation}
                  \label{projgrad_formula}
                  d\left(\left(\pi_a^i-\hat{\pi}_a^i\right)\left(r_a^i-\hat{r}_a^i\right)\right)=\left(\pi_a^i-\hat{\pi}_a^i\right)\left({\rm Diag}(r_a^i)-\pi_{Aaa'}^{ij-}U_A^i\circ\pi_{a'}^j\right)\left(d\pi_{a'}^j/\pi_{a'}^j\right).
              \end{equation}
        \item {\rm (Existence theorem)} For every $\mu_a^i$, there is at least one solution of \eqref{ukkt_equ}.
    \end{enumerate}
\end{theorem}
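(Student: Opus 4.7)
The plan is to treat the three parts of Theorem~\ref{ipm_theo} separately, since each requires a different technique, and I expect part (ii) to be the main obstacle.

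For part (i), I will eliminate $\hat{\pi}_a^i$ and $r_a^i$ from \eqref{ukkt_equ1} using the first two equations, so that the third equation reduces to a single scalar equation in $v^i$ for each player $i$, namely $\sum_a \mu_a^i/(v^i-\pi_{Aa}^{i-}U_A^i)=1$, restricted to $v^i>\max_a \pi_{Aa}^{i-}U_A^i$ so that $r_a^i>0$ and $\hat{\pi}_a^i>0$. On this domain the left-hand side is a continuous, strictly decreasing function of $v^i$ which sweeps from $+\infty$ at the left endpoint down to $0$ as $v^i\to\infty$; the intermediate value theorem then furnishes the unique $v^i$.

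For part (ii), I interpret the projected gradient as the differential of $F := (\pi_a^i-\hat{\pi}_a^i)(r_a^i-\hat{r}_a^i)$ with $\hat{\pi}_a^i$ and $\hat{r}_a^i$ held fixed at the current iterate, which is exactly the object driving a first-order descent step anchored on the unbiased central path. The product rule gives $dF = d\pi_a^i\,(r_a^i-\hat{r}_a^i) + (\pi_a^i-\hat{\pi}_a^i)\,dr_a^i$, and substituting $dr_a^i = dv^i - \pi_{Aaa'}^{ij-}U_A^i\,d\pi_{a'}^j$ produces a $dv^i$ contribution whose coefficient is $\sum_a(\pi_a^i-\hat{\pi}_a^i)$. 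The hypothesis $\mathbf{1}_a\hat{\pi}_a^i=\mathbf{1}^i$ combined with the simplex constraint $\mathbf{1}_a\pi_a^i=\mathbf{1}^i$ makes this coefficient vanish identically in $i$ (note incidentally that the same condition is precisely the stationarity $\partial F/\partial v^i=0$, so the hypothesis really does select the $v^i$-critical slice). The identity $r_a^i-\hat{r}_a^i = (\pi_a^i-\hat{\pi}_a^i)\,r_a^i/\pi_a^i$, which follows from $\hat{\pi}_a^i r_a^i = \mu_a^i = \pi_a^i \hat{r}_a^i$, converts the surviving $d\pi$-term into $(\pi_a^i-\hat{\pi}_a^i)\,r_a^i\,(d\pi_a^i/\pi_a^i)$; combining this diagonal-in-$(a,i)$ contribution with the cross term $-(\pi_a^i-\hat{\pi}_a^i)(\pi_{Aaa'}^{ij-}U_A^i\circ\pi_{a'}^j)(d\pi_{a'}^j/\pi_{a'}^j)$ yields the stated formula. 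The hard part is establishing that the right interpretation of ``projected gradient'' is the anchored-quadratic differential, and then tracking Einstein indices carefully enough to see the $dv^i$ cancellation.

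For part (iii), I apply Brouwer's fixed point theorem to $M(\mu_a^i):\mathcal{P}\to\mathcal{P}$. Part (i) shows $M$ is well-defined and maps into $\mathcal{P}$, since the unique $v^i$ produces $\hat{\pi}_a^i>0$ with $\sum_a\hat{\pi}_a^i=1$. Continuity of $M$ follows from continuity of $v^i(\pi,\mu)$ via the implicit function theorem applied to the strictly monotone defining equation, and $\mathcal{P}$ is a compact convex subset of Euclidean space as a product of simplexes. Brouwer's theorem then provides a fixed point $\pi_a^i = \hat{\pi}_a^i = M(\mu_a^i)(\pi_a^i)$, which by Definition~\ref{brouwer_def} is a solution of \eqref{ukkt_equ}. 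Any boundary case with some $\mu_a^i=0$ is handled by a standard limiting argument: take $\mu^{(k)}\downarrow\mu$ with $\mu^{(k)}$ strictly positive and extract a convergent subsequence of the corresponding fixed points via compactness of $\mathcal{P}$.
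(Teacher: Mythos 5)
Your proposal follows essentially the same route as the paper's proof: part (i) via monotonicity of $\mathbf{1}_a\,\mu_a^i/(v^i-\pi_{Aa}^{i-}U_A^i)$ and the intermediate value theorem, part (ii) by differentiating the objective with $(\hat{\pi}_a^i,\hat{r}_a^i)$ held as parameters, cancelling the $dv^i$ term via $\mathbf{1}_a\hat{\pi}_a^i=\mathbf{1}_a\pi_a^i=\mathbf{1}^i$, and using $\hat{\pi}_a^i\circ r_a^i=\mu_a^i=\pi_a^i\circ\hat{r}_a^i$ to reach the stated form, and part (iii) by applying Brouwer's fixed point theorem to the map $M(\mu_a^i)$ on the compact convex set $\mathcal{P}$. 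It is correct; your explicit treatment of the boundary case $\mu_a^i$ with zero entries by a limiting argument is a small refinement the paper omits, but not a different method.
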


Theorem~\ref{ipm_theo}~(i) shows two facts.
First, (i) shows that there is a unique $v^i$ to let dual policy $\hat{\pi}_a^i$ satisfy constraint $\mathbf{1}_a\pi_a^i=\mathbf{1}^i$.
As a consequence, (ii) shows that when $\mathbf{1}_a\hat{\pi}_a^i=\mathbf{1}^i$, $\pi_a^i-\hat{\pi}_a^i$ and $dv^i$ are orthogonal,
and the projected gradient of unbiased barrier problem \eqref{ubarr_equ} has the form of an analytical expression as equation \eqref{projgrad_formula} shows.
Second, (i) shows that for every policy $\pi_a^i$, there is a unique dual policy $\hat{\pi}_a^i$ that satisfies \eqref{ukkt_equ1}.
As a consequence, the Brouwer function $\hat{\pi}_a^i=M(\mu_a^i)(\pi_a^i)$ is indeed a map,
then by Brouwer's fixed point theorem, (iii) asserts the existence of a solution of unbiased KKT conditions \eqref{ukkt_equ} for every $\mu_a^i$, as an extension of the existence of Nash equilibria\cite{nash_orginal2}.

Note that equation \eqref{projgrad_formula} specifies the gradient with respect to $\ln\pi_a^i$, and it needs to be projected orthogonally to the hyperplane of $\mathbf{1}_a\pi_a^i=\mathbf{1}^i$.
However, we use $\pi_a^i=\exp(\sigma_a^i)/(\mathbf{1}_a\exp(\sigma_a^i))$ as the representation of policy in our experiments,
and the gradient with respect to $\sigma_a^i$ is ${\rm pg}_{a''}^j$ in the following equation, which naturally satisfies constraint $\mathbf{1}_a\pi_a^i=\mathbf{1}^i$ without projection.
In general, the gradient may be different for different representation of $\pi_a^i$, and it generally needs to be projected.
\begin{equation}
    {\rm pg}_{a''}^j:=\left(\pi_a^i-\hat{\pi}_a^i\right)\left({\rm Diag}(r_a^i)-\pi_{Aaa'}^{ij-}U_A^i\circ\pi_{a'}^j\right)\left(I_{a'a''}-\mathbf{1}_{a'}\pi_{a''}^j\right)
\end{equation}

In summary, we've transformed the interior point method into a \textbf{primal-dual unbiased interior point method},
where \textbf{(primal-dual) unbiased central path} is a path in the solution space of unbiased KKT conditions \eqref{ukkt_equ},
the \textbf{first iteration level} is to update onto the unbiased central path by projected gradient descent \eqref{projgrad_formula},
the \textbf{second iteration level} is to update along the unbiased central path by reducing barrier parameter $\mu_a^i$,
and the unbiased central path leads to a Nash equilibrium as $\mu_a^i$ reduces to $0$.

However, \textbf{there are three assumptions} for the primal-dual unbiased interior point method to work.
\textbf{Starting point}: the iteration has to start from a point that is sufficiently close to a known point on the unbiased central path.
\textbf{Differentiability}: in the second iteration level, the policy $\pi_a^i$ has to move an infinitesimal step along the unbiased central path with an infinitesimal reduction of the barrier parameter $\mu_a^i$, so that the first iteration level can update back onto the unbiased central path.
\textbf{Convexity}: in the first iteration level, unbiased barrier problem \eqref{ubarr_equ} has to be strictly locally convex for projected gradient descent \eqref{projgrad_formula} to update onto the unbiased central path.

\subsection{Equilibrium bundle}\label{ipm_sec3}
Settling the three assumptions is beyond the capacity of mathematical optimization, it turns out to be a geometric problem regarding the solution space of unbiased KKT conditions \eqref{ukkt_equ} where unbiased central paths lie.
Thus, we first introduce a geometric object called equilibrium bundle to structure the solution space of unbiased KKT conditions \eqref{ukkt_equ},
so that we can study the equilibrium bundle instead of the unbiased central path.
Then we settle the three assumptions using the geometric properties of the equilibrium bundle, transforming our primal-dual unbiased interior point method into a line search method on the equilibrium bundle.
Finally, we give the oddness theorem of the equilibrium bundle as an extension of that of Nash equilibria.

\begin{definition}[Equilibrium bundle]
    \label{equil_bund_def}
    An equilibrium bundle of a static game $G$ is the tuple $(E , \mathcal{P}, \alpha:E\to \mathcal{P})$ given by the following equations, where $\mathcal{P}=\prod_{i\in N}\Delta(\mathcal{A})$ is the policy space.
    \begin{equation}
        \label{equil_bund_equ}
        \begin{aligned}
             & E=\bigcup _{\pi_a^i \in \mathcal{P}} \{\pi_a^i\} \times B(\pi_a^i)                             \\
             & B(\pi_a^i)=\left\{v^i\circ\pi_a^i+\bar{\mu}_a^i(\pi_a^i)|v^i\geq 0\right\}                     \\
             & \bar{\mu}_a^i(\pi_a^i)=\pi_a^i \circ \left(\max_a \pi_{Aa}^{i-}U_A^i-\pi_{Aa}^{i-}U_A^i\right) \\
             & \alpha\left((\pi_a^i,\mu_a^i)\right) =\pi_a^i                                                  \\
        \end{aligned}
    \end{equation}
    The map $\bar{\mu}_a^i:\mathcal{P}\to \{\mu_a^i|\min_a^i\mu_a^i=\mathbf{0}^i\}$ is called the canonical section of the equilibrium bundle.
    The point where ${\rm C}_{(j,a')\cup l}^{(i,a)\cup m}$ is singular is called a singular point of it, where ${\rm C}_{(j,a')\cup l}^{(i,a)\cup m}$ is given by equation \eqref{coeff_matrix}, .
\end{definition}

\begin{theorem}
    \label{equil_bund_equiv}
    Let $(E , \mathcal{P}, \alpha:E\to \mathcal{P})$ be the equilibrium bundle of static game $G$, then the following statements hold.
    \begin{enumerate}
        \item $(\pi_a^i,\mu_a^i)\in E$ if and only if $(\pi_a^i,\mu_a^i)$ is a solution of unbiased KKT conditions \eqref{ukkt_equ}.
        \item Given $\pi_a^i$, then $\mu_a^i\in B(\pi_a^i)$ if and only if $(\pi_a^i,\mu_a^i)$ is a solution of unbiased KKT conditions \eqref{ukkt_equ}.
        \item $\pi_a^i$ is a Nash equilibrium if and only if the canonical section $\bar{\mu}_a^i(\pi_a^i)=0$.
    \end{enumerate}
\end{theorem}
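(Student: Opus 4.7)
The three statements are tightly linked: (i) is a straightforward reformulation of (ii) via $E=\bigcup_{\pi_a^i\in\mathcal{P}}\{\pi_a^i\}\times B(\pi_a^i)$, and (iii) follows from (ii) combined with the final assertion of Theorem~\ref{ipm_equiv}. So the plan is to prove (ii) first by explicit algebraic unfolding, then derive (i) and (iii) as corollaries.

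For (ii), fix $\pi_a^i$ in the relative interior of $\mathcal{P}$. The key observation is that the $v^i$ parameterizing the fiber $B(\pi_a^i)$ and the $v^i$ appearing as a variable in \eqref{ukkt_equ1} are not the same: they differ by the shift $\max_a \pi_{Aa}^{i-}U_A^i$. In the forward direction, given $\mu_a^i=v^i\circ\pi_a^i+\bar{\mu}_a^i(\pi_a^i)$ with $v^i\geq 0$, I set $\hat{\pi}_a^i:=\pi_a^i$, choose the KKT variable $v'^i:=v^i+\max_a\pi_{Aa}^{i-}U_A^i$, and let $r_a^i:=v'^i-\pi_{Aa}^{i-}U_A^i$. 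Component-by-component verification shows these satisfy all four equations of \eqref{ukkt_equ}; in particular the product $\hat{\pi}_a^i\circ r_a^i$ expands precisely to $v^i\circ\pi_a^i+\bar{\mu}_a^i(\pi_a^i)=\mu_a^i$ after substituting the definition of $\bar{\mu}_a^i$. The reverse direction is the same computation read backwards: the primal-dual unbiased equation forces $\hat{\pi}_a^i=\pi_a^i$, solving the middle constraint yields $\mu_a^i=\pi_a^i\circ(v'^i-\pi_{Aa}^{i-}U_A^i)$, and setting $v^i:=v'^i-\max_a\pi_{Aa}^{i-}U_A^i$ rearranges this expression into the fiber form.

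The one step requiring care in the reverse direction is verifying $v^i\geq 0$, which amounts to $v'^i\geq\max_a\pi_{Aa}^{i-}U_A^i$. This follows because $r_a^i=\mu_a^i/\pi_a^i\geq 0$ at every action with $\pi_a^i>0$ (since $\mu_a^i\geq 0$ as a barrier parameter), hence $v'^i=r_a^i+\pi_{Aa}^{i-}U_A^i\geq\pi_{Aa}^{i-}U_A^i$ for every such $a$, and thus $v'^i$ dominates the maximum. Boundary cases where some $\pi_a^i=0$ require only the observation that both $\bar{\mu}_a^i(\pi_a^i)$ and $v^i\circ\pi_a^i$ vanish on that coordinate, matching $\mu_a^i=\pi_a^i\circ r_a^i=0$.

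With (ii) in hand, (i) is immediate from the definition of $E$. For (iii), Theorem~\ref{ipm_equiv} states that $\pi_a^i$ is a Nash equilibrium iff $(\pi_a^i,\mathbf{0}_a^i)$ solves \eqref{ukkt_equ}, which by (ii) is iff $\mathbf{0}_a^i\in B(\pi_a^i)$, i.e., iff there exists $v^i\geq 0$ with $v^i\circ\pi_a^i+\bar{\mu}_a^i(\pi_a^i)=0$. Since both summands are componentwise nonnegative ($v^i,\pi_a^i\geq 0$, and $\bar{\mu}_a^i(\pi_a^i)\geq 0$ because $\max_a\pi_{Aa}^{i-}U_A^i\geq\pi_{Aa}^{i-}U_A^i$), this is equivalent to $\bar{\mu}_a^i(\pi_a^i)=0$. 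The main obstacle is simply tracking the two symbolically identical but semantically distinct uses of $v^i$; once that shift is pinned down, the rest is pure substitution.
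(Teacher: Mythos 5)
Your overall route is the same as the paper's: the paper dismisses (i) and (ii) as following directly from the definitions and obtains (iii) from Theorem~\ref{ipm_equiv} together with the fiber structure, and your proof is essentially that argument with the substitutions written out, including the genuinely useful observation that the fiber parameter $v^i$ and the KKT variable $v^i$ differ by the shift $\max_a\pi_{Aa}^{i-}U_A^i$. The forward direction of (ii), the deduction of (i) from (ii), and the deduction of (iii) from (ii) plus the last assertion of Theorem~\ref{ipm_equiv} are all correct.

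However, the reverse direction of (ii) has a concrete gap at boundary policies. From $\mu_a^i\geq 0$ and $\hat{\pi}_a^i=\pi_a^i$ you only obtain $r_a^i\geq 0$ at actions with $\pi_a^i>0$, hence $v'^i\geq\pi_{Aa}^{i-}U_A^i$ only on the support of $\pi^i$; the conclusion ``thus $v'^i$ dominates the maximum'' does not follow when the maximizing action carries zero probability. Concretely, take one player with two actions, utilities $U=(0,1)$, $\pi=(1,0)$, $v'=1/2$, $r=(1/2,-1/2)$, $\hat{\pi}=\pi$, $\mu=(1/2,0)\geq 0$: all displayed equations of \eqref{ukkt_equ} hold, yet $B(\pi)=\left\{(1+v,0)\,\middle|\,v\geq 0\right\}$ does not contain $\mu$, so the implication fails if ``solution'' means only the displayed equations. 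What rescues the statement (and what the paper leaves implicit) is the standing convention that regrets are nonnegative, i.e., the constraint $(\pi_a^i,r_a^i)\geq 0$ inherited from problem \eqref{regmin_equ}, equivalently the domain $v^i>\max_a\pi_{Aa}^{i-}U_A^i$ used in the proof of Theorem~\ref{ipm_theo}~(i) to pin down the unique $v^i$. Once $r_a^i\geq 0$ is required for all $a$ (not just supported ones), $v'^i\geq\max_a\pi_{Aa}^{i-}U_A^i$ is immediate and your computation closes; without making that convention explicit, the step as written fails, and the same hole propagates into your use of (ii) at $\mu_a^i=\mathbf{0}_a^i$ inside (iii). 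The paper's own one-line proof does not address this boundary issue either, so you are filling a hole the paper glosses over, but your filling needs this one extra brick.
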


Definition~\ref{equil_bund_def} and Theorem~\ref{equil_bund_equiv} show the relation between the equilibrium bundle, the solution space of unbiased KKT conditions \eqref{ukkt_equ}, and the joint space $\mathcal{P}\times\{\mu_a^i|\mu_a^i\geq 0\}$ of policy and barrier parameter.
First, the equilibrium bundle is the solution space of unbiased KKT conditions \eqref{ukkt_equ}, and a fiber $B(\pi_a^i)$ is the solution subspace relating to a given $\pi_a^i$.
Second, $B(\pi_a^i)$ is the intersection between $\{\mu_a^i|\mu_a^i\geq 0\}$ and a $\left\lvert N\right\rvert $-dimensional affine space spanned by $v^i$, and the union of all the $B(\pi_a^i)$ is exactly $\{\mu_a^i|\mu_a^i\geq 0\}$ by the existence theorem.
Third, the canonical section $\bar{\mu}_a^i(\pi_a^i)$ is the least element in $B(\pi_a^i)$ such that for any $\mu_a^i\in B(\pi_a^i)$ there is $\mu_a^i\geq \bar{\mu}_a^i(\pi_a^i)$, and $\bar{\mu}_a^i(\pi_a^i)$ is on the boundary of $\{\mu_a^i|\mu_a^i\geq 0\}$.
Finally, Theorem~\ref{equil_bund_equiv} (iii) shows that Nash equilibria are exactly the zero points of the canonical section $\bar{\mu}_a^i$.

Theorem~\ref{equil_bund_equiv} (iii) has two implications.
First, unlike the primal-dual unbiased interior point method, $\mu_a^i$ does not have to decrease to $0$ to reach a Nash equilibrium,
instead, $\mu_a^i$ can be any value on the fiber as long as the canonical section $\bar{\mu}_a^i(\pi_a^i)$ decreases to $0$.
Second, the canonical section $\bar{\mu}_a^i(\pi_a^i)$ depicts the global distribution of Nash equilibria, which can be used to search the policy space globally for the entire set of Nash equilibria.

In summary, the equilibrium bundle is exactly the structured solution space of unbiased KKT conditions \eqref{ukkt_equ} where the unbiased central paths lie,
making it possible to leverage the geometric properties of the equilibrium bundle to study the three assumptions left to settle in the primal-dual unbiased interior point method, which gives the following theorem.
\begin{equation}
    \label{coeff_matrix}
    {\rm C}_{(j,a')\cup l}^{(i,a)\cup m}:=
    \begin{bmatrix}
        H_{(j,a')}^{(i,a)}   & \hat{B} ^{(i,a)}_l \\
        \check{B}_{(j,a')}^m & \mathbf{0}_l^m
    \end{bmatrix}:=
    \begin{bmatrix}
        {\rm Diag}\left(r_a^i\right)_{(j,a')}^{(i,a)}-\left(\pi_{Aaa'}^{ij-}U_A^i\circ\pi_{a'}^j\right)_{(j,a')}^{(i,a)} & \left(I^{il}\mathbf{1}_a\right) _l^{(i,a)} \\
        \pi_{(j,a')}\circ \left(I^{jm}\mathbf{1}_{a'}\right) _{(j,a')}^m                                                 & \mathbf{0}_l^m
    \end{bmatrix}
\end{equation}

\begin{theorem}
    \label{equil_bund_theo}
    The following properties about equilibrium bundle $E$ hold.
    \begin{enumerate}
        \item For any $\hat{\mu}_a^i>0$, the algebraic curve ${\rm AC}=\{(\pi_a^i,\hat{\mu}_a^i\mu')\in E|\mu'> 0\}$ satisfies $$\lim_{\mu' \to +\infty} \pi_a^i=\frac{\hat{\mu}_a^i}{\mathbf{1}_a\hat{\mu}_a^i}.$$
        \item The differential $(d\pi_{a'}^j/\pi_{a'}^j)/(d\mu_{a''}^k/\mu_{a''}^k)$ at $(\pi_a^i,\mu_a^i)\in E$ of equilibrium bundle $E$ satisfies
              \begin{equation}
                  \label{tan_vec}
                  \begin{bmatrix}
                      H_{(j,a')}^{(i,a)}   & \hat{B} ^{(i,a)}_l \\
                      \check{B}_{(j,a')}^m & \mathbf{0}_l^m
                  \end{bmatrix}
                  \begin{bmatrix}
                      \left(\frac{\mu_{a''}^k d\pi_{a'}^j}{\pi_{a'}^j d\mu_{a''}^k}\right)_{(k,a'')}^{(j,a')} \\\left(\frac{\mu_{a''}^k dv^l}{d\mu_{a''}^k}\right)_{(k,a'')}^l
                  \end{bmatrix}=
                  \begin{bmatrix}
                      {\rm Diag}\left(r_a^i\right)_{(k,a'')}^{(i,a)} \\\mathbf{0}_{(k,a'')}^m
                  \end{bmatrix}.
              \end{equation}
              It follows that the equilibrium bundle is differentiable at $(\pi_a^i,\mu_a^i)\in E$ if and only if ${\rm C}_{(j,a')\cup l}^{(i,a)\cup m}$ is non-singular.
        \item The differential of unbiased barrier problem \eqref{ubarr_equ} satisfies
              \begin{equation}
                  \label{grad}
                  \begin{bmatrix}
                      \left(\pi_a^i-\hat{\pi}_a^i\right)^{(i,a)} & \mathbf{0}^m
                  \end{bmatrix}
                  \begin{bmatrix}
                      H_{(j,a')}^{(i,a)}   & \hat{B} ^{(i,a)}_l \\
                      \check{B}_{(j,a')}^m & \mathbf{0}_l^m
                  \end{bmatrix}
                  \begin{bmatrix}
                      \left(d\pi_{a'}^j/\pi_{a'}^j\right)^{(j,a')} \\dv^l
                  \end{bmatrix}.
              \end{equation}
              It follows that unbiased barrier problem \eqref{ubarr_equ} is locally strictly convex at $(\pi_a^i,\mu_a^i)\in E$ if and only if ${\rm C}_{(j,a')\cup l}^{(i,a)\cup m}$ is non-singular.
        \item For any $\pi_a^i$, there exists $\check{\mu}_a^i$ on its fiber $B(\pi_a^i)$ such that for every $\mu_a^i>\check{\mu}_a^i$, ${\rm C}_{(j,a')\cup l}^{(i,a)\cup m}$ is non-singular on $(\pi_a^i,\mu_a^i)$.
    \end{enumerate}
\end{theorem}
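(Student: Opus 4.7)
My plan is to treat all four parts through the lens of the unbiased KKT conditions \eqref{ukkt_equ}, whose solution set by Theorem~\ref{equil_bund_equiv} is exactly $E$. After substituting $\hat\pi_a^i = \pi_a^i$ and using $r_a^i = v^i - \pi_{Aa}^{i-} U_A^i$ to eliminate $r$, the matrix ${\rm C}$ becomes precisely the Jacobian of the remaining equations with respect to the free variables $(\ln\pi, v)$ at a point of $E$. This organizes parts (ii)--(iv) around invertibility of ${\rm C}$, while (i) is a separate asymptotic computation on the same system.

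For (i), I substitute $\mu_a^i = \hat\mu_a^i \mu'$ into $\pi_a^i r_a^i = \mu_a^i$ together with $r_a^i = v^i - \pi_{Aa}^{i-} U_A^i$. Since $U_A^i$ is fixed and $\pi_a^i \in [0,1]$, the term $\pi_{Aa}^{i-} U_A^i$ stays bounded, so $r_a^i = v^i + O(1)$ while $\pi_a^i r_a^i \to \infty$ forces $v^i \to \infty$; then $\pi_a^i \sim \hat\mu_a^i \mu'/v^i$, and the simplex constraint $\mathbf{1}_a \pi_a^i = 1$ pins $v^i/\mu' \to \mathbf{1}_a \hat\mu_a^i$, giving the stated limit.

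For (ii), I take logarithmic total differentials of $\pi_a^i r_a^i = \mu_a^i$ and $r_a^i = v^i - \pi_{Aa}^{i-} U_A^i$; eliminating $dr$ yields $H\,d\ln\pi + \hat B\,dv = {\rm Diag}(r)\,d\ln\mu$, and adjoining the differentiated simplex constraint $\check B\,d\ln\pi = 0$ recovers \eqref{tan_vec} exactly. Differentiability of the implicit map $\mu \mapsto (\pi, v)$ on $E$ is unique solvability of this linear system, equivalently non-singularity of ${\rm C}$. For (iii), Theorem~\ref{ipm_theo}~(ii) already supplies the projected gradient in the right-hand-side form of \eqref{grad}; the bordered rewrite is valid because the extra term $(\pi_a^l - \hat\pi_a^l)\hat B_l^{(i,a)} dv^l$ telescopes to $(\mathbf{1}_a \pi_a^l - \mathbf{1}_a \hat\pi_a^l) dv^l = 0$ under both simplex constraints. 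For local strict convexity, I use the identity $(\pi - \mu/r)(r - \mu/\pi) = (\pi r - \mu)^2/(\pi r)$ to rewrite the objective as $f = \sum_{i,a}(\pi_a^i r_a^i - \mu_a^i)^2/(\pi_a^i r_a^i)$ and differentiate twice; using $d(\pi_a^i r_a^i) = \pi_a^i [H\,d\ln\pi + \hat B\,dv]^{(i,a)}$ from (ii), at a KKT point one gets $d^2 f = \sum_{i,a}(2\pi_a^i/r_a^i)\bigl([H\,d\ln\pi + \hat B\,dv]^{(i,a)}\bigr)^2$. This form is positive semidefinite and vanishes precisely on $\ker {\rm C}$, so its positive definiteness on the tangent space $\{\check B\,d\ln\pi = 0\}$ is equivalent to non-singularity of ${\rm C}$.

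For (iv), I parametrize $B(\pi_a^i)$ by $v^i \geq 0$, under which $r_a^i = v^i + \bar\mu_a^i(\pi_a^i)/\pi_a^i$ grows unboundedly in each coordinate. The block $H = {\rm Diag}(r) - \pi_{Aaa'}^{ij-} U_A^i \pi_{a'}^j$ then becomes strictly diagonally dominant and hence positive definite for sufficiently large $v^i$, since the payoff off-diagonal is bounded uniformly in $v^i$ while the diagonal blows up. A Schur complement argument, using that $\check B$ has full row rank $|N|$ (each player's simplex constraint is nondegenerate at $\pi_a^i > 0$), extends non-singularity to the full bordered ${\rm C}$; taking $\check\mu_a^i$ to correspond to a sufficiently large $v^i$ furnishes the threshold. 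The principal technical hurdle is the convexity half of (iii): \eqref{grad} is first-order and vanishes identically on $E$, so it cannot by itself witness convexity, and the crucial move is the second-order computation above identifying the reduced Hessian's kernel with $\ker{\rm C}$; the remaining pieces are routine implicit-function and diagonal-dominance arguments.
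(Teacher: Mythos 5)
Your parts (i) and (ii) follow essentially the paper's own route: (i) is the same asymptotic balance $\pi_a^i\circ v^i-\pi_a^i\circ\pi_{Aa}^{i-}U_A^i=\hat{\mu}_a^i\mu'$ forcing $v^i\to\infty$ and $\pi_a^i\circ v^i/\mu'\to\hat{\mu}_a^i$, and (ii) is the same elimination of $dr_a^i$ from the differentiated unbiased KKT conditions \eqref{ukkt_equ} followed by the implicit function theorem. In (iii) you genuinely diverge from the paper, in a useful way: the paper argues convexity indirectly (non-singular ${\rm C}$ gives, via the implicit function theorem, a locally unique solution of \eqref{ukkt_equ} and hence a locally unique KKT point of \eqref{ubarr_equ}, which it identifies with strict local convexity), whereas you rewrite the objective as $\sum_{i,a}(\pi_a^i r_a^i-\mu_a^i)^2/(\pi_a^i r_a^i)$ and compute the reduced Hessian at a point of $E$, identifying its kernel on the tangent space $\{\check{B}\,d\ln\pi=0\}$ with $\ker{\rm C}$ (any kernel vector of ${\rm C}$ automatically lies in that tangent space, so the equivalence with non-singularity is clean). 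This second-order computation is correct, is consistent with the identity the paper itself uses in proving Theorem~\ref{ipm_equiv} $(ii)\Leftrightarrow(iii)$, and substantiates the convexity claim more directly than the paper's uniqueness-of-KKT-point argument; your observation that the bordered form \eqref{grad} collapses to \eqref{projgrad_formula} because $(\pi_a^i-\hat{\pi}_a^i)\hat{B}\,dv$ sums to zero under both simplex conditions is also exactly right.

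The one genuine gap is the Schur-complement step in (iv). For a nonsymmetric bordered matrix, positive definiteness of $H$ together with full row rank of $\check{B}$ does not imply non-singularity of ${\rm C}$: the kernel condition reduces to $\check{B}H^{-1}\hat{B}\,y=0$, and since $\check{B}\neq\hat{B}^{T}$ here, $\check{B}H^{-1}\hat{B}$ can be singular even with $H$ positive definite and both borders of full rank (take $H=I$, $\hat{B}=(1,0)^{T}$, $\check{B}=(0,1)$ as a $3\times 3$ toy case). What rescues your argument is the specific structure in the same asymptotic regime you already use: on the fiber, $r_a^i=v^i+\bar{\mu}_a^i(\pi_a^i)/\pi_a^i$, so $H^{-1}\to{\rm Diag}(1/r_a^i)$ as $v^i\to\infty$, and since $\check{B}\hat{B}=I$ (because $\mathbf{1}_a\pi_a^m=\mathbf{1}^m$), the Schur complement satisfies $\check{B}H^{-1}\hat{B}\to{\rm Diag}(1/v^m)$, which is non-singular; hence ${\rm C}$ is non-singular for all sufficiently large $\mu_a^i$ on $B(\pi_a^i)$. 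This is a one-line fix, and it is in effect what the paper does instead by computing the determinant of the limiting bordered matrix directly via column operations, obtaining $(-1)^{\lvert N\rvert}\prod_{i\in N}(v^i)^{\lvert\mathcal{A}\rvert-1}$ and concluding it is eventually non-zero.
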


Theorem~\ref{equil_bund_theo} settles all the three assumptions with the context transformed from the unbiased central path to the equilibrium bundle.
\textbf{Starting point}: (i) shows that we can choose any policy as the starting policy $\pi_{a,0}^i$ simply by setting $\mu_{a,0}^i=\mu'\pi_{a,0}^i$, and then $(\pi_{a,0}^i,\mu_{a,0}^i)$ is sufficiently close to the equilibrium bundle when $\mu'$ is sufficiently large.
\textbf{Differentiability}: (ii) shows that the equilibrium bundle is differentiable as long as the coefficient matrix ${\rm C}_{(j,a')\cup l}^{(i,a)\cup m}$ is \textbf{non-singular} according to the implicit function theorem, and the differential $(d\pi_{a'}^j/\pi_{a'}^j)/(d\mu_{a''}^k/\mu_{a''}^k)$ points out the direction to update along the equilibrium bundle as $\mu_a^i$ decreases.
\textbf{Convexity}: (iii) shows that unbiased barrier problem \eqref{ubarr_equ} is strictly locally convex as long as the coefficient matrix ${\rm C}_{(j,a')\cup l}^{(i,a)\cup m}$ is \textbf{non-singular} since the only zero gradient point in the neighborhood is where $\pi_a^i=\hat{\pi}_a^i$.
Finally, (iv) shows that the point where ${\rm C}_{(j,a')\cup l}^{(i,a)\cup m}$ is \textbf{non-singular} can always be easily found.

A point where ${\rm C}_{(j,a')\cup l}^{(i,a)\cup m}$ is singular is a singular point of the equilibrium bundle as defined in Definition~\ref{equil_bund_def}.
Theorem~\ref{equil_bund_theo} (ii) and (iii) and the experiments show that singular points may block the iteration, and thus they must be avoided.
(iv) shows that this is always possible simply by adding $\beta^i\circ\pi_a^i$ to $\mu_a^i$ for a sufficiently large $\beta^i$.
Normally, it is harmless to pick any $\beta^i$ to perform the addition at any time, but note that the differential $(d\pi_{a'}^j/\pi_{a'}^j)/(d\mu_{a''}^k/\mu_{a''}^k)$ approaches identity matrix $I$ as $\mu_a^i\to \infty$.
In other words, on a point $(\pi_a^i, \mu_a^i)\in E$ where $\mu_a^i$ is too large, the equilibrium bundle is flat and the convergence is slow.
Considering all the above, we give the canonical section descent \eqref{canosect_formula} to update $\mu_a^i$, where $\eta^i\in [0,1)^i$ is the step length that is small enough, and $\beta^i\circ\pi_a^i$ is the term for singular point avoidance.
\begin{equation}
    \label{canosect_formula}
    \mu_{a,t+1}^i=(1-\eta_t^i)\circ\mu_{a,t}^i+\beta_t^i\circ\pi_{a,t}^i
\end{equation}

In summary, we've transformed the primal-dual unbiased interior point method into a \textbf{line search on the equilibrium bundle},
where \textbf{equilibrium bundle} is the structured solution space of unbiased KKT conditions \eqref{ukkt_equ},
the \textbf{first iteration level} is to update onto the equilibrium bundle by projected gradient descent \eqref{projgrad_formula},
the \textbf{second iteration level} is to hop across the fibers of the equilibrium bundle by canonical section descent \eqref{canosect_formula} and differential $(d\pi_{a'}^j/\pi_{a'}^j)/(d\mu_{a''}^k/\mu_{a''}^k)$,
and the line search leads to a Nash equilibrium as the canonical section $\bar{\mu}_a^i(\pi_a^i)$ reduces to $0$.
There is also a method to search globally for the entire set of Nash equilibria using the canonical section.
Specifically, we can sample policies in the policy space and calculate the canonical sections $\bar{\mu}_a^i(\pi_a^i)$ on them,
then we can set the samples where $\bar{\mu}_a^i(\pi_a^i)$ are close to $0$ as the starting points and approximate Nash equilibria near them.
The method is given by Algorithm~\ref{algo} as a single-state degeneration.

\begin{theorem}
    \label{odd_thm}
    The following statements about unbiased KKT conditions \eqref{ukkt_equ} hold.
    \begin{enumerate}
        \item Unbiased KKT conditions \eqref{ukkt_equ} is analytic with respect to both $(\pi_a^i,v^i)$ and $\mu_a^i$.
        \item {\rm (Oddness theorem)} For every $\mu_a^i$, there are almost always an odd number of solutions of unbiased KKT conditions \eqref{ukkt_equ}.
    \end{enumerate}
\end{theorem}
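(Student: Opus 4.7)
The first step is to substitute $\hat{\pi}_a^i = \pi_a^i$ and eliminate $r_a^i$ using the constraint $r_a^i = v^i - \pi_{Aa}^{i-} U_A^i$, reducing \eqref{ukkt_equ} to the two equations $\pi_a^i \circ (v^i - \pi_{Aa}^{i-} U_A^i) = \mu_a^i$ and $\mathbf{1}_a \pi_a^i = \mathbf{1}^i$ in the variables $(\pi_a^i, v^i, \mu_a^i)$. Since each term $\pi_{Aa}^{i-} U_A^i$ is a polynomial in $\pi_a^i$ by the definition of $\pi_{Aa}^{i-}$ (a product of other players' policies), the reduced system is polynomial in its arguments and therefore jointly analytic in $(\pi_a^i, v^i)$ and $\mu_a^i$, as required.

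\textbf{Plan for (ii).} The strategy is to adapt the classical Harsanyi--Wilson parity argument to the equilibrium bundle. Consider the projection $p: E \to \{\mu_a^i \mid \mu_a^i \geq 0\}$ from the bundle onto the space of barrier parameters. By part (i) together with Theorem~\ref{equil_bund_equiv}(i), $E$ is a real algebraic set and $p$ is a polynomial map; Theorem~\ref{ipm_theo}(iii) moreover guarantees surjectivity. The argument then proceeds in three steps. First, restrict to a generic target $\mu_a^{i*}$ for which $p^{-1}(\mu_a^{i*})$ is transverse and finite; this holds outside a measure-zero subset by Sard's theorem applied to the analytic map $p$, which is exactly the ``almost always'' in the statement. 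Second, choose an analytic ray $\mu_a^i(t) = \mu_a^{i*} + t \hat{\mu}_a^i$ with $\hat{\mu}_a^i > 0$; by Theorem~\ref{equil_bund_theo}(i), as $t \to \infty$ the fiber $p^{-1}(\mu_a^i(t))$ collapses onto the unique limit policy $\hat{\mu}_a^i / (\mathbf{1}_a \hat{\mu}_a^i)$, giving a single ``starting'' solution at infinity. Third, invoke the Newton--Puiseux theorem on $p^{-1}(\gamma)$, the preimage of the ray, to show it is a real analytic curve whose connected components are either closed loops (contributing an even count to $p^{-1}(\mu_a^{i*})$) or arcs with endpoints lying on $\{\mu_a^{i*}\} \cup \{t = \infty\}$. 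Since exactly one arc terminates at the asymptotic solution while the remaining arcs pair up the remaining fiber points, $|p^{-1}(\mu_a^{i*})|$ must be odd.

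\textbf{Main obstacle.} The hardest part will be making the behavior-at-infinity rigorous: $p$ is not proper, so I must either work in a suitable real projective compactification of $E$ or carry out the Newton--Puiseux normal-form analysis locally at infinity, in order to certify that along a generic ray exactly one arc of $p^{-1}(\gamma)$ escapes to the asymptotic endpoint. A secondary difficulty is to verify that the singular points of $E$ identified in Theorem~\ref{equil_bund_theo}(iv), where ${\rm C}_{(j,a')\cup l}^{(i,a)\cup m}$ degenerates, form a codimension-at-least-one subvariety of the parameter space, so that perturbing $\hat{\mu}_a^i$ produces a ray that misses every such singularity and the pairing of arcs into endpoints on the compact side is unambiguous. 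Once these two technical facts are in place, the parity count reduces to the standard ``arcs have two endpoints, loops have none'' accounting.
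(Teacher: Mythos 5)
Your proposal follows essentially the same route as the paper: your part (ii) is exactly the paper's argument, namely a ray in $\mu$-space whose unique endpoint at infinity is supplied by Theorem~\ref{equil_bund_theo}~(i), Newton--Puiseux unique continuation so that the curve's branches pair up endpoints and the finite fiber has odd cardinality, and a genericity step for the ``almost always'' (your Sard-type argument playing the role of the paper's ``almost all points of a polynomial system are non-singular'' plus the implicit function theorem). The only real deviation is in (i): you establish just that the reduced system $\pi_a^i\circ\left(v^i-\pi_{Aa}^{i-}U_A^i\right)=\mu_a^i$, $\mathbf{1}_a\pi_a^i=\mathbf{1}^i$ is polynomial (hence analytic) in all variables, whereas the paper's proof of (i) additionally asserts branchwise analyticity of the solutions as functions of $\mu_a^i$ via Newton--Puiseux, content you instead supply inside (ii), so nothing essential is missing; your acknowledged obstacles (behavior at infinity, codimension of the singular locus) are precisely the points the paper itself treats only informally.
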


Analytic functions are functions that can be locally expand to a convergent power series, and being analytic is a property even stronger than being infinitely differentiable.
Theorem~\ref{odd_thm}~(i) is based on the fact that unbiased KKT conditions \eqref{ukkt_equ} derive a polynomial function that maps $(\pi_a^i,v^i)$ to $\mu_a^i$.
First, polynomial functions are analytic on their whole domain.
Second, the function with respect to $\mu_a^i$ is the inverse of a polynomial function that also satisfies the Newton-Puiseux theorem, such that it is analytic in the neighborhood of every point on it.
Furthermore, the singular points of the equilibrium bundle are all multiple roots of this polynomial function,
which are called critical points in terms of the polynomial, or algebraic branch points in terms of the inverse function of the polynomial.

Theorem~\ref{odd_thm}~(ii) gives an extension of the oddness theorem of Nash equilibria\cite{oddness2}, and it is implied by Newton-Puiseux theorem and that almost all points given by a polynomial equation are non-singular.
For any algebraic curve ${\rm AC}=\{(\pi_a^i,\hat{\mu}_a^i\mu')\in E|\mu'> \mu''\}$ given by $\hat{\mu}_a^i>0$ and $\mu''\geq 0$, denoting ${\rm EP}$ as the set of points on ${\rm AC}$ as $\mu'\to \mu''$.
First, there is exactly one point in ${\rm EP}$ that is connected by a branch of ${\rm AC}$ to the unique starting point of ${\rm AC}$ stated in Theorem~\ref{equil_bund_theo} (i) as $\mu'\to \infty$,
and all the other points in ${\rm EP}$ are connected in pairs by the rest branches of ${\rm AC}$, and thus $|{\rm EP}|$ is odd.
Second, ${\rm EP}$ equals to $\{(\pi_a^i,\hat{\mu}_a^i\mu'')\in E\}$ if every point in ${\rm EP}$ is non-singular, and this is almost always the case,
where $\{(\pi_a^i,\hat{\mu}_a^i\mu'')\in E\}$ is the entire set of solutions of \eqref{ukkt_equ} at $\hat{\mu}_a^i\mu''$,

\section{Dynamic programming on dynamic games}\label{dp_sec}
\subsection{Policy cone and best response cone}\label{dp_sec1}

In this section, we deal with the subproblem with respect to $V_s^i$ of problem \eqref{regmin_problem}, which is the dynamic programming problem \eqref{dp_problem}.
Solving dynamic programming problem \eqref{dp_problem} is equivalently making the convergence point of dynamic programming to lie within the constraint of problem \eqref{dp_problem}.
We first introduce two dynamic programming operators, along with the corresponding policy cone and best response cone,
such that the convergence point of the dynamic programming operator is the apex of the policy cone, and the constraint region of problem \eqref{dp_problem} is the best response cone.
Then we give Theorem~\ref{cone_dp} about the iterative properties of the dynamic programming operator within the policy cone,
and Theorem~\ref{cone_equil} about the equivalent conditions for the apex of the policy cone to be in the best response cone.
These two theorems are then used in the next subsection to construct the algorithm solving dynamic programming problem \eqref{dp_problem}.
\begin{equation}
    \label{dp_problem}
    \begin{aligned}
        \min_{V_s^i}\quad  & V_s^i-\pi_A^{si}(u_A^{si}+\gamma T_{s'A}^s V_{s'}^i)        \\
        \textrm{s.t.}\quad & V_s^i\geq\pi_{Aa}^{si-}(u_A^{si}+\gamma T_{s'A}^s V_{s'}^i) \\
    \end{aligned}
\end{equation}

We define two dynamic programming operators $D_\pi:\mathcal{V} \to \mathcal{V}$ and $\hat{D}_\pi:\mathcal{V} \to \mathcal{V}$ based on the objective and constraint of problem \eqref{dp_problem},
both of which are maps parameterized by policy $\pi_a^{si}$ such that
$$D_\pi(V_s^i)=\pi_A^s(u_A^{si}+\gamma T_{s'A}^s V_{s'}^i),\quad\hat{D}_\pi(V_s^i)=\max_a \pi_{Aa}^{si-}(u_A^{si}+\gamma T_{s'A}^s V_{s'}^i).$$
In single-player case, value iteration, which uses the Bellman operator $\max_a(u_a^s+\gamma T_{s'a}^s V_{s'})$ to iterate value function $V_s$, is a polynomial-time exact algorithm for computing optimal policies of MDPs.
But the similar operator generally does not converge in the multi-player dynamic games, which is explained in the next subsection.
The convergence of Bellman operator is contraction mapping convergence, but it is possible to settle for monotone convergence using $D_\pi$ and $\hat{D}_\pi$ through the policy cone and the best response cone we next define.
Contraction mapping convergence can be considered as a convergence in every direction, and monotone convergence can be considered as a convergence in only a certain direction, and we later show that the direction is $\mathbf{1}_s$.

\begin{figure}
    \centering
    \includegraphics[width=0.9\textwidth]{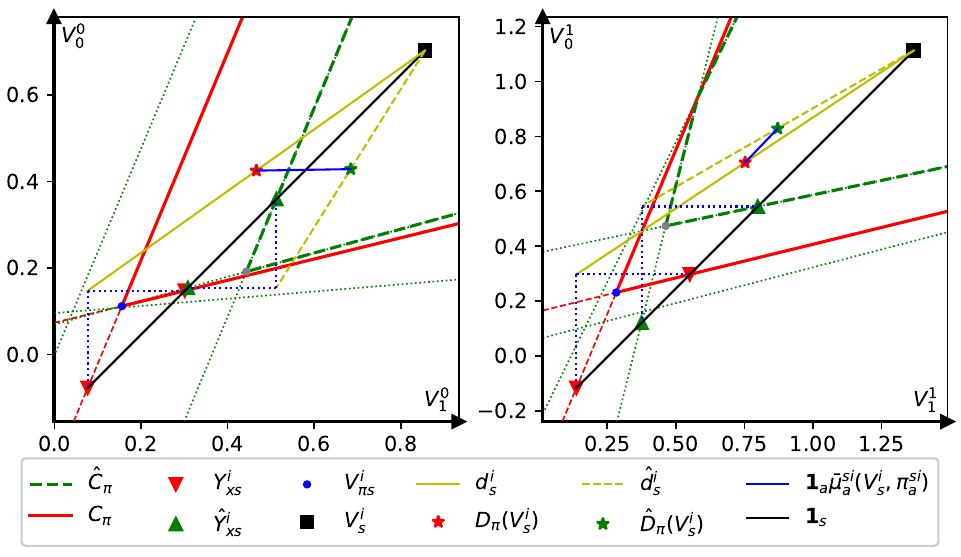}
    \caption{
    Graph of policy cone.
    This figure is plotted with a dynamic game where $N=\mathcal{S}=\mathcal{A}=\{0,1\}$.
    The graph is based on the value function space $\mathcal{V}$, where the two axes represent two state indices $s\in\{0,1\}$ of $V_s^i$, and the two subfigures represent two player indices $i\in\{0,1\}$ of $V_s^i$.
    As Proposition~\ref{cone_prop} shows, $C_\pi$ is a set of hyperplane-surrounded cone-shaped regions indexed by $i\in N$, with $V_{\pi s}^i$ being its apexes, and with $\hat{C}_\pi$ contained in it.
    $\mathbf{1}_s$ is the monotone convergence direction, which induces unique pairs $(Y_{xs}^i,d_x^i)$ and $(\hat{Y}_{xs}^i,\hat{d}_x^i)$, and satisfies that $V_s^i+m^i\mathbf{1}_s$ lies in $\hat{C}_\pi$ for any $V_s^i$ and sufficiently large $m^i$.
    Theorem~\ref{cone_dp} states that the residuals satisfy $V_s^i-D_\pi(V_s^i)=(1-\gamma)d_s^i$ and $V_s^i-\hat{D}_\pi(V_s^i)=(1-\gamma)\hat{d}_s^i$.
    Theorem~\ref{cone_equil} states that $\pi_a^{si}(x)$ is a Nash equilibrium if and only if the corresponding pair of $Y_{xs}^i$ and $\hat{Y}_{xs}^i$ coincide, and $\pi_a^{si}$ is a perfect equilibrium if and only if $V_{\pi s}^i$ lie in $\hat{C}_\pi$, in which case $Y_{xs}^i$, $\hat{Y}_{xs}^i$, and $V_{\pi s}^i$ all coincide.
    Equation \eqref{cano_dpres} shows that the relation between the canonical section and the two dynamic programming operators is $\mathbf{1}_a\bar{\mu}_a^{si}(V_s^i,\pi_a^{si})=\hat{D}_\pi(V_s^i)-D_\pi(V_s^i)$.
    }
    \label{cone_graph}
\end{figure}

\begin{definition}[Policy cone and best response cone]
    \label{cone_def}
    Let $\pi_a^{si}$ be a policy in dynamic game $\Gamma$.
    A policy cone $C_\pi$ and a best response cone $\hat{C}_\pi$ are regions in value function space $\mathcal{V}$ such that
    $$C_\pi=\left\{V_s^i\in\mathcal{V}|V_s^i\geq D_\pi(V_s^i) \right\},\quad\hat{C}_\pi=\left\{V_s^i\in\mathcal{V}|V_s^i\geq \hat{D}_\pi(V_s^i) \right\}.$$
\end{definition}

\begin{proposition}
    \label{cone_prop}
    Let $\pi_a^{si}$ be a policy in dynamic game $\Gamma$, and then the following properties hold.
    \begin{enumerate}
        \item There is a unique value function $V_{\pi s}^i$ that satisfies
              \begin{equation}
                  \label{policy_value}
                  V_{\pi s}^i=D_\pi(V_{\pi s}^i).
              \end{equation}
              In addition, $V_{\pi s}^i\in C_\pi$, and for any $V_s^i\in C_\pi$ there is $V_s^i\geq V_{\pi s}^i$.
        \item For any $V_s^i \in \mathcal{V} $ and $x \in \mathcal{S} $, there exists a unique pair $(Y_{xs}^i,d_x^i)\in\mathcal{V}\times\mathbb{R}^N$,
              as well as a unique pair $(\hat{Y}_{xs}^i,\hat{d}_x^i)\in\mathcal{V}\times\mathbb{R}^N$, such that
              \begin{subequations}
                  \begin{equation}
                      \label{cone_boundary}
                      \left(Y_{xs}^i-D_\pi(Y_{xs}^i)\right)(x)=0 \quad \wedge \quad V_s^i-Y_{xs}^i=d_x^i\mathbf{1}_s,
                  \end{equation}
                  \begin{equation}
                      \label{coneb_boundary}
                      \left(\hat{Y}_{xs}^i-\hat{D}_\pi(\hat{Y}_{xs}^i)\right)(x)=0 \quad \wedge \quad V_s^i-\hat{Y}_{xs}^i=\hat{d}_x^i\mathbf{1}_s.
                  \end{equation}
              \end{subequations}
        \item For any $V_s^i \in \mathcal{V} $, there exists an $M^i>0$ such that for any $m^i>M^i$, $V_s^i+m^i\mathbf{1}_s \in \hat{C}_\pi\subseteq C_\pi$.
    \end{enumerate}
\end{proposition}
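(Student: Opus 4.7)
The plan is to build all three parts on two core identities. First, writing $D_\pi(V_s^i) = u_\pi^s + \gamma T_{\pi s'}^s V_{s'}^i$ makes $D_\pi$ affine in $V_s^i$, and since $T$ is stochastic $T_{\pi s'}^s \mathbf{1}_{s'} = \mathbf{1}_s$, so for any shift by a constant-over-states vector $c^i \mathbf{1}_s$ we get $D_\pi(V_s^i + c^i \mathbf{1}_s) = D_\pi(V_s^i) + \gamma c^i \mathbf{1}_s$. The identical shift identity holds for $\hat{D}_\pi$ because the shift term is independent of $a$ and $\max_a$ commutes with adding a constant. Consequently both residuals $V_s^i - D_\pi(V_s^i)$ and $V_s^i - \hat{D}_\pi(V_s^i)$ shift by $(1-\gamma) c^i \mathbf{1}_s$. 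Second, because $\max_a$ dominates the $\pi_a^{si}$-weighted average, $\hat{D}_\pi(V_s^i) \geq D_\pi(V_s^i)$ pointwise, giving $\hat{C}_\pi \subseteq C_\pi$ for free.

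For part (i), existence and uniqueness of $V_{\pi s}^i$ with $V_{\pi s}^i = D_\pi(V_{\pi s}^i)$ reduces to solving $(I_{s'}^s - \gamma T_{\pi s'}^s) V_{\pi s'}^i = u_\pi^s$, where $I_{s'}^s - \gamma T_{\pi s'}^s$ is invertible by Lemma~\ref{stoc_mat_lemm}~(ii) as already recalled in the preliminaries. For the minimality property, I would use that $D_\pi$ is order-preserving (since $T_{\pi s'}^s \geq 0$ and $\gamma \geq 0$) and a $\gamma$-contraction in the $\ell^\infty$ norm (since $T_{\pi s'}^s$ is row-stochastic). If $V_s^i \in C_\pi$, then $V_s^i \geq D_\pi(V_s^i)$ inductively gives a monotonically non-increasing sequence $D_\pi^n(V_s^i)$ which, by contraction, converges to the unique fixed point $V_{\pi s}^i$, hence $V_s^i \geq V_{\pi s}^i$.

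For part (ii), I would determine $d_x^i$ and $Y_{xs}^i$ explicitly via the shift identity. Set $\rho_x^i := (V_s^i - D_\pi(V_s^i))(x)$, a scalar for each $(x, i)$, and $Y_{xs}^i := V_s^i - d_x^i \mathbf{1}_s$ so the second equation in \eqref{cone_boundary} holds by construction. Substituting into the first equation and invoking the shift identity produces $\rho_x^i - (1-\gamma) d_x^i = 0$, which has the unique solution $d_x^i = \rho_x^i / (1-\gamma)$; this in turn pins down $Y_{xs}^i$ uniquely. The construction of $(\hat{Y}_{xs}^i, \hat{d}_x^i)$ is identical with $D_\pi$ replaced by $\hat{D}_\pi$, since the shift identity for $\hat{D}_\pi$ has the very same form.

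For part (iii), applying the shift identity to $\hat{D}_\pi$ gives $(V_s^i + m^i \mathbf{1}_s) - \hat{D}_\pi(V_s^i + m^i \mathbf{1}_s) = (V_s^i - \hat{D}_\pi(V_s^i)) + (1-\gamma) m^i \mathbf{1}_s$, which is coordinatewise non-negative as soon as $(1-\gamma) m^i$ exceeds the supremum over states of the per-player residual $\hat{D}_\pi(V_s^i) - V_s^i$; taking $M^i$ equal to that supremum divided by $(1-\gamma)$, clipped below by $0$, yields $V_s^i + m^i \mathbf{1}_s \in \hat{C}_\pi$ for $m^i > M^i$, and the inclusion $\hat{C}_\pi \subseteq C_\pi$ was already noted. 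The only genuine obstacle is careful bookkeeping: checking that the shift $c^i \mathbf{1}_s$, which varies over players $i$ but is constant over states $s$, correctly commutes with the per-player $\max_a$ inside $\hat{D}_\pi$ and with the stochastic transition $T_{s'A}^s$; once that is verified, each of the three parts collapses to a one-line application of the two core identities.
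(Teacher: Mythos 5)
Your proposal is correct and takes essentially the same route as the paper: the constant-shift identity $D_\pi(V_s^i+c^i\mathbf{1}_s)=D_\pi(V_s^i)+\gamma c^i\mathbf{1}_s$ (and its $\hat{D}_\pi$ analogue) drives parts (ii) and (iii) exactly as in the paper's proof, existence and uniqueness in (i) is the same inversion of $I_{s'}^s-\gamma T_{\pi s'}^s$, and the inclusion $\hat{C}_\pi\subseteq C_\pi$ uses the same max-dominates-average observation. The only cosmetic difference is that you obtain the minimality $V_s^i\geq V_{\pi s}^i$ on $C_\pi$ by monotone contraction iteration of $D_\pi$, whereas the paper applies Lemma~\ref{stoc_mat_lemm} (the positivity implication for $I-\gamma T_\pi$) directly; the two arguments are equivalent and both standard.
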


According to Definition~\ref{cone_def} and Proposition~\ref{cone_prop} (i), policy cone $C_\pi$ is a set of cone-shaped regions indexed by $i\in N$,
each surrounded by $\left\lvert\mathcal{S}\right\rvert$ hyperplanes in $\left\lvert\mathcal{S}\right\rvert$-dimensional value function space $\mathcal{V}$, with $V_{\pi s}^i$ being a set of their apexes,
and best response cone $\hat{C}_\pi$ is a subset of policy cone $C_\pi$ for every $i\in N$, with each subset being an intersect of $\left\lvert\mathcal{A}\right\rvert$ cone-shaped regions.
Proposition~\ref{cone_prop} (ii) and (iii) show that both bottoms of $C_\pi$ and $\hat{C}_\pi$ expand towards infinity along $\mathbf{1}_s$ such that $V_s^i+m^i\mathbf{1}_s$ always lies in both of them for sufficiently large $m^i$.
And there is always a unique intersection $Y_{xs}^i$ between the line in direction $\mathbf{1}_s$ passing through $V_s^i \in \mathcal{V} $ and each hyperplane of $C_\pi$ indexed by $x$ and $i$,
and the same goes for $\hat{Y}_{xs}^i$ and $\hat{C}_\pi$.

\begin{theorem}[Iterative properties]
    \label{cone_dp}
    Let $\pi_a^{si}$ be a policy in dynamic game $\Gamma$, and then the following properties of dynamic programming operator $D_\pi$ hold.
    \begin{enumerate}
        \item $V_s^i\geq D_\pi(V_s^i)$ if and only if $V_s^i \in C_\pi$.
        \item For any $V_s^i \in C_\pi$, $D_\pi(V_s^i) \in C_\pi$.
        \item For any $V_s^i \in C_\pi$, if $D_\pi(V_s^i)=\hat{D}_\pi(V_s^i)$, then $D_\pi(V_s^i) \in \hat{C}_\pi$.
        \item For any $V_s^i \in C_\pi$, the residual
              \begin{subequations}
                  \begin{equation}
                      V_s^i-D_\pi(V_s^i)=(1-\gamma)d_s^i,
                  \end{equation}
                  \begin{equation}
                      V_s^i-\hat{D}_\pi(V_s^i)=(1-\gamma)\hat{d}_s^i,
                  \end{equation}
              \end{subequations}
              where $d_s^i$ and $\hat{d}_s^i$ are given by formula \eqref{cone_boundary} and \eqref{coneb_boundary} respectively.
        \item Let $V_s^i$ iterates by $V_{s,k+1}^i=D_\pi(V_{s,k}^i+m^i\mathbf{1}_s)$, where $m^i>0$ is a constant, and initial value function $V_{s,0}^i\in C_\pi$.
              Then
              \begin{equation}
                  \label{conv_to_apex}
                  \lim_{k \to \infty} V_{s,k}^i=V_{\pi s}^i+\frac{\gamma}{1-\gamma}m^i\mathbf{1}_s \quad \wedge \quad \lim_{k \to \infty} \left(V_{s,k}^i-D_\pi(V_{s,k}^i)\right)=\gamma m^i\mathbf{1}_s.
              \end{equation}
    \end{enumerate}
\end{theorem}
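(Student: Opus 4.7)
The plan is to prove (i)--(v) in order, leveraging two structural properties shared by $D_\pi$ and $\hat{D}_\pi$. First, both are componentwise monotone, because $\gamma T_\pi$ has non-negative entries and taking the max over actions is monotone in its arguments. Second, both satisfy the translation identity $D_\pi(V_s^i + c^i\mathbf{1}_s) = D_\pi(V_s^i) + \gamma c^i\mathbf{1}_s$, and likewise for $\hat{D}_\pi$; this follows from $T_{\pi s'}^s \mathbf{1}_{s'} = \mathbf{1}^s$, i.e., the rows of $T_\pi$ sum to one, which is supplied by Lemma~\ref{stoc_mat_lemm}. Together with the affine structure of $D_\pi$, these two properties carry essentially all of the argument.

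Part (i) is the definition of $C_\pi$ in Definition~\ref{cone_def}. For (ii), the hypothesis $V_s^i \geq D_\pi(V_s^i)$ combined with monotonicity of $D_\pi$ yields $D_\pi(V_s^i) \geq D_\pi(D_\pi(V_s^i))$, which by (i) says $D_\pi(V_s^i) \in C_\pi$. For (iii), the pointwise inequality $\hat{D}_\pi(V) \geq D_\pi(V)$ always holds, since $\hat{D}_\pi$ is a max over actions while $D_\pi$ is the $\pi$-weighted average of the same quantities. Monotonicity of $\hat{D}_\pi$ applied to $V_s^i \geq D_\pi(V_s^i)$ then gives $\hat{D}_\pi(V_s^i) \geq \hat{D}_\pi(D_\pi(V_s^i))$, and substituting the hypothesis $\hat{D}_\pi(V_s^i) = D_\pi(V_s^i)$ produces $D_\pi(V_s^i) \geq \hat{D}_\pi(D_\pi(V_s^i))$, i.e., $D_\pi(V_s^i) \in \hat{C}_\pi$.

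For (iv), I convert the characterization of $Y_{xs}^i$ from Proposition~\ref{cone_prop}(ii) into a residual formula via the translation identity. For each fixed $x$, write $V_s^i = Y_{xs}^i + d_x^i\mathbf{1}_s$, with $(Y_{xs}^i - D_\pi(Y_{xs}^i))(x) = 0$. Evaluating $D_\pi(V_s^i)$ at $s = x$ and using the translation identity,
\[
D_\pi(V_s^i)(x) \;=\; D_\pi(Y_{xs}^i)(x) + \gamma d_x^i \;=\; Y_{xs}^i(x) + \gamma d_x^i,
\]
while $V_s^i(x) = Y_{xs}^i(x) + d_x^i$. Subtracting gives $(V_s^i - D_\pi(V_s^i))(x) = (1-\gamma)d_x^i$ for every $x$, which, letting $x$ range over $\mathcal{S}$, is the first identity of (iv); the same argument with $\hat{D}_\pi$, $\hat{Y}_{xs}^i$, $\hat{d}_x^i$ in the corresponding roles yields the second.

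Part (v) is the crux and proceeds in two stages. First, an induction on $k$ shows $V_{s,k}^i \in C_\pi$ for every $k$: for the step, apply (ii) to $V_{s,k}^i + m^i\mathbf{1}_s$, which itself lies in $C_\pi$ because $(V_{s,k}^i + m^i\mathbf{1}_s) - D_\pi(V_{s,k}^i + m^i\mathbf{1}_s) = (V_{s,k}^i - D_\pi(V_{s,k}^i)) + (1-\gamma)m^i\mathbf{1}_s \geq 0$ by the inductive hypothesis, the translation identity, and $m^i,\,1-\gamma > 0$. Second, the translation identity recasts the iteration as $V_{s,k+1}^i = D_\pi(V_{s,k}^i) + \gamma m^i\mathbf{1}_s$; subtracting the fixed-point identity $V_{\pi s}^i = D_\pi(V_{\pi s}^i)$ produces the linear recurrence $W_{s,k+1}^i = \gamma T_{\pi s'}^s W_{s',k}^i + \gamma m^i\mathbf{1}_s$ for $W_{s,k}^i := V_{s,k}^i - V_{\pi s}^i$. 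Since $\gamma T_\pi$ has spectral radius at most $\gamma < 1$ by Lemma~\ref{stoc_mat_lemm}, this recurrence converges geometrically to its unique fixed point, which using $T_\pi\mathbf{1}_s = \mathbf{1}_s$ equals $\tfrac{\gamma m^i}{1-\gamma}\mathbf{1}_s$; this is the first limit in \eqref{conv_to_apex}. One more use of the translation identity at $V_\infty^i = V_{\pi s}^i + \tfrac{\gamma m^i}{1-\gamma}\mathbf{1}_s$ gives $V_\infty^i - D_\pi(V_\infty^i) = \tfrac{\gamma m^i}{1-\gamma}\mathbf{1}_s - \tfrac{\gamma^2 m^i}{1-\gamma}\mathbf{1}_s = \gamma m^i\mathbf{1}_s$, which is the second limit. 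The main obstacle is less any single step than keeping the doubled state index in $Y_{xs}^i$ straight in (iv), and combining monotonicity with the translation identity cleanly in the inductive invariant of (v); both ultimately reduce to the single fact $T_\pi\mathbf{1}_s = \mathbf{1}_s$.
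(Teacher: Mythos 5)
Your proof is correct, and for parts (i)--(iv) it is essentially the paper's argument in different clothing: the paper's use of Lemma~\ref{stoc_mat_lemm}~(ii) and its explicit identities such as $D_\pi(V_s^i)-D_\pi(D_\pi(V_s^i))=\gamma T_{\pi s'}^s(V_s^i-D_\pi(V_s^i))$ is exactly your monotonicity of $D_\pi$ and $\hat{D}_\pi$, and your translation-identity derivation of (iv) reproduces the computation in the proof of Proposition~\ref{cone_prop}~(ii), where $d_x^i=(V_s^i-D_\pi(V_s^i))/(1-\gamma)$ is read off directly. Where you genuinely diverge is (v): the paper shifts the iterate to $\tilde{V}_{s,k}^i=V_{s,k}^i-\tfrac{\gamma}{1-\gamma}m^i\mathbf{1}_s$, so that $\tilde{V}_{s,k+1}^i=D_\pi(\tilde{V}_{s,k}^i)$, and then invokes the monotone convergence theorem inside the cone (consistent with the section's theme that only monotone convergence survives when the policy later varies), whereas you subtract the fixed point and solve the affine recurrence $W_{s,k+1}^i=\gamma T_{\pi s'}^s W_{s',k}^i+\gamma m^i\mathbf{1}_s$ as a contraction. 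Your route buys a geometric rate for free (matching Proposition~\ref{conv_rate}~(i)) and in fact never uses the hypothesis $V_{s,0}^i\in C_\pi$, so your inductive stage showing $V_{s,k}^i\in C_\pi$ is dispensable for the limit itself; the paper's route is the one that generalizes to the varying-policy iteration of Theorem~\ref{cone_iter}, where contraction fails. One small citation slip: Lemma~\ref{stoc_mat_lemm} only controls real eigenvalues, so it does not literally give the spectral-radius bound you cite; the cleanest justification is $\lVert\gamma T_\pi\rVert_\infty=\gamma<1$ (row sums equal one), which yields the same contraction and is trivially available, so this is a wording fix rather than a gap.
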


Theorem~\ref{cone_dp} uses policy cone $C_\pi$ and best response cone $\hat{C}_\pi$ to describe iterative properties of dynamic programming operator $D_\pi$ for fixed policy $\pi_a^{si}$.
(i), (ii), and (iii) suggest that $C_\pi$ is the monotonic and closed domain for $D_\pi$,
and $\hat{C}_\pi$ is closed if $\pi_a^{si}(x)$ is a Nash equilibrium for every state $x$.
That is, for any iteration starting from value function $V_s^i$ within $C_\pi$, the value function decreases monotonically as iterates and never leaves $C_\pi$,
which means the iteration converges to the apex $V_{\pi s}^i$ of $C_\pi$ by the monotone convergence theorem.
(iv) illustrates that iteration residual $V_s^i-D_\pi(V_s^i)$ of $D_\pi$ can be expressed by the distance $d_x^i$ pairing with the unique intersection $Y_{xs}^i$ described in Proposition~\ref{cone_prop} (ii),
where the set of distances $d_x^i$ indexed by state $x$ is used as a single vector $d_s^i$,
and the same goes for $\hat{D}_\pi$, $\hat{d}_x^i$, and $\hat{Y}_{xs}^i$.
In addition to value function $V_s^i$ iteratively converging to apex $V_{\pi s}^i$, (v) shows that if a scaled $\mathbf{1}_s$ is added in every iteration,
not only does $V_s^i$ converge to $V_{\pi s}^i$ adding a scaled $\mathbf{1}_s$, but $V_s^i-D_\pi(V_s^i)$ also converge to a scaled $\mathbf{1}_s$,
indicating that $\mathbf{1}_s$ is the direction of the monotone convergence of $D_\pi$.

\begin{theorem}[Equilibrium conditions]
    \label{cone_equil}
    The following equivalent conditions of equilibrium hold.
    \begin{enumerate}
        \item $\pi_a^{si}$ is a perfect equilibrium if and only if $V_{\pi s}^i \in \hat{C}_\pi$, and if and only if $V_{\pi s}^i$ is the global optimal point of dynamic programming problem \eqref{dp_problem}.
        \item Given value function $V_s^i$, for any $x \in \mathcal{S}$,
              $\pi_a^{si}(x)$ is a Nash equilibrium of the static game with utility function $(u_A^{si}+\gamma T_{s'A}^s V_{s'}^i)(x)$ if and only if $Y_{xs}^i=\hat{Y}_{xs}^i$,
              where $Y_{xs}^i$ and $\hat{Y}_{xs}^i$ are given by formula \eqref{cone_boundary} and \eqref{coneb_boundary} respectively.
    \end{enumerate}
\end{theorem}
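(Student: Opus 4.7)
The plan is to reduce both statements to identities between the two dynamic programming operators $D_\pi$ and $\hat{D}_\pi$, exploiting the pointwise inequality $D_\pi(V_s^i) \leq \hat{D}_\pi(V_s^i)$ that always holds because $\hat{D}_\pi$ is a maximum over player $i$'s actions whereas $D_\pi$ is the corresponding convex combination weighted by $\pi_a^{si}$.

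For (i), I would first unfold Definition~\ref{perfect_equil} into the two operator equations $V_s^i = D_\pi(V_s^i)$ and $V_s^i = \hat{D}_\pi(V_s^i)$. By the uniqueness in Proposition~\ref{cone_prop}(i), the first equation singles out $V_{\pi s}^i$, so $\pi_a^{si}$ is a perfect equilibrium iff $V_{\pi s}^i = \hat{D}_\pi(V_{\pi s}^i)$. The inclusion $V_{\pi s}^i \in \hat{C}_\pi$ reads $V_{\pi s}^i \geq \hat{D}_\pi(V_{\pi s}^i)$, and chaining this with $V_{\pi s}^i = D_\pi(V_{\pi s}^i) \leq \hat{D}_\pi(V_{\pi s}^i)$ forces equality, giving the first equivalence. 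For the second equivalence, the objective of \eqref{dp_problem} equals $V_s^i - D_\pi(V_s^i) = (1-\gamma)d_s^i$ by Theorem~\ref{cone_dp}(iv), which is nonnegative on $C_\pi \supseteq \hat{C}_\pi$ and vanishes exactly at $V_s^i = V_{\pi s}^i$ again by Proposition~\ref{cone_prop}(i). The feasible set $\hat{C}_\pi$ is nonempty by Proposition~\ref{cone_prop}(iii), so $V_{\pi s}^i$ solves \eqref{dp_problem} iff it is feasible iff $V_{\pi s}^i \in \hat{C}_\pi$.

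For (ii), the driving observation is that at state $x$ the quantity $D_\pi(V_s^i)(x)$ equals the convex combination $\sum_a \pi_a^{si}(x)\,\pi_{Aa}^{si-}(u_A^{si}+\gamma T_{s'A}^s V_{s'}^i)(x)$ of the same $a$-indexed terms whose maximum is $\hat{D}_\pi(V_s^i)(x)$. Hence $D_\pi(V_s^i)(x) = \hat{D}_\pi(V_s^i)(x)$ forces every $a$ in the support of $\pi_a^{si}(x)$ to attain the max, which by the purification theorem recalled in Section~\ref{ipm_sec1} is exactly the Nash equilibrium condition for the static game with utility $(u_A^{si}+\gamma T_{s'A}^s V_{s'}^i)(x)$. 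To convert this residual equality into $Y_{xs}^i = \hat{Y}_{xs}^i$, I would substitute $Y_{xs}^i = V_s^i - d_x^i\mathbf{1}_s$ into $(Y_{xs}^i - D_\pi(Y_{xs}^i))(x)=0$ and use the shift-covariance $D_\pi(V_s^i + c^i\mathbf{1}_s) = D_\pi(V_s^i) + \gamma c^i \mathbf{1}_s$ (from stochasticity of $T_{s'A}^s$), which yields $d_x^i = (V_x^i - D_\pi(V_s^i)(x))/(1-\gamma)$ and, by the same computation, $\hat{d}_x^i = (V_x^i - \hat{D}_\pi(V_s^i)(x))/(1-\gamma)$. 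Then $Y_{xs}^i = \hat{Y}_{xs}^i$ iff $d_x^i = \hat{d}_x^i$ iff $D_\pi(V_s^i)(x) = \hat{D}_\pi(V_s^i)(x)$, closing the loop.

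The only step requiring genuine care is the shift-covariance identity for $\hat{D}_\pi$: I need the maximum over $a$ to commute with the shift $c^i\mathbf{1}_s$, which works because each term $\pi_{Aa}^{si-}(\gamma T_{s'A}^s c^i\mathbf{1}_{s'}) = \gamma c^i$ is independent of $a$ and factors out of the max. Once that is established, all remaining obligations are book-keeping, since the monotone-convergence and cone machinery has already been supplied by Proposition~\ref{cone_prop} and Theorem~\ref{cone_dp}.
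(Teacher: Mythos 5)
Your proposal is correct and follows essentially the same route as the paper: part (i) reduces to comparing the convex combination $D_\pi(V_{\pi s}^i)=V_{\pi s}^i$ with the maximum $\hat{D}_\pi(V_{\pi s}^i)$ so that membership in $\hat{C}_\pi$ forces equality, and part (ii) reduces $Y_{xs}^i=\hat{Y}_{xs}^i$ to the residual identity $D_\pi(V_s^i)(x)=\hat{D}_\pi(V_s^i)(x)$ via the shift-covariance that underlies Proposition~\ref{cone_prop}~(ii) and Theorem~\ref{cone_dp}~(iv). The explicit formulas $d_x^i=(V_s^i-D_\pi(V_s^i))(x)/(1-\gamma)$ and $\hat{d}_x^i=(V_s^i-\hat{D}_\pi(V_s^i))(x)/(1-\gamma)$ you derive are exactly those used in the paper's argument, so no gap remains.
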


Theorem~\ref{cone_equil} uses $C_\pi$ and $\hat{C}_\pi$ to describe equivalent conditions of equilibria.
(ii) shows that policy $\pi_a^{si}(x)$ is a Nash equilibrium for state $x$ if and only if the unique intersects $Y_{xs}^i$ and $\hat{Y}_{xs}^i$ coincide for state $x$.
(i) shows that policy $\pi_a^{si}$ is a perfect equilibrium if and only if the apex $V_{\pi s}^i$ of $C_\pi$ is in $\hat{C}_\pi$.
Note that when $\pi_a^{si}$ is a perfect equilibrium, $Y_{xs}^i$, $\hat{Y}_{xs}^i$, and $V_{\pi s}^i$ all coincide,
which is exactly the definition of perfect equilibrium such that $V_{\pi s}^i$ is the value function of $\pi_a^{si}$ and $\pi_a^{si}(x)$ is a Nash equilibrium for every state $x$.
In addition, Theorem~\ref{cone_equil} also shows the relation between dynamic programming problem \eqref{dp_problem} and perfect equilibrium.

\subsection{Iteration in policy cone}\label{dp_sec2}
Theorem~\ref{cone_dp} shows that for fixed policy, the iteration of dynamic programming operator $D_\pi$ in the policy cone converges monotonically to the apex,
and Theorem~\ref{cone_equil} shows the equivalent condition for the apex of policy cone to be a perfect equilibrium value function.
We next study the possibility for iterative methods constructed by $D_\pi$ to converge to a perfect equilibrium value function.
First, we give Proposition~\ref{naive_iter} to explain why single-player dynamic programming cannot be simply generalized to multi-player dynamic games,
then we give Theorem~\ref{cone_iter} that asserts the sufficient and necessary condition for $D_\pi$ to converge to a perfect equilibrium value function.

\begin{proposition}
    \label{naive_iter}
    Let $V_s^i$ iterates by $V_{s,k+1}^i=D(V_{s,k}^i):=D_{\pi_k}(V_{s,k}^i)$, where $\pi_{a,k}^{si}$ satisfies $D_{\pi_k}(V_{s,k}^i)=\hat{D}_{\pi_k}(V_{s,k}^i)$.
    If $V_{s,0}^i\in C_{\pi_0}$, then the following statements satisfy $(i)\Rightarrow(ii)\Rightarrow(iii)\Leftrightarrow(iv)$.
    \begin{enumerate}
        \item $V_s^i\leq Y_s^i\rightarrow D(V_s^i)\leq D(Y_s^i)$ for every $V_s^i,Y_s^i\in O(\hat{V}_s^i)$, where $O(\hat{V}_s^i)$ is a set such that $V_{s,k}^i\in O(\hat{V}_s^i)$ for all $k$.
        \item $V_{s,k}^i$ converges to a perfect equilibrium value function by contraction mapping $D$, and $V_{s,k}^i\in C_{\pi_k}$ for all $k$.
        \item $V_{s,k}^i$ monotonically converges to a perfect equilibrium value function.
        \item $V_{s,k+1}^i\leq V_{s,k}^i\rightarrow D(V_{s,k+1}^i)\leq D(V_{s,k}^i)$ for all $k$.
    \end{enumerate}
\end{proposition}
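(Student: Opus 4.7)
I would dispose of $(iii)\Leftrightarrow(iv)$ first, since it only manipulates the monotone-decrease condition. The direction $(iii)\Rightarrow(iv)$ is immediate: whenever the iteration monotonically decreases, the hypothesis $V_{s,k+1}^i\leq V_{s,k}^i$ holds for every $k$, and the conclusion $D(V_{s,k+1}^i)\leq D(V_{s,k}^i)$ just re-reads as $V_{s,k+2}^i\leq V_{s,k+1}^i$. For $(iv)\Rightarrow(iii)$, I would bootstrap: the hypothesis $V_{s,0}^i\in C_{\pi_0}$ combined with Theorem~\ref{cone_dp}~(i) gives $V_{s,1}^i=D_{\pi_0}(V_{s,0}^i)\leq V_{s,0}^i$, and (iv) then propagates the inequality to every $k$ by induction. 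The resulting monotonically decreasing sequence is bounded below because the utilities $u_A^{si}$ are bounded, so it converges pointwise to a limit $V_s^{i\ast}$; identifying $V_s^{i\ast}$ as a perfect equilibrium value function uses a compactness argument on the selected policies $\pi_k$, detailed below.

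For $(ii)\Rightarrow(iii)$, the iteration's containment $V_{s,k}^i\in C_{\pi_k}$ asserted in (ii), combined with Theorem~\ref{cone_dp}~(i), directly yields $V_{s,k+1}^i=D_{\pi_k}(V_{s,k}^i)\leq V_{s,k}^i$; attaching the contraction convergence clause of (ii) upgrades this to monotone convergence toward a perfect equilibrium value function.

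The principal step is $(i)\Rightarrow(ii)$. The key observation is that $D$ enjoys a Blackwell-type shift invariance: adding $c^i$ to $V_s^i$ shifts the one-shot game's utilities uniformly by $\gamma c^i$ and hence leaves the Nash best-response set intact, so $D(V_s^i+c^i\mathbf{1}_s)=D(V_s^i)+\gamma c^i\mathbf{1}_s$. Combined with the monotonicity hypothesis of (i) on the neighborhood $O(\hat V_s^i)$, the classical Blackwell estimate then yields the sup-norm contraction $\|D(V_s^i)-D(Y_s^i)\|_\infty\leq\gamma\|V_s^i-Y_s^i\|_\infty$, so Banach's fixed point theorem delivers a unique $V_s^{i\ast}\in O(\hat V_s^i)$ with $D(V_s^{i\ast})=V_s^{i\ast}$. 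The containment $V_{s,k}^i\in C_{\pi_k}$ is then secured inductively: if $V_{s,k}^i\in C_{\pi_k}$, then $V_{s,k+1}^i\leq V_{s,k}^i$, and monotonicity of $D$ gives $V_{s,k+2}^i=D(V_{s,k+1}^i)\leq D(V_{s,k}^i)=V_{s,k+1}^i$, so Theorem~\ref{cone_dp}~(i) places $V_{s,k+1}^i\in C_{\pi_{k+1}}$.

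The main obstacle I anticipate is identifying the limit $V_s^{i\ast}$ as a perfect equilibrium value function under (iv), where the contraction machinery is unavailable. The subtlety is that $D$ is really the composition of a value-parametrized Nash equilibrium selection $V\mapsto\pi(V)$ with the affine operator $D_{\pi(V)}$, so $\pi_k$ varies along the iteration and the limit $V_s^{i\ast}$ is only a fixed point of a moving family. I would handle this with a Berge-type argument: the equilibrium correspondence $V\mapsto\{\pi:D_\pi(V)=\hat D_\pi(V)\}$ is upper hemicontinuous and compact-valued on the product of simplices, so the sequence $\pi_k$ has a cluster policy $\pi_\ast$ satisfying $V_s^{i\ast}=D_{\pi_\ast}(V_s^{i\ast})=\hat D_{\pi_\ast}(V_s^{i\ast})$, at which point Theorem~\ref{cone_equil}~(i) identifies $\pi_\ast$ as a perfect equilibrium with value function $V_s^{i\ast}$.
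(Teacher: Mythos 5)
Your proposal is correct and takes essentially the same route as the paper's own proof: the Blackwell-type monotonicity-plus-shift argument yielding the $\gamma$-contraction for $(i)\Rightarrow(ii)$, the inductive cone containment $V_{s,k}^i\in C_{\pi_k}$ giving monotone decrease, and monotone convergence with a lower bound (the paper bounds $\bigcup_\pi C_\pi$ below by $\min u/(1-\gamma)$) for $(iv)\Rightarrow(iii)$. The only substantive difference is that you spell out the identification of the limit as a perfect-equilibrium value function via a cluster policy and continuity of $D_\pi$, $\hat{D}_\pi$, a step the paper dispatches in a single opening remark.
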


Using Theorem~\ref{cone_dp} (iii), it can be inferred that Proposition~\ref{naive_iter} (iv) implies $V_{s,k+1}^i\in \hat{C}_{\pi_k}\rightarrow V_{s,k+1}^i \in C_{\pi_{k+1}}$.
It can be verified on the graph of policy cone that this implication formula generally does not hold, since $\pi_k$ and $\pi_{k+1}$ generally does not have a strong relation.
Consequently, none of the four statements hold in dynamic games.
In particular, neither does the first half of (ii) that operator $D$ is a contraction mapping hold on its own, since previous research already shows that operator $D$ fails to converge in dynamic games\cite{mdp_game,nash_q}.
However, as the single-player degeneration of operator $D$, the Bellman operator $D$ in MDPs satisfies (i), which can be used to prove the first half of (ii) that Bellman operator $D$ is a contraction mapping.
Furthermore, if there is also $V_{s,0}^i\in C_{\pi_0}$ for Bellman operator $D$, then all the four statements hold as Proposition~\ref{naive_iter} shows, where perfect equilibrium degenerates to optimal policy.
In summary, Bellman operator $D$ can achieve both monotone convergence and contraction mapping convergence in single-player case, but can achieve neither of the convergence in multi-player case.
Proposition~\ref{naive_iter} is not necessary to construct our iterative approximation method in this paper, however, it provides a certain perspective why the Bellman operator and value iteration cannot be simply generalized to dynamic games.

Then we construct an iteration of dynamic programming operator $D_\pi$ that converges to a perfect equilibrium value function.
Note that not only is $V_s^i$ bounded within $C_\pi$ under the iteration of $D_\pi$ for fixed $\pi_a^{si}$, but the apex $V_{\pi s}^i$ of $C_\pi$ is also bounded with respect to the variable $\pi_a^{si}$.
This means that if $V_{s,k}^i \in C_{\pi_k}$ for all $k\in \mathbb{N}$ and $V_{s,k+1}^i=D_{\pi_k}(V_{s,k}^i)$, then $\{V_{s,k}^i\}_{k\in \mathbb{N} }$ still satisfies the monotone convergence property.
This allows us to construct the cone interior convergence conditions.
First, we add a large enough $m_k^i\mathbf{1}_s$ to $V_{s,k}^i$ at every step to keep $V_{s,k}^i+m_k^i\mathbf{1}_s$ in the policy cone $C_{\pi_k}$, and let $m_k^i$ converge to $0$, so that monotone convergence property is maintained and $V_{s,k}^i$ converges to the apex of $C_{\pi_k}$.
Second, the apex that $V_{s,k}^i$ converges to has to be in the best response cone $\hat{C}_{\pi_k}$ for the apex to be a perfect equilibrium value function, thus we further require $V_{s,k}^i+m_k^i\mathbf{1}_s$ be in the best repsonse cone $\hat{C}_{\pi_k}$.
Summarizing these gives us Theorem~\ref{cone_iter}.

\begin{theorem}[Cone interior convergence conditions]
    \label{cone_iter}
    Let $V_s^i$ iterates by
    \begin{equation}
        \label{dp_formula}
        V_{s,k+1}^i=D_{\pi_k}(V_{s,k}^i+m_k^i\mathbf{1}_s),
    \end{equation}
    where $\{\pi_{a,k}^{si}\}_{k\in \mathbb{N} }$ is a sequence of policies,
    and $\{m_k^i\}_{k\in \mathbb{N} }$ is a sequence sufficiently large such that $V_{s,k}^i+m_k^i\mathbf{1}_s \in \hat{C}_{\pi_k}$ for all $k\in \mathbb{N}$, which always exists.
    Then, $V_{s,k}^i$ converges to a perfect equilibrium value function if and only if $\lim_{k \to \infty}m_k^i=0$.
\end{theorem}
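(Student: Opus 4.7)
The plan is to exploit that $T_{\pi s'}^s$ is row-stochastic ($T_{\pi s'}^s \mathbf{1}_{s'} = \mathbf{1}_s$), which gives the shift identities $D_\pi(V + m\mathbf{1}_s) = D_\pi(V) + \gamma m \mathbf{1}_s$ and likewise $\hat{D}_\pi(V + m\mathbf{1}_s) = \hat{D}_\pi(V) + \gamma m \mathbf{1}_s$. These rewrite the iteration as $V_{s,k+1}^i = D_{\pi_k}(V_{s,k}^i) + \gamma m_k^i \mathbf{1}_s$ and rewrite the hypothesis $V_{s,k}^i + m_k^i \mathbf{1}_s \in \hat{C}_{\pi_k}$ as $\hat{D}_{\pi_k}(V_{s,k}^i) - V_{s,k}^i \leq (1-\gamma) m_k^i \mathbf{1}_s$. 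These two algebraic forms, the $\gamma$-contraction of $D_{\pi_k}$ about its fixed point $V_{\pi_k s}^i$ (Proposition~\ref{cone_prop} (i)), and the compactness of $\mathcal{P}$ carry the whole argument.

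For the $\Leftarrow$ direction, assume $m_k^i \to 0$. I would first establish uniform boundedness of $\{V_{s,k}^i\}$: the inclusion $V_{s,k}^i + m_k^i \mathbf{1}_s \in C_{\pi_k}$ together with Proposition~\ref{cone_prop} (i) yields the lower bound $V_{s,k}^i + m_k^i \mathbf{1}_s \geq V_{\pi_k s}^i$, and $V_{\pi_k s}^i$ is uniformly bounded since $\pi_k$ ranges in the compact $\mathcal{P}$ with bounded utilities; Theorem~\ref{cone_dp} (i) in turn gives $V_{s,k+1}^i \leq V_{s,k}^i + m_k^i \mathbf{1}_s$. By compactness, extract a subsequence along which $\pi_{k_n} \to \pi_*$ and $V_{s,k_n}^i \to V_*^i$. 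Passing to the limit in the iteration (using $m_k^i \to 0$) yields $V_*^i = D_{\pi_*}(V_*^i)$, so $V_*^i = V_{\pi_* s}^i$, and passing to the limit in the rewritten constraint yields $V_*^i \in \hat{C}_{\pi_*}$. By Theorem~\ref{cone_equil} (i), $\pi_*$ is a perfect equilibrium with value function $V_*^i$. To upgrade from subsequential to full convergence, I would use the $\gamma$-contraction together with $m_k^i \to 0$ to show that all accumulation points must coincide.

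For the $\Rightarrow$ direction, assume $V_{s,k}^i \to V_*^i = V_{\pi^* s}^i$ for some perfect equilibrium $\pi^*$. Then $V_{s,k+1}^i - V_{s,k}^i = D_{\pi_k}(V_{s,k}^i) - V_{s,k}^i + \gamma m_k^i \mathbf{1}_s \to 0$, and by compactness any accumulation value $(\pi_*, m_*^i)$ of $(\pi_k, m_k^i)$ satisfies $V_*^i - D_{\pi_*}(V_*^i) = \gamma m_*^i \mathbf{1}_s$. Inverting $I - \gamma T_{\pi_* s'}^s$ using $(I - \gamma T_{\pi_* s'}^s)^{-1} \mathbf{1}_s = (1-\gamma)^{-1} \mathbf{1}_s$ then gives $V_*^i = V_{\pi_* s}^i + \frac{\gamma m_*^i}{1-\gamma} \mathbf{1}_s$. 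Together with $V_*^i = V_{\pi^* s}^i$ and the limiting containment $V_*^i + m_*^i \mathbf{1}_s \in \hat{C}_{\pi_*}$, I would force $m_*^i = 0$ at every accumulation value, whence $m_k^i \to 0$.

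The hard part will be the final step of the backward direction: ruling out $m_*^i > 0$ when the accumulation policy $\pi_*$ happens to differ from the equilibrium $\pi^*$ associated with $V_*^i$. The relation $V_{\pi_* s}^i = V_{\pi^* s}^i - \frac{\gamma m_*^i}{1-\gamma} \mathbf{1}_s$ requires two value functions to differ by a uniform shift along $\mathbf{1}_s$; combined with $V_{\pi_* s}^i + \frac{m_*^i}{1-\gamma} \mathbf{1}_s \in \hat{C}_{\pi_*}$ and the best-response characterization of perfect equilibria in Theorem~\ref{cone_equil} (i), this should pin down $m_*^i = 0$. A secondary subtlety is the upgrade from subsequential to full convergence in the forward direction, which I expect to follow from the contractive structure combined with $m_k^i \to 0$ collapsing all accumulation points of $V_{s,k}^i$ onto a single limit.
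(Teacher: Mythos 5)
Your preliminary reductions (the shift identities, the lower bound $V_{s,k}^i+m_k^i\mathbf{1}_s\geq V_{\pi_k s}^i$, and the inequality $V_{s,k+1}^i\leq V_{s,k}^i+m_k^i\mathbf{1}_s$) are correct, but the argument breaks exactly where the real work starts, in both directions. In the $\Leftarrow$ direction, extracting a subsequence with $\pi_{k_n}\to\pi_*$ and $V_{s,k_n}^i\to V_*^i$ and then ``passing to the limit in the iteration'' is circular: the iteration \eqref{dp_formula} relates $V_{s,k_n}^i$ to $V_{s,k_n+1}^i$, and nothing guarantees $V_{s,k_n+1}^i\to V_*^i$ along that subsequence, so $V_*^i=D_{\pi_*}(V_*^i)$ does not follow. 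Your proposed repair --- that the $\gamma$-contraction of $D_{\pi_k}$ plus $m_k^i\to 0$ collapses all accumulation points --- also fails, because the policy changes at every step: each $D_{\pi_k}$ contracts toward its own apex $V_{\pi_k s}^i$, there is no common fixed point for an arbitrary sequence $\{\pi_{a,k}^{si}\}$, and contractivity of the individual operators gives no control on the composed, policy-varying iteration (this is precisely the failure mode isolated in Proposition~\ref{naive_iter}). The paper never leaves the monotone route that you set up and then abandoned: from $V_{s,k}^i+m_k^i\mathbf{1}_s\in\hat{C}_{\pi_k}\subseteq C_{\pi_k}$ one gets $V_{s,k+1}^i\leq V_{s,k}^i+m_k^i\mathbf{1}_s$, and since $m_k^i\to 0$ while the apexes (hence $\bigcup_\pi C_\pi$) are uniformly bounded below, the sequence converges by the monotone convergence theorem; only after full convergence is established does $m_k^i\to 0$ identify the limit as $V_{\tilde{\pi} s}^i\in\hat{C}_{\tilde{\pi}}$, i.e.\ a perfect equilibrium value function by Theorem~\ref{cone_equil}~(i).

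In the $\Rightarrow$ direction you stop at the crux: from $V_*^i=V_{\pi_* s}^i+\frac{\gamma m_*^i}{1-\gamma}\mathbf{1}_s$ together with $V_*^i+m_*^i\mathbf{1}_s\in\hat{C}_{\pi_*}$ you still must exclude $m_*^i>0$, and ``this should pin down $m_*^i=0$'' is a hope, not a proof --- nothing you wrote rules out an accumulation pair $(\pi_*,m_*^i)$ with $m_*^i>0$ whose value function sits a uniform shift below the equilibrium value function. (To be fair, the paper's own treatment of this direction is a one-line assertion, so you are not overlooking a technique the paper supplies; but as submitted your proposal establishes neither implication.) A smaller omission: your accumulation argument needs $\{m_k^i\}$ bounded, which should be stated explicitly; it does follow from convergence of $V_{s,k}^i$ via $\gamma m_k^i\mathbf{1}_s=V_{s,k+1}^i-D_{\pi_k}(V_{s,k}^i)$ and boundedness of $D_{\pi_k}(V_{s,k}^i)$ over the compact policy space.
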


Theorem~\ref{cone_iter} points out a dynamic programming method that iteratively converges to a perfect equilibrium sufficiently and necessarily, where $m_k^i$ and $\pi_{a,k}^{si}$ are required in each iteration step.
First, for simplification, we let $m_k^i\equiv m^i$, thus $m^i$ only needs to be large enough so that $V_{s,k}^i+m^i\mathbf{1}_s \in \hat{C}_{\pi_k}$,
and $V_{s,k}^i$ converges to the value function $V_{\pi s}^i$ plus a scaled $\mathbf{1}_s$ as stated in Theorem~\ref{cone_dp} (v).
Second, we ensure that $\pi_{a,k}^{si}(x)$ converges to a Nash equilibrium for every state $x$.
Then we've transformed the problem of approximating a perfect equilibrium to the problem of approximating a set of Nash equilibria, which is solved in the last section.

\section{FPTAS for perfect equilibria of dynamic games}\label{hybrid_sec}
\subsection{Line search on the equilibrium bundle of dynamic games}
In this section, we combine the two methods in the last two sections to obtain a hybrid iteration of dynamic programming and interior point method that solves the regret minimization problem \eqref{regmin_problem} of dynamic games, and equivalently approximates any perfect equilibrium of any dynamic game.
First, we add the state index $s$ back to the tensors in the static game case, and set the utility $U_A^{si}=u_A^{si}+\gamma T_{s'A}^s V_{s'}^i$, then we have the following theorem.

\begin{theorem}
    \label{per_equil_thm}
    Let $\bar{\mu}_a^{si}(V_s^i,\pi_a^{si})$ be a family of canonical sections such that for each $x\in\mathcal{S}$,
    $\bar{\mu}_a^{si}(V_s^i,\pi_a^{si})(x)$ is the canonical section of the equilibrium bundle given by $U_A^{si}(x)$,
    where $U_A^{si}=u_A^{si}+\gamma T_{s'A}^s V_{s'}^i$.
    Then there is
    \begin{equation}
        \label{cano_dpres}
        \mathbf{1}_a\bar{\mu}_a^{si}(V_s^i,\pi_a^{si})=\hat{D}_\pi(V_s^i)-D_\pi(V_s^i),
    \end{equation}
    and the following statements are equivalent.
    \begin{enumerate}
        \item $\pi_a^{si}$ is a perfect equilibrium.
        \item $\bar{\mu}_a^{si}(V_{\pi s}^i,\pi_a^{si})=0$.
        \item $\hat{D}_\pi(V_{\pi s}^i)=D_\pi(V_{\pi s}^i)$.
        \item $(\pi_a^{si},V_{\pi s}^i)$ is a global optimal point of regret minimization problem \eqref{regmin_problem} of dynamic games.
    \end{enumerate}
\end{theorem}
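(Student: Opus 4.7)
The plan is to establish the identity relating the canonical section to the two dynamic programming operators by direct computation, and then to use it together with results from the preceding two sections to chain the four equivalences. For the identity, I would expand the definition $\bar{\mu}_a^{si} = \pi_a^{si} \circ (\max_a \pi_{Aa}^{si-}U_A^{si} - \pi_{Aa}^{si-}U_A^{si})$ with $U_A^{si}=u_A^{si}+\gamma T_{s'A}^s V_{s'}^i$ and contract against $\mathbf{1}_a$. The $\max_a$ term is independent of $a$, so it factors out and combines with $\mathbf{1}_a\pi_a^{si}=1$ to yield $\hat{D}_\pi(V_s^i)$; the subtracted term collapses through $\pi_a^{si}\pi_{Aa}^{si-}=\pi_A^{s}$ (since $\sum_a\pi_a^{si}I_{a_ia}=\pi_{a_i}^{si}$ absorbs the Iverson index back into $\pi_A^s$) to give $\pi_A^s U_A^{si}=D_\pi(V_s^i)$. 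This establishes \eqref{cano_dpres}.

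Next I would handle (i) $\Leftrightarrow$ (iii) by unpacking Definition~\ref{perfect_equil}. The equation $V_s^i=\pi_A^s(u_A^{si}+\gamma T_{s'A}^s V_{s'}^i)$ is $V_s^i=D_\pi(V_s^i)$, which forces $V_s^i=V_{\pi s}^i$ by the uniqueness stated with the definition of $V_{\pi s}^i$; the equation $V_s^i=\max_a\pi_{Aa}^{si-}(\cdots)$ is then $V_{\pi s}^i=\hat{D}_\pi(V_{\pi s}^i)$, which combined with $V_{\pi s}^i=D_\pi(V_{\pi s}^i)$ is exactly (iii). The converse direction reverses these substitutions. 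For (ii) $\Leftrightarrow$ (iii), I would use nonnegativity: since $\pi_a^{si}\geq 0$ and $\max_a\pi_{Aa}^{si-}U_A^{si}\geq\pi_{Aa}^{si-}U_A^{si}$ for every $a$, each component $\bar{\mu}_a^{si}$ is nonnegative, so componentwise vanishing is equivalent to $\mathbf{1}_a\bar{\mu}_a^{si}=0$, which by \eqref{cano_dpres} evaluated at $V_s^i=V_{\pi s}^i$ is exactly (iii).

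For (i) $\Leftrightarrow$ (iv) I would substitute the equality constraints into the objective. Plugging $v_s^i=V_s^i$ and $r_a^{si}=V_s^i-\pi_{Aa}^{si-}U_A^{si}$ into $w^s\pi_a^{si}r_a^{si}$ and using $\mathbf{1}_a\pi_a^{si}=1$ and $\pi_a^{si}\pi_{Aa}^{si-}=\pi_A^s$ transforms the objective into $w^s(V_s^i-D_\pi(V_s^i))$, while the sign constraint $r_a^{si}\geq 0$ becomes $V_s^i\geq\hat{D}_\pi(V_s^i)$, i.e., $V_s^i\in\hat{C}_\pi$. Because $\hat{D}_\pi(V_s^i)\geq D_\pi(V_s^i)$ pointwise (the latter is a convex combination of the quantities over which the former takes a maximum) and $w^s>0$, the objective is nonnegative on the feasible region. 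For the forward direction, perfect equilibrium together with Theorem~\ref{cone_equil}~(i) gives $V_{\pi s}^i\in\hat{C}_\pi$ and $V_{\pi s}^i=D_\pi(V_{\pi s}^i)$, so $(\pi_a^{si},V_{\pi s}^i)$ is feasible and achieves objective zero, hence is globally optimal. For the converse, global optimality entails feasibility, yielding $V_{\pi s}^i\geq\hat{D}_\pi(V_{\pi s}^i)\geq D_\pi(V_{\pi s}^i)=V_{\pi s}^i$, which forces $\hat{D}_\pi(V_{\pi s}^i)=D_\pi(V_{\pi s}^i)$, namely (iii), hence (i).

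The main obstacle I anticipate is purely bookkeeping in the Einstein summation: reducing $\mathbf{1}_a\pi_a^{si}\pi_{Aa}^{si-}U_A^{si}$ to $\pi_A^s U_A^{si}=D_\pi(V_s^i)$ requires tracking how the Iverson factor $I_{a_ia}$ inside $\pi_{Aa}^{si-}$ collapses the free index $a$ into the $i$-th component of $A$, and that this same identity is what simultaneously converts the substituted objective into $w^s(V_s^i-D_\pi(V_s^i))$. Once that reduction is nailed down, the remaining pieces of the argument are short appeals to the definition of perfect equilibrium, the inequality $\hat{D}_\pi\geq D_\pi$, and Theorem~\ref{cone_equil}~(i).
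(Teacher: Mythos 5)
Your proposal is correct and follows essentially the same route as the paper's proof: establish the identity \eqref{cano_dpres} by the same $\mathbf{1}_a$-contraction using $\pi_a^{si}\pi_{Aa}^{si-}=\pi_A^s$, then chain the equivalences using the definition of perfect equilibrium, Theorem~\ref{cone_equil}~(i), and nonnegativity of the objective of \eqref{regmin_problem}. The only minor deviation is that you link (ii) to (iii) through componentwise nonnegativity of $\bar{\mu}_a^{si}$ rather than invoking Theorem~\ref{equil_bund_equiv}~(iii) as the paper does, which is an equally valid (and slightly more self-contained) step.
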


Theorem~\ref{per_equil_thm} summarizes the equilibrium conditions given by the last two sections.
It first points out the connection between the canonical section and the two dynamic programming operators, and then shows the equivalent condition (ii) and (iii) of perfect equilibria based on this connection.
(iv) combines the equivalent conditions about the global optimal point of the two subproblems of problem \eqref{regmin_problem} given by Theorem~\ref{regmin_equiv} and Theorem~\ref{cone_equil} to obtain the equivalent condition about the global optimal point of problem \eqref{regmin_problem}.

Then, we combine the two methods in the last two sections to obtain a hybrid iteration, where $V_s^i$ iterates by dynamic programming \eqref{dp_formula}, $\pi_a^{si}$ iterates by projected gradient \eqref{projgrad_formula}, and $\mu_a^{si}$ iterates by canonical section descent \eqref{canosect_formula}.
However, note that projected gradient can update back onto the equilibrium bundle only after an infinitesimal deviation,
and there is a problem that $V_s^i$ moves more than an infinitesimal step in every iteration of the dynamic programming, except for the case where $V_s^i$ is nearly converged to $V_{\pi s}^i$ plus a scaled $\mathbf{1}_s$.
Thus second, we set $U_{\pi A}^{si}=u_A^{si}+\gamma T_{s'A}^s V_{\pi s'}^i$ related to $\pi_a^{si}$, which gives us the following results about dynamic games.

\ukktdy*
\equilbundl*

\begin{theorem}[Existence and oddness theorems]
    \label{exist_odd_thm}
    The following statements about unbiased KKT conditions \eqref{ukkt_dy_equ} of dynamic games hold.
    \begin{enumerate}
        \item For every $\mu_a^{si}$, there is at least one solution of \eqref{ukkt_dy_equ}.
        \item For every $\mu_a^{si}$, there are almost always an odd number of solutions of \eqref{ukkt_dy_equ}.
    \end{enumerate}
\end{theorem}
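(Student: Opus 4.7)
The plan is to lift both parts from the static-game results, Theorem~\ref{ipm_theo}(iii) and Theorem~\ref{odd_thm}(ii), by exploiting the fact that for every $\pi \in \mathcal{P}$ the value function $V_{\pi s}^i$ is uniquely and rationally determined by $\pi$ via the linear system $u_\pi^s = (I_{s'}^s - \gamma T_{\pi s'}^s)V_{\pi s'}$, which is invertible by Lemma~\ref{stoc_mat_lemm}(ii). Once $V_\pi$ is substituted into $U_{\pi A}^{si} = u_A^{si} + \gamma T_{s'A}^s V_{\pi s'}^i$, the equations in \eqref{ukkt_dy_equ} decouple across states: at each fixed state $x$ they are exactly the static-game unbiased KKT conditions \eqref{ukkt_equ} applied to the static game whose utility tensor is $U_{\pi A}^{si}(x)$.

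For existence, I would define a composite Brouwer function $M^\Gamma : \{\mu_a^{si}|\mu_a^{si}\geq 0\} \to (\mathcal{P} \to \mathcal{P})$ as follows: given $\pi$, first solve the Bellman linear system for $V_\pi$, then compute $U_{\pi A}^{si}$, and finally set $M^\Gamma(\mu_a^{si})(\pi_a^{si})$ equal, state-by-state, to the static-game Brouwer function of Definition~\ref{brouwer_def} applied with the utility $U_{\pi A}^{si}(x)$ and barrier $\mu_a^{xi}$. By Theorem~\ref{ipm_theo}(i) this map is well-defined and continuous in $\pi$ (continuity of $V_\pi$ follows from Cramer's rule), and $\mathcal{P}$ is compact and convex. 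Brouwer's fixed point theorem then furnishes a fixed point, which by Definition~\ref{ukktdy} is exactly a solution of \eqref{ukkt_dy_equ}.

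For oddness, I would treat $V_s^i$ as additional unknowns and adjoin the Bellman equation $V_s^i - \pi_A^s U_A^{si} = 0$ (with $U_A^{si} = u_A^{si} + \gamma T_{s'A}^s V_{s'}^i$) to \eqref{ukkt_dy_equ}. After eliminating $r$ via $r_a^{si} = v_s^i - \pi_{Aa}^{si-}U_A^{si}$ and identifying $\hat{\pi} = \pi$, the system
\[
\mu_a^{si} = \pi_a^{si}\bigl(v_s^i - \pi_{Aa}^{si-}U_A^{si}\bigr),\qquad V_s^i = \pi_A^s U_A^{si},\qquad \mathbf{1}_a \pi_a^{si} = \mathbf{1}^{si}
\]
is polynomial in $(\pi, v, V)$, and because $I - \gamma T_\pi$ is invertible, solutions of this enlarged polynomial system biject with solutions of \eqref{ukkt_dy_equ}. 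As in Theorem~\ref{odd_thm}(i) the induced map from the solution variety to $\mu_a^{si}$ is polynomial, hence analytic, and its inverse satisfies the Newton--Puiseux theorem. Running the same parity argument used in Theorem~\ref{odd_thm}(ii) along any ray $\{\hat{\mu}_a^{si}\mu'|\mu'>0\}$ then pairs the branches of the algebraic curve, leaving exactly one branch attached to the unique far-field endpoint (the dynamic-game analog of Theorem~\ref{equil_bund_theo}(i)), so the count is odd off a measure-zero critical locus of $\mu_a^{si}$.

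The main obstacle I expect is the bookkeeping of the enlarged polynomial system: I need to confirm that introducing $V$ as a formal variable does not spawn spurious extra solutions (handled by the Bellman equation determining $V$ uniquely from $\pi$) and that the singular locus in the extended $(\pi, v, V)$ variety still projects to a measure-zero set in $\mu$-space. Both points reduce to the invertibility of $I - \gamma T_\pi$ on $\mathcal{P}$, which makes $V_\pi$ a smooth rational graph over $\mathcal{P}$ and allows the state-wise Brouwer and Newton--Puiseux machinery to be applied verbatim.
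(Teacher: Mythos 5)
Your proposal is correct and follows essentially the same strategy as the paper: lift the static results by noting that $V_\pi$ is the unique, continuous (rational) function of $\pi$ determined by the invertible system $u_\pi^s=(I_{s'}^s-\gamma T_{\pi s'}^s)V_{\pi s'}$, then reuse the Brouwer fixed-point argument for existence and the Newton--Puiseux/parity argument for oddness. The only substantive difference is in how you obtain a polynomial system for part (ii): you adjoin $V_s^i$ as an extra unknown together with the Bellman equation $V_s^i=\pi_A^s U_A^{si}$, and argue that invertibility of $I-\gamma T_\pi$ makes this enlargement a bijection on solutions, whereas the paper keeps $V_\pi$ eliminated and instead clears the rational dependence by multiplying the second line of \eqref{ukkt_dy_equ} by the determinant $\lvert I_{s'}^s-\gamma T_{\pi s'}^s\rvert$, which is nonzero for every $\pi\in\mathcal{P}$ by Lemma~\ref{stoc_mat_lemm}. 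Both devices rest on exactly the same lemma and yield the same algebraic-curve/parity argument along rays $\{\hat{\mu}_a^{si}\mu'\}$; your version carries a few more variables but makes the no-spurious-solutions point explicit, while the paper's version keeps the dimension of the polynomial map identical to the static case so that Theorem~\ref{odd_thm} and Theorem~\ref{equil_bund_theo}~(i) transfer verbatim. Your flagged obligations (the bijection with the enlarged variety, the dynamic analog of the far-field starting point, and the measure-zero singular locus) are precisely the points the paper's two-line proof leaves implicit, and they do go through for the reason you give: $V_\pi$ is bounded and smooth over all of $\mathcal{P}$ because $I-\gamma T_\pi$ is uniformly invertible there.
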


There are three facts about unbiased KKT conditions \eqref{ukkt_dy_equ} about dynamic games.
First, it derives a Brouwer function $\hat{\pi}_a^{si}=M(\mu_a^{si})(\pi_a^{si})$, such that the Brouwer's fixed point theorem applies, leading to the existence theorem.
Second, it derives a polynomial function from $\pi_a^{si}$ to $\mu_a^{si}$ by multiplying the non-zero determinant $|I_{s'}^s-\gamma T_{\pi s'}^s|$ to the second line, such that the Newton-Puiseux theorem applies, leading to the oddness theorem.
Third, its solution space can be structured as the equilibrium bundle, which is a fiber bundle that formalizes perfect equilibria as the zero points of its canonical section, and formalizes the hybrid iteration of dynamic programming and interior point method as a line search on it.

\begin{algorithm}
    \caption{A hybrid iteration of dynamic programming and interior point method}
    \label{algo}
    \begin{algorithmic}[1]
        \Require Dynamic game $\Gamma=(N,\mathcal{S},\mathcal{A},T,u,\gamma)$.
        \State (Optional) Sample in the policy space $\mathcal{P}$ and calculate canonical sections $\bar{\mu}_a^{si}(\pi_a^{si})$ to search for potential perfect equilibria globally.
        \Require Initial policy $\pi_a^{si}$, either random or around a potential perfect equilibrium.
        \State Set initial barrier parameter $\mu_a^{si}=\pi_a^{si}\mu'$ for sufficiently large $\mu'$, so that $\mu_a^{si}$ is around the fiber over $\pi_a^{si}$ by Theorem~\ref{equil_bund_theo} (i).
        \State Set initial value function $V_s^i=m^i\mathbf{1}_s$ for sufficiently large $m^i$, so that $V_s^i \in \hat{C}_\pi$ by Proposition~\ref{cone_prop} (iii).
        \Repeat
        \Repeat
        \State Set $U_A^{si}=u_A^{si}+\gamma T_{s'A}^s (V_{s'}^i+m^i\mathbf{1}_{s'})$ to apply dynamic programming \eqref{dp_formula}.
        \State Calculate $\pi_{Aaa'}^{sij-}U_A^{si}$ and $\pi_{Aa}^{si-}U_A^{si}$, either model-based or model-free.
        \State Compute $v_s^i$ by solving equation \eqref{ukkt_equ1} as stated in Theorem~\ref{ipm_theo} (i).
        \State Calculate regret $r_a^{si}=v_s^i-\pi_{Aa}^{si-}U_A^{si}$.
        \State Calculate dual policy $\hat{\pi}_a^{si}=\mu_a^{si}/r_a^{si}$ and dual regret $\hat{r}_a^{si}=\mu_a^{si}/\pi_a^{si}$.
        \State Calculate projected gradient ${\rm pg}_a^{si}$ using equation \eqref{projgrad_formula}.
        \State Calculate residual $dV_s^i:=D_\pi(V_s^i+m^i\mathbf{1}_s)-V_s^i=v_s^i-\pi_a^{si}r_a^{si}-V_s^i$ of dynamic programming \eqref{dp_formula}.
        \State Update $\pi_a^{si}$ by ${\rm pg}_a^{si}$, and update $V_s^i$ by $dV_s^i$.
        \Until{$\pi_a^{si}-\hat{\pi}_a^{si}$, $r_a^{si}-\hat{r}_a^{si}$, and ${\rm Angle}(dV_s^i,\mathbf{1}_s)$ all converge to $0$, so that $(\pi_a^{si},\mu_a^{si})$ is on the equilibrium bundle.}
        \State Compute differential $(d\pi_{a'}^{sj}/\pi_{a'}^{sj})/(d\mu_{a''}^{sk}/\mu_{a''}^{sk})$ by solving equation \eqref{tan_vec}.
        \State Calculate canonical section $\bar{\mu}_a^{si}(\pi_a^{si})=\pi_a^{si} \circ (\max_a \pi_{Aa}^{si-}U_{\pi A}^{si}-\pi_{Aa}^{si-}U_{\pi A}^{si})$.
        \State Update $\mu_a^{si}$ using canonical section descent \eqref{canosect_formula} to hop to another fiber in the neighborhood and move along the fiber to avoid potential singular points,
        and update $\pi_a^{si}$ along differential $(d\pi_{a'}^{sj}/\pi_{a'}^{sj})/(d\mu_{a''}^{sk}/\mu_{a''}^{sk})$.
        \Until{$\bar{\mu}_a^{si}(\pi_a^{si})$ converges to $0$, so that the fiber over a perfect equilibrium is reached.}
    \end{algorithmic}
\end{algorithm}

\begin{figure}
    \centering
    \includegraphics[width=0.9\textwidth]{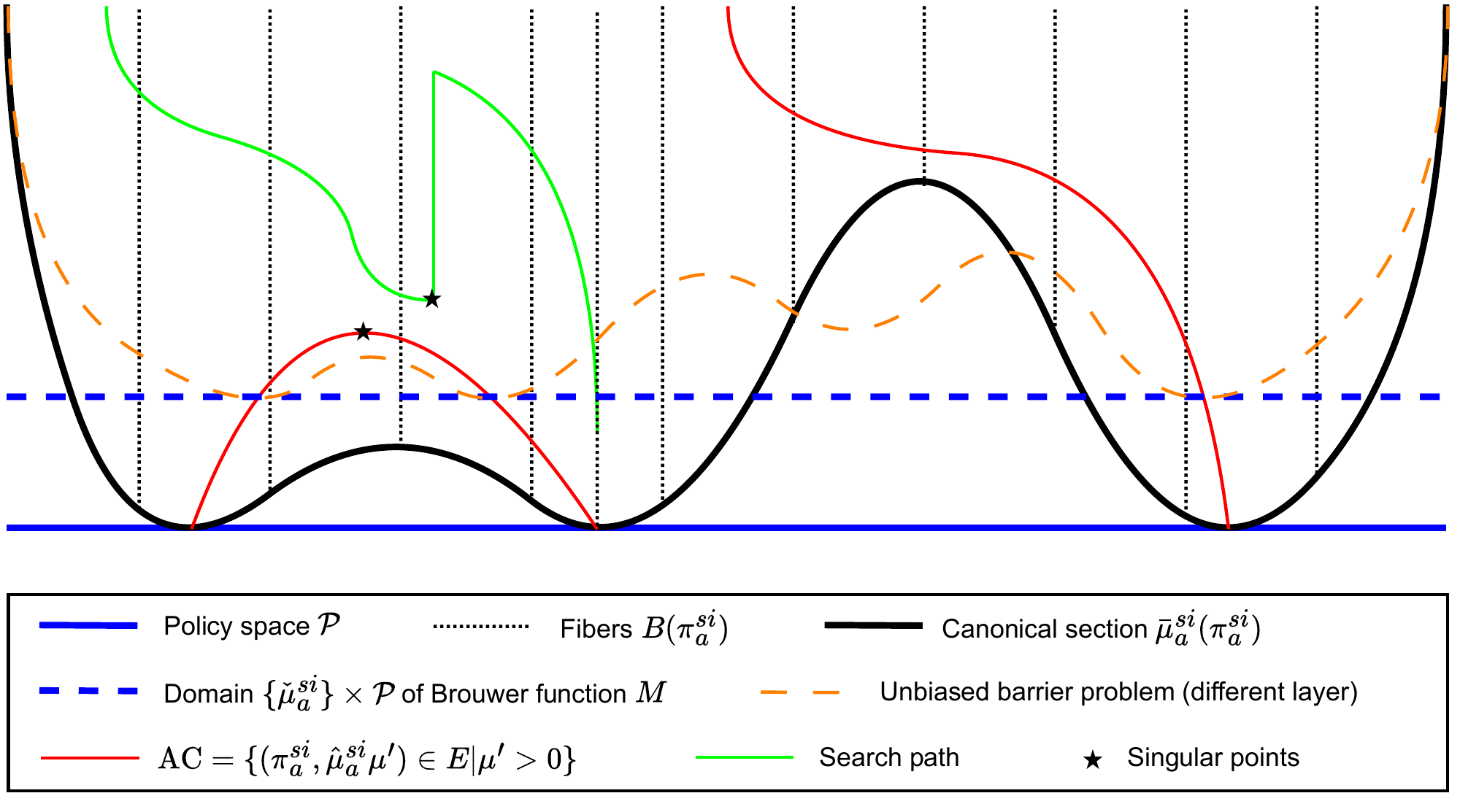}
    \caption{
    Sketch graph of the equilibrium bundle.
    This sketch graph is based on the joint space $\mathcal{P}\times\{\mu_a^{si}|\mu_a^{si}\geq 0\}$ of policy and barrier parameter.
    First, equilibrium bundle $E$ consists of the disjoint union of fibers $\{\pi_a^{si}\}\times B(\pi_a^{si})$ over each $\pi_a^{si}\in\mathcal{P}$, where each fiber $B(\pi_a^{si})$ is an affine subspace with the canonical section $\bar{\mu}_a^{si}(\pi_a^{si})$ being its least element, and perfect equilibria are zero points of the map $\bar{\mu}_a^{si}$.
    Second, Brouwer function $\hat{\pi}_a^{si}=M(\mu_a^{si})(\pi_a^{si})$ rearranges the points in the offset policy space $\mathcal{P}\times\{\mu_a^{si}\}$, where the fixed points are the intersections between $\mathcal{P}\times\{\mu_a^{si}\}$ and the equilibrium bundle $E$, and unbiased barrier problem depicts the approximation to those fixed points.
    Third, the polynomial function from $\pi_a^{si}$ to $\mu_a^{si}$ derives an algebraic curve ${\rm AC}$ that satisfies a parity argument, such that exactly one of its endpoints is connected with the starting point as $\mu'\to \infty$, and the rest of the endpoints are connected in pairs.
    In addition, the singular points of the equilibrium bundle are the multiple roots of the polynomial function.
    Finally, our method is a line search on the equilibrium bundle, which hops across the fibers to a zero point of the canonical section, and moves along a fiber to avoid singular points where the differential $d\pi_a^{si}/d\mu_{a'}^{sj}$ tends to infinity.
    }
    \label{equilbund_graph}
\end{figure}

Algorithm~\ref{algo} is still a \textbf{line search on the equilibrium bundle} that consists of two iteration levels,
the \textbf{first iteration level} is to update onto the equilibrium bundle by alternating the steps of dynamic programming \eqref{dp_formula} and projected gradient descent \eqref{projgrad_formula},
the \textbf{second iteration level} is to hop across the fibers of the equilibrium bundle by canonical section descent \eqref{canosect_formula} and differential $(d\pi_{a'}^{sj}/\pi_{a'}^{sj})/(d\mu_{a''}^{sk}/\mu_{a''}^{sk})$,
and the line search leads to a perfect equilibrium as the canonical section $\bar{\mu}_a^{si}(\pi_a^{si})$ reduces to $0$.
Then we show that Algorithm~\ref{algo} converges in polynomial time by showing the convergence rates of the three iteration formulas.

\begin{proposition}
    \label{conv_rate}
    The following statements about convergence rate hold.
    \begin{enumerate}
        \item $\lVert V_s^i-D_\pi(V_s^i+m^i\mathbf{1}_s)\rVert _\infty$ converge to $0$ linearly with a rate of $\gamma$ under the dynamic programming \eqref{dp_formula} for fixed $\pi_a^{si}$ and $m$.
        \item $\lVert \bar{\mu}_a^i(\pi_a^i)\rVert _\infty$ converge to $0$ linearly for some $\eta^i>0$ and $\beta^i\geq 0$ under the canonical section descent \eqref{canosect_formula}.
        \item $(\pi_a^i-\hat{\pi}_a^i)(r_a^i-\hat{r}_a^i)$ converge to $0$ sublinearly under the iteration of projected gradient \eqref{projgrad_formula} if the convergence point is non-singular.
    \end{enumerate}
\end{proposition}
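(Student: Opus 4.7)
The plan is to prove the three convergence statements separately, each exploiting the algebraic or geometric structure of the associated iteration map. The overall strategy reduces each claim to a form where a standard convergence-rate argument applies: Banach contraction for (i), geometric decrease along the equilibrium bundle for (ii), and a convex descent lemma for (iii).

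For part (i), I would unfold the dynamic programming operator as the affine map $D_\pi(V_{s'}^i)=u_\pi^{si}+\gamma T_{\pi s'}^s V_{s'}^i$, where $T_{\pi s'}^s$ is row-stochastic (as already invoked in the paper via the referenced stochastic-matrix lemma). Define the residual $e_{s,k}^i:=V_{s,k}^i-D_\pi(V_{s,k}^i+m^i\mathbf{1}_s)=V_{s,k}^i-V_{s,k+1}^i$. Because $D_\pi$ is affine, $e_{s,k+1}^i=D_\pi(V_{s,k}^i+m^i\mathbf{1}_s)-D_\pi(V_{s,k+1}^i+m^i\mathbf{1}_s)=\gamma T_{\pi s'}^s e_{s',k}^i$. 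Taking the infinity norm and using that a row-stochastic matrix has operator $\infty$-norm equal to $1$ gives $\lVert e_{k+1}\rVert_\infty\leq\gamma\lVert e_k\rVert_\infty$, which is exactly linear convergence with rate $\gamma$.

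For part (ii), I would work directly on the equilibrium bundle. By Theorem~\ref{equil_bund_equiv}, once the inner loop has placed $(\pi_{a,t}^i,\mu_{a,t}^i)$ on $E$, we may decompose $\mu_{a,t}^i=v_t^i\circ\pi_{a,t}^i+\bar{\mu}_a^i(\pi_{a,t}^i)$ with $v_t^i\geq 0$. The canonical section descent with $\beta_t^i=0$ then gives $\mu_{a,t+1}^i=(1-\eta_t^i)(v_t^i\circ\pi_{a,t}^i+\bar{\mu}_a^i(\pi_{a,t}^i))$. At a non-singular point, following the differential from \eqref{tan_vec} keeps the updated pair on $E$, so the new fiber decomposition must satisfy $\bar{\mu}_a^i(\pi_{a,t+1}^i)=(1-\eta_t^i)\bar{\mu}_a^i(\pi_{a,t}^i)$ to first order. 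Choosing $\eta_t^i$ bounded away from $0$ and $1$, and invoking $\beta_t^i>0$ only when a singular point must be avoided (as described after Theorem~\ref{equil_bund_theo}), then yields $\lVert\bar{\mu}_a^i(\pi_{a,t}^i)\rVert_\infty\leq(1-\underline{\eta})^{t}\lVert\bar{\mu}_a^i(\pi_{a,0}^i)\rVert_\infty$ for some $\underline{\eta}>0$, which is linear convergence.

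For part (iii), at a non-singular limit point Theorem~\ref{equil_bund_theo}~(iii) guarantees that the unbiased barrier problem \eqref{ubarr_equ} is locally strictly convex, and the closed-form projected gradient \eqref{projgrad_formula} is smooth with a locally Lipschitz gradient. I would then invoke the standard descent lemma for projected gradient on a convex smooth objective with constant step size $\leq 1/L$, which yields the $O(1/k)$ bound on the objective gap; since Theorem~\ref{ipm_equiv} identifies the optimum of \eqref{ubarr_equ} with $0$, this is sublinear convergence of $(\pi_a^i-\hat{\pi}_a^i)(r_a^i-\hat{r}_a^i)$ to $0$. The main obstacle will be part (ii): making precise that the tangent-vector update keeps the iterate on $E$ while preserving the $(1-\eta)$ contraction of $\bar{\mu}$ despite $\beta$-corrections for singular-point avoidance, and verifying that $\eta_t^i$ can be kept uniformly bounded away from $0$ as the line search proceeds along the bundle toward a perfect equilibrium.
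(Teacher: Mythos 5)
Parts (i) and (iii) of your proposal are essentially the paper's own arguments: for (i) the residual recursion $e_{k+1}=\gamma T_{\pi s'}^s e_k$ with the $\infty$-norm of a row-stochastic matrix gives rate $\gamma$, and for (iii) the paper likewise invokes local strict convexity at a non-singular point plus the known sublinear rate of projected gradient descent (it additionally checks that formula \eqref{projgrad_formula}, derived with $(\hat{\pi}_a^i,\hat{r}_a^i)$ treated as parameters, differs from the true differential with $(\hat{\pi}_a^i,\hat{r}_a^i)$ as variables only by higher-order terms --- a point you skip, but a minor one).

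Part (ii), however, contains a genuine gap, and it is exactly the step you flag as ``the main obstacle.'' From $\mu_{a,t+1}^i=(1-\eta_t^i)\mu_{a,t}^i$ and the fact that the updated pair lies (to first order) on $E$, you can only conclude $(1-\eta_t^i)\bigl(v_t^i\circ\pi_{a,t}^i+\bar{\mu}_a^i(\pi_{a,t}^i)\bigr)=v_{t+1}^i\circ\pi_{a,t+1}^i+\bar{\mu}_a^i(\pi_{a,t+1}^i)$; nothing forces the split of the right-hand side to scale componentwise, since $\bar{\mu}_a^i$ is a nonlinear function of the \emph{new} policy and the discrepancy can be absorbed into the fiber coordinate $v_{t+1}^i$. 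The asserted first-order identity $\bar{\mu}_a^i(\pi_{a,t+1}^i)=(1-\eta_t^i)\bar{\mu}_a^i(\pi_{a,t}^i)$ is therefore the conclusion, not a consequence of being on the bundle. The paper instead introduces the surrogate $\tilde{\mu}_a^i(\pi_a^i)=\mu_a^i-(\mathbf{1}_a\mu_a^i)\circ\pi_a^i$, derives via \eqref{tan_vec} a first-order recursion whose contraction matrix is $I-(1-\eta^i)\circ\mu_a^i\circ\bigl(\mu_{a'}^j d\pi_a^i/(\pi_a^i d\mu_{a'}^j)\bigr)\circ{\mu_{a'}^j}^{-1}$, and obtains linear decay only when this matrix has positive eigenvalues --- which is guaranteed by taking $\mu_a^i$ large enough on the fiber, i.e.\ the $\beta^i\circ\pi_a^i$ term in \eqref{canosect_formula} is needed for the contraction itself, not merely for singular-point avoidance as you assume; moreover the realized rate is governed by this matrix, not by $(1-\eta)$ alone. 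A second missing step is converting decay of $\tilde{\mu}_a^i$ into decay of $\bar{\mu}_a^i$: since $\bar{\mu}_a^i=\tilde{\mu}_a^i+\pi_a^i\circ\bigl(\max_a\pi_{Aa}^{i-}U_{\pi A}^i-\pi_A U_{\pi A}^i\bigr)$, one must argue (as the paper does) that near the limit each index has either $\pi_a^i=O(\epsilon)$ or a utility gap of $O(\epsilon)$, so that $\lVert\tilde{\mu}_a^i\rVert_\infty<\epsilon$ implies $\lVert\bar{\mu}_a^i\rVert_\infty<(k+1)\epsilon$. Without these two ingredients your claimed geometric bound $(1-\underline{\eta})^t$ is unsupported.
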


In Proposition~\ref{conv_rate}, for a linearly converging iteration, the number of iterations needed to achieve given precision $\epsilon$ is $O(\log^m(1/\epsilon))$,
and projected gradient descent is known to converge sublinearly and require $O((1/\epsilon)^n)$ iterations to achieve given precision $\epsilon$.
Note also that the elementary operations of the three iterations are all tensor operations whose complexity is $O(\lvert N\rvert^o\lvert\mathcal{A}\rvert^p\lvert\mathcal{S}\rvert^q)$.
And Theorem~\ref{equil_bund_theo} (iv) assures that for every dynamic game, there are always non-singular points available for the convergence so that Proposition~\ref{conv_rate} applies.
Thus, the running time of Algorithm~\ref{algo} is polynomial with respect to $|N|$, $|\mathcal{A}|$, $|\mathcal{S}|$, and $1/\epsilon$.
In other words, Algorithm~\ref{algo} is an FPTAS that achieves a weak approximation of perfect equilibria, where the canonical section satisfies equation \eqref{eps_equil}, in fully polynomial time.
\begin{equation}
    \label{eps_equil}
    \bar{\mu}_a^{si}(\pi_a^{si})=\pi_a^{si} \circ \left(\max_a \pi_{Aa}^{si-}U_{\pi A}^{si}-\pi_{Aa}^{si-}U_{\pi A}^{si}\right)<\epsilon_a^{si}.
\end{equation}

Recall the two different approximations: weak approximation approximates to an $\epsilon$-equilibrium, and strong approximation approximates to an $\epsilon$-neighborhood of an exact equilibrium.
It directly follows that on the equilibrium bundle, a strong approximation is a weak approximation, while the opposite is not true, such that an $\epsilon$-equilibrium could be far from an exact equilibrium, depending on the nearby gradient of the canonical section,
which aligns with existing results\cite{different_appro,nash_strong_complex1}.
For any perfect equilibrium of any dynamic game, our method achieves a weak approximation in fully polynomial time, and the time complexity for our method to achieve a strong approximation depends on the gradient of the canonical section near the exact equilibrium.
Recall the complexity results that the weak approximation of Nash equilibria is PPAD-complete, and strong approximation of Nash equilibria with three or more players is FIXP-complete.
This implies PPAD=FP.

\subsection{Practical use}

It is possible to construct a MARL method through Algorithm~\ref{algo}, and the derived method would be free from non-stationarity and curse of multiagency since Algorithm~\ref{algo} guarantees convergence and has polynomial time complexity.
Algorithm~\ref{algo} fits into the actor-critic framework, where $\pi_a^{si}$ is the actor, $V_s^i$ is the critic, and $\bar{\mu}_a^{si}$ is the loss function.
In addition, the canonical section $\bar{\mu}_a^{si}$ can be used to guide the global search of the policy space, and the term $m_k^i\mathbf{1}_s$ in dynamic programming can be used to balance exploration and exploitation of the state space.

Note that the normal-form representation $U_A^{si}$ only involves in the calculation through $\pi_{Aa}^{si-}U_A^{si}$ and $\pi_{Aaa'}^{sij-}U_A^{si}$ in Algorithm~\ref{algo},
where the problem of computing $\pi_{Aa}^{si-}U_A^{si}$ and $\pi_{Aaa'}^{sij-}U_A^{si}$ is a variant of a problem called the expected utility problem that computes $\pi_A^s U_A^{si}$.
Thus, Algorithm~\ref{algo} actually works with any game representation, as long as the expected utility problem has an polynomial-time algorithm in that representation, such as those important classes of succint games like graphical games, action-graph games, and many others.
In addition, Algorithm~\ref{algo} even works in model-free cases, where $\pi_{Aa}^{si-}U_A^{si}$ and $\pi_{Aaa'}^{sij-}U_A^{si}$ are estimated using sampled data collected as an self-play agent interacts with a game instance.

Algorithm~\ref{algo} takes any dynamic game as input to output its perfect equilibrium, such as it takes a static game as a single-state dynamic game to produce its Nash equilibrium, and it takes an MDP as a single-player dynamic game to produce its optimal policy.
We implement the algorithm to take any size of dynamic game as input, and we animate the line search process for dynamic games with 2 players, 2 states, and 2 actions using Fig.~\ref{cone_graph}, Fig.~\ref{barrproblem_graph}, Fig.~\ref{kktcondition_graph}, and the iteration curve in Fig.~\ref{iter_curve}.
And we tested our method on 2000 randomly generated dynamic games of 3 players, 3 states, and 3 actions in experiment, and the iteration converges to a perfect equilibrium in every single case.

\begin{figure}
    \centering
    \includegraphics[width=0.9\textwidth]{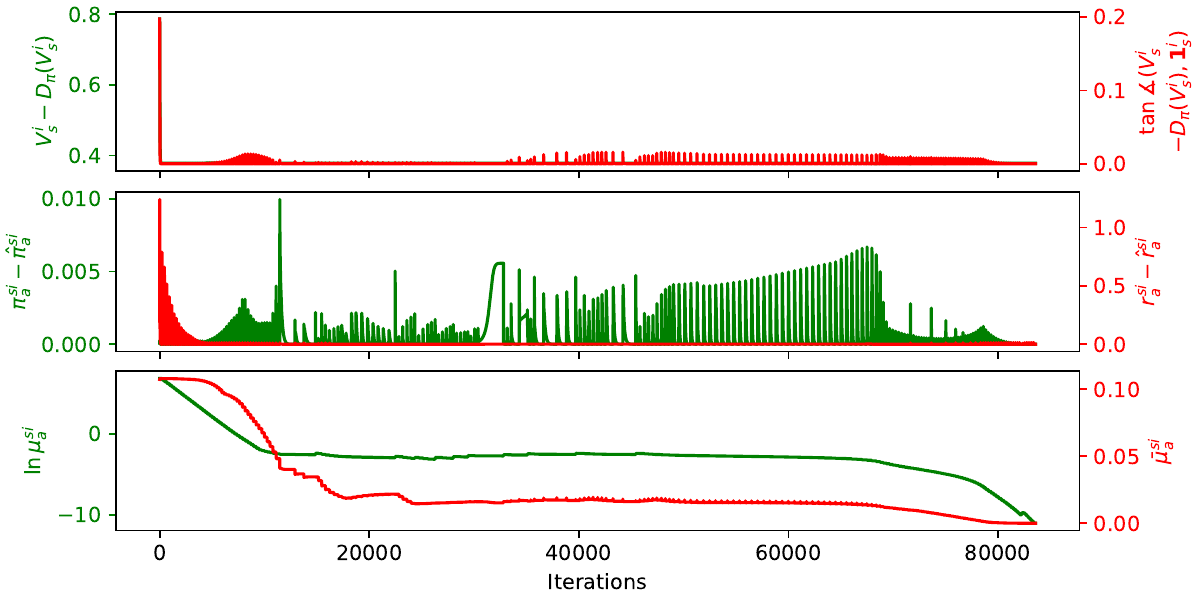}
    \caption{
    Iteration curve.
    The figure shows the convergence of the three iterations in Proposition~\ref{conv_rate}.
    ${\rm Angle}(V_s^i-D_\pi(V_s^i),\mathbf{1}_s^i)$, $(\pi_a^{si}-\hat{\pi}_a^{si},r_a^{si}-\hat{r}_a^{si})$, and $\bar{\mu}_a^{si}(\pi_a^{si})$ all converging to $0$ indicates that the convergence point is a perfect equilibrium.
    In particular, ${\rm Angle}(V_s^i-D_\pi(V_s^i),\mathbf{1}_s^i)$ and $(\pi_a^{si}-\hat{\pi}_a^{si},r_a^{si}-\hat{r}_a^{si})$ staying converged during the whole iteration indicates that the iteration is a line search on the equilibrium bundle,
    and $\mu_a^{si}$ not decreasing in the middle of the iteration is due to singular avoidance.
    }
    \label{iter_curve}
\end{figure}

\subsection{Tractability of PPAD}
PPAD is defined as the complexity class of all the problems that reduces to {\it End-Of-The-Line} in polynomial time, and it is believed to contain hard problems because {\it End-Of-The-Line} is seemingly intractable in polynomial time, yet our discovery implies PPAD=FP.
Thus, it is necessary that we explain how the reduction of our method can solve {\it End-Of-The-Line} in polynomial time.

We've introduced that {\it End-Of-The-Line} is believed to be intractable because it seems to let us follow a potentially exponentially long chain in directed graph $DG$ from a given source to a sink, only by using a polynomial-time computable function $f$ that outputs the predecessor and successor of an input vertex.
However, note that it is not necessary to follow the exponentially long chain or use function $f$ to jump over vertices one by one.
The reduction of our method to {\it End-Of-The-Line} consists of two independent steps: computing a static game whose $\epsilon$-approximate Nash equilibria correspond to unbalanced vertices of $DG$ in polynomial time, and computing an $\epsilon$-approximate Nash equilibrium in fully polynomial time.
The first step is given by the reduction from {\it End-Of-The-Line} to {\it Nash}\cite{nash_weak_complex1}, and the second step is our method.

The reduction of {\it End-Of-The-Line} to {\it Nash} consists of two steps that can be done in polynomial time.
First, construct a Brouwer function $F$ on the unit cube that is given by an arithmetic circuit that consists only of addition, multiplication and comparison, with the boolean circuit $f$ being its subcircuit,
such that every vertex of $DG$ is encoded in the unit cube, and the unbalanced vertices except the given one are exactly the $\epsilon$-approximate fixed points.
Second, simulate every operation in arithmetic circuit $F$ with a two-action static game to obtain a many-player two-action graphical game, and then simulate the graphical game with a three-player many-action static game,
such that the weights of certain actions on $\epsilon'$-approximate Nash equilibria are exactly the $\epsilon$-approximate fixed points.

After the static game is constructed, our method is used to compute an $\epsilon$-approximate Nash equilibria, and then certain components of the policy can be decoded to obtain an unbalanced vertex.
Our method is a line search on the equilibrium bundle, the search path lies in the total space, and $DG$ is encoded in the base space, where the dimension of the total space is twice that of the base space.
The line search originates from the fiber over the point encoding the given unbalanced vertex, and eventually reaches the fiber over an $\epsilon$-approximate Nash equilibria encoding another unbalanced vertex.

In summary, there are two facts that make {\it End-Of-The-Line} solvable in polynomial time.
First, the polynomial-time computable function $f$ is not used to jump over vertices one by one, instead, it is used to construct a static game in polynomial time, such that the directed graph is encoded in its policy space and the unbalanced vertices are encoded as Nash equilibria.
Second, the search path does not follow the exponentially long chain, instead, it is completely another path that lies in a space with twice the dimension of the policy space where the chain is encoded.

\section{Conclusion}
In this paper, we aim to deal with fully observable dynamic games to find a polynomial-time algorithm to approximate perfect equilibria.
We first formalize the game equilibrium problem as an optimization problem, which splits into two subproblems with respect to policy and value function.
The subproblem with respect to policy is equivalent to the Nash equilibrium problem of static games, for which we introduce the unbiased barrier problem and unbiased KKT conditions to make the interior point method to approximate Nash equilibria of static games.
The subproblem with respect to value function is equivalent to the convergence problem of dynamic programming in dynamic games, for which we introduce the policy cone to give the sufficient and necessary condition for dynamic programming to converge to perfect equilibria of dynamic games.
Finally, combining the two sections of results, we introduce the equilibrium bundle of dynamic games, such that it formalize the perfect equilibria as the zero points of its canonical section,
and it formalize a hybrid iteration of dynamic programming and interior point method as a line search on it.
The geometric properties of the equilibrium bundle allows us to give the existence and oddness theorems as an extension of those of Nash equilibria.
In addition, the equilibrium bundle, unbiased barrier problem, unbiased KKT conditions, and Brouwer function all lie in the joint space of policy and barrier parameter with certain geometric structure.
The hybrid iteration approximate any perfect equilibrium of any dynamic game, it achieves a weak approximation in fully polynomial time, the time complexity for it to achieve a strong approximation depends on the nearby gradient of the canonical section.
This makes the method an FPTAS for the PPAD-complete weak approximation problem of game equilibria, implying PPAD=FP.
In experiments, the line search process is animated, and the method is tested on 2000 randomly generated dynamic games where it converges to a perfect equilibrium in every single case.

\section*{Acknowledgments}
\textbf{Funding:}
This work was supported by the National Key R\&D Program of China (2022YFB4701400/4701402), National Natural Science Foundation of China (No. U21B6002, 62203260, 92248304), Guangdong Basic and Applied Basic Research Foundation (2023A1515011773).
\textbf{Author Contributions:}
H.S. developed the theorems, implemented the method, plotted experimental results, and wrote the manuscript.
X.W. supervised the research.
X.W. and C.X. assisted the research with constructive discussions.
C.X., J.T., and B.Y. assisted with manuscript editing.
All authors read and commented the paper.
\textbf{Competing interests:}
The authors declare no competing interests.
\textbf{Code availability:}
The codes implementing our method and animating line search process that takes any dynamic game instance as input are available at \url{https://github.com/shb20tsinghua/PTAS_Game/tree/main}.

\bibliographystyle{unsrt}

\begin{thebibliography}{10}

    \bibitem{nash_orginal}
    John~F Nash, Jr.
    \newblock Equilibrium points in n-person games.
    \newblock {\em Proceedings of the National Academy of Sciences}, 36(1):48--49, 1950.

    \bibitem{noregret}
    Sergiu Hart and Andreu Mas-Colell.
    \newblock A simple adaptive procedure leading to correlated equilibrium.
    \newblock {\em Econometrica}, 68(5):1127--1150, 2000.

    \bibitem{cfr}
    Martin Zinkevich, Michael Johanson, Michael Bowling, and Carmelo Piccione.
    \newblock Regret minimization in games with incomplete information.
    \newblock In J.~Platt, D.~Koller, Y.~Singer, and S.~Roweis, editors, {\em Advances in Neural Information Processing Systems}, volume~20. Curran Associates, Inc., 2007.

    \bibitem{fp}
    George~W Brown.
    \newblock Iterative solution of games by fictitious play.
    \newblock {\em Activity Analysis of Production and Allocation}, 13(1):374, 1951.

    \bibitem{fsp}
    Johannes Heinrich, Marc Lanctot, and David Silver.
    \newblock Fictitious self-play in extensive-form games.
    \newblock In Francis Bach and David Blei, editors, {\em Proceedings of the 32nd International Conference on Machine Learning}, volume~37, pages 805--813, Lille, France, 2015. PMLR.

    \bibitem{mrl_problem}
    Sven Gronauer and Klaus Diepold.
    \newblock Multi-agent deep reinforcement learning: a survey.
    \newblock {\em Artificial Intelligence Review}, 55(2):895--943, 2022.

    \bibitem{different_appro}
    Herbert Scarf.
    \newblock The approximation of fixed points of a continuous mapping.
    \newblock {\em SIAM Journal on Applied Mathematics}, 15(5):1328--1343, 1967.

    \bibitem{nash_weak_complex1}
    Constantinos Daskalakis, Paul~W Goldberg, and Christos~H Papadimitriou.
    \newblock The complexity of computing a nash equilibrium.
    \newblock {\em Communications of the ACM}, 52(2):89--97, 2009.

    \bibitem{nash_weak_complex0}
    Xi~Chen, Xiaotie Deng, and Shang-Hua Teng.
    \newblock Computing nash equilibria: Approximation and smoothed complexity.
    \newblock In {\em 2006 47th Annual IEEE Symposium on Foundations of Computer Science}, pages 603--612, Los Alamitos, CA, USA, 2006. IEEE Computer Society.

    \bibitem{nash_weak_complex2}
    Aviad Rubinstein.
    \newblock Inapproximability of nash equilibrium.
    \newblock In {\em Proceedings of the Forty-Seventh Annual ACM Symposium on Theory of Computing}, pages 409--418, New York, NY, USA, 2015. Association for Computing Machinery.

    \bibitem{nash_strong_complex0}
    Xi~Chen, Xiaotie Deng, and Shang-Hua Teng.
    \newblock Settling the complexity of computing two-player nash equilibria.
    \newblock {\em Journal of the ACM}, 56(3):1--57, 2009.

    \bibitem{nash_strong_complex1}
    Kousha Etessami and Mihalis Yannakakis.
    \newblock On the complexity of nash equilibria and other fixed points.
    \newblock {\em SIAM Journal on Computing}, 39(6):2531--2597, 2010.

    \bibitem{ppad}
    Christos~H Papadimitriou.
    \newblock On the complexity of the parity argument and other inefficient proofs of existence.
    \newblock {\em Journal of Computer and system Sciences}, 48(3):498--532, 1994.

    \bibitem{oddness2}
    John~C Harsanyi.
    \newblock Oddness of the number of equilibrium points: a new proof.
    \newblock {\em International Journal of Game Theory}, 2(1):235--250, 1973.

    \bibitem{nash_orginal2}
    John~F Nash, Jr.
    \newblock Non-cooperative games.
    \newblock {\em Annals of Mathematics}, 54(2):286--295, 1951.

    \bibitem{stocgame}
    Lloyd~S Shapley.
    \newblock Stochastic games.
    \newblock {\em Proceedings of the National Academy of Sciences}, 39(10):1095--1100, 1953.

    \bibitem{mpe}
    Eric Maskin and Jean Tirole.
    \newblock Markov perfect equilibrium: I. observable actions.
    \newblock {\em Journal of Economic Theory}, 100(2):191--219, 2001.

    \bibitem{mdp}
    Michael~L. Littman.
    \newblock Markov games as a framework for multi-agent reinforcement learning.
    \newblock In William~W. Cohen and Haym Hirsh, editors, {\em Machine Learning Proceedings 1994}, pages 157--163, San Francisco, CA, USA, 1994. Morgan Kaufmann.

    \bibitem{pure_theo}
    John~C Harsanyi.
    \newblock Games with randomly disturbed payoffs: A new rationale for mixed-strategy equilibrium points.
    \newblock {\em International Journal of Game Theory}, 2(1):1--23, 1973.

    \bibitem{interior_point}
    Richard~H Byrd, Mary~E Hribar, and Jorge Nocedal.
    \newblock An interior point algorithm for large-scale nonlinear programming.
    \newblock {\em SIAM Journal on Optimization}, 9(4):877--900, 1999.

    \bibitem{bellman}
    Richard Bellman.
    \newblock On the theory of dynamic programming.
    \newblock {\em Proceedings of the National Academy of Sciences}, 38(8):716--719, 1952.

    \bibitem{mdp_game}
    Michael~L Littman.
    \newblock Value-function reinforcement learning in markov games.
    \newblock {\em Cognitive Systems Research}, 2(1):55--66, 2001.

    \bibitem{nash_q}
    Junling Hu and Michael~P Wellman.
    \newblock Nash q-learning for general-sum stochastic games.
    \newblock {\em Journal of Machine Learning Research}, 4:1039--1069, 2003.

    \bibitem{kkt_orginal}
    Harold~W Kuhn and Albert~W Tucker.
    \newblock Nonlinear programming.
    \newblock In Jerzy Neyman, editor, {\em Proceedings of the Second Berkeley Symposium on Mathematical Statistics and Probability}, volume~2, pages 481--492, Berkeley and Los Angeles, 1951. University of California Press.

    \bibitem{regmin1}
    Carlton~E Lemke and Joseph~T Howson, Jr.
    \newblock Equilibrium points of bimatrix games.
    \newblock {\em Journal of the Society for Industrial and Applied Mathematics}, 12(2):413--423, 1964.

    \bibitem{regmin2}
    Richard~W. Cottle and George~B. Dantzig.
    \newblock Complementary pivot theory of mathematical programming.
    \newblock {\em Linear Algebra and its Applications}, 1(1):103--125, 1968.

\end{thebibliography}

\section*{Proofs}
In this section, we provide the proofs of all the theorems and propositions in the previous sections.
\begin{proof}[Proof of Proposition~\ref{single_dp_lp}]
    (i) The dual problem of linear programming problem \eqref{primal_lp} is
    \begin{align*}
        \max_{\lambda_a^s}\quad & \lambda_a^s u_a^s                                \\
        \textrm{s.t.}\quad      & (I_{s'}^s-\gamma T_{s'a}^s)\lambda_a^s=\bar{w}^s \\
                                & \lambda_a^s\geq 0
    \end{align*}
    Note that $u_\pi^s=(I_{s'}^s-\gamma T_{\pi s'}^s)V_{\pi s'}$, then substituting in $\pi_a^s=\lambda_a^s/(\mathbf{1}_a\lambda_a^s)$, we have
    $$\lambda_a^s u_a^s=(\mathbf{1}_a\lambda_a^s)(I_{s'}^s-\gamma T_{\pi s'}^s)V_{\pi s'}=(\mathbf{1}_a\lambda_a^s)\pi_a^s(I_{s'}^s-\gamma T_{s'a}^s)V_{\pi s'}=\bar{w}^sV_{\pi s}.$$
    Thus, we obtain linear programming problem \eqref{dual_lp}.

    The optimal value of linear programming problem \eqref{primal_lp} being the optimal value function is an existing result,
    and the optimal value of linear programming problem \eqref{dual_lp} being the optimal value function is intuitive.
    In addition, the optimal values of these two linear programming problems are the same since they are the dual linear programming problem of each other.

    (ii) It directly follows.
\end{proof}

\subsection*{Proofs in section \ref{ipm_sec}}
\begin{proof}[Proof of Theorem~\ref{regmin_equiv}]
    First, use (i) to prove (ii) and the optimal objective function value is $0$.
    It follows from $(\pi_a^i,r_a^i) \geq 0$ that the objective $\pi_a^i r_a^i\geq 0$.
    When $(\pi_a^i,v^i)$ is a Nash equilibrium, $v^i=\pi_A U_A^i=\max_a\pi_{Aa}^{i-}U_A^i$.
    Then
    $$\pi_a^i r_a^i=\pi_a^i\left(v^i-\pi_{Aa}^{i-}U_A^i\right)=v^i-\pi_A U_A^i=0$$
    and $r_a^i=v^i-\pi_{Aa}^{i-}U_A^i\geq 0$.
    Hence $(\pi_a^i,r_a^i,v^i)$ is a global optimal point and the optimal objective function value is $0$.

    Then use (ii) to prove (iii).
    By the guaranteed existence of Nash equilibria and the above inference, the optimal objective function value is always $0$.
    It follows from $\pi_a^i r_a^i= 0$ and $(\pi_a^i,r_a^i) \geq 0$ that
    $$\pi_a^i\circ r_a^i=0.$$
    Hence let $(\bar{\lambda}_a^i,\tilde{\lambda}^i,\hat{\pi}_a^i,\hat{r}_a^i)=(\mathbf{0}_a^i,\mathbf{0}^i,\pi_a^i,r_a^i)$, and then the equations are satisfied.

    Finally, use (iii) to prove $(\bar{\lambda}_a^i,\tilde{\lambda}^i,\pi_a^i-\hat{\pi}_a^i,r_a^i-\hat{r}_a^i)=0$ and (i).
    Substituting $\pi_a^i=\hat{\pi}_a^i$ into $\bar{\lambda}_a^i+\pi_a^i-\hat{\pi}_a^i=0$, we have $\bar{\lambda}_a^i=0$.
    Substituting $\bar{\lambda}_a^i=0$ into $\bar{\lambda}_a^j\pi_{Aaa'}^{ij-}U_A^i+\tilde{\lambda}^i\mathbf{1}_{a'}+r_{a'}^i-\hat{r}_{a'}^i=0$, we have $r_a^i-\hat{r}_a^i=-\tilde{\lambda}^i\mathbf{1}_a$.
    Substituting $\pi_a^i=\hat{\pi}_a^i$ into $r_a^i\circ \hat{\pi}_a^i=0$ and $\pi_a^i\circ \hat{r}_a^i=0$, we have $\pi_a^i\circ(r_a^i-\hat{r}_a^i)=0$.
    Then
    $$\pi_a^i\circ\left(-\tilde{\lambda}^i\mathbf{1}_a\right)=0.$$

    Because $\mathbf{1}_a\pi_a^i=\mathbf{1}^i$, so for every index $i$ there must exist index $a$ such that $\pi_a^i>0$.
    It follows that $\tilde{\lambda}^i=0$ for every index $i$, and then $r_a^i=\hat{r}_a^i$.
    Hence we obtain $(\bar{\lambda}_a^i,\tilde{\lambda}^i,\pi_a^i-\hat{\pi}_a^i,r_a^i-\hat{r}_a^i)=0$.

    At this time, $\pi_a^i\circ r_a^i=0$.
    Considering that for every index $i$ there must exist index $a$ such that $\pi_a^i>0$,
    then for every index $i$ there must exist index $a$ such that $r_a^i=v^i-\pi_{Aa}^{i-}U_A^i=0$.
    Note also that $r_a^i \geq 0$, and thus
    $$v^i=\max_a\pi_{Aa}^{i-}U_A^i.$$
    Then it follows from the objective $\pi_a^i r_a^i= 0$ that $v^i=\max_a\pi_{Aa}^{i-}U_A^i=\pi_A U_A^i$.
    Hence $(\pi_a^i,v^i)$ is a Nash equilibrium.
\end{proof}

\begin{proof}[Proof of Theorem~\ref{ipm_equiv}]
    First, $(i)\Leftrightarrow(iii)$ is implied by the definition of Brouwer function $M$.

    Then, prove $(ii)\Leftrightarrow(iii)$.
    In unbiased barrier problem \eqref{ubarr_equ}, where $\hat{\pi}_a^i=\mu_a^i/r_a^i$ and $\hat{r}_a^i=\mu_a^i/\pi_a^i$,
    $$\left(\pi_a^i-\hat{\pi}_a^i\right)\left(r_a^i-\hat{r}_a^i\right)=\sum_{i,a}\left(\pi_a^i\circ r_a^i+\frac{\mu_a^{i2}}{\pi_a^i\circ r_a^i}-2\mu_a^i\right).$$
    The formula takes the minimum value if and only if $\pi_a^i\circ r_a^i=\mu_a^i$.
    Considering the constraints $r_a^i-v^i+\pi_{Aa}^{i-}U_A^i=0$ and $\mathbf{1}_a\hat{\pi}_a^i-\mathbf{1}^i=0$, the simultaneous equations are exactly unbiased KKT conditions \eqref{ukkt_equ}.
    Hence we obtain the equivalence.

    Then, prove $(iii)\Leftrightarrow((iv)\wedge \pi_a^i=\hat{\pi}_a^i)$.
    Unbiased KKT conditions \eqref{ukkt_equ} are exactly the simultaneous equations of perturbed KKT conditions \eqref{perturbed_kkt} and unbiased condition $\pi_a^i=\hat{\pi}_a^i$.
    Hence we obtain the equivalence.

    Then, prove $(iv)\Leftrightarrow(v)\Leftrightarrow(vi)$.
    It can be verified that KKT conditions of barrier problem \eqref{regmin_equ} are perturbed KKT conditions \eqref{perturbed_kkt},
    and simultaneous equations of KKT conditions and parameter $\hat{\pi}_a^i=\mu_a^i/r_a^i$ and $\hat{r}_a^i=\mu_a^i/\pi_a^i$ of unbiased barrier problem \eqref{ubarr_equ} are also perturbed KKT conditions \eqref{perturbed_kkt}.
    Note that unbiased barrier problem \eqref{ubarr_equ} and barrier problem \eqref{regmin_equ} are both equality constrained optimization problems,
    and thus by the Lagrange multiplier method, their local extreme points are the points that satisfies their KKT conditions, that is, perturbed KKT conditions \eqref{perturbed_kkt}.
    Hence we obtain the equivalence.

    Finally, $(\pi_a^i,\mathbf{0}_a^i)$ is a solution of unbiased KKT conditions \eqref{ukkt_equ} is equivalent to Theorem~\ref{regmin_equiv} (iii),
    and thus it is equivalent to $\pi_a^i$ being a Nash equilibrium.
\end{proof}

\begin{proof}[Proof of Theorem~\ref{ipm_theo}]
    (i) For every index $i$, denote the function and its derivative
    $$f(v^i)=\mathbf{1}_a\frac{\mu_a^i}{v^i-\pi_{Aa}^{i-}U_A^i}-\mathbf{1}^i,\frac{df(v^i)}{dv^i}=-\mathbf{1}_a\frac{\mu_a^i}{\left(v^i-\pi_{Aa}^{i-}U_A^i\right)^2}\leq 0.$$
    There is also
    $$\lim_{v^i\to \left(\max_a\pi_{Aa}^{i-}U_A^i\right)^+}f(v^i)=+\infty \quad \wedge \quad \lim_{v^i\to +\infty}f(v^i)=-1.$$
    Thus, $f(v^i)$ monotonically decreases with respect to $v^i$ from $+\infty$ to $-1$ in its domain, and hence there exists a unique $v^i$ that satisfies $f(v^i)=0$.

    (ii) By the objective function and constraint $r_a^i-v^i+\pi_{Aa}^{i-}U_A^i=0$, considering $\hat{\pi}_a^i$ and $\hat{r}_a^i$ are parameters, the differential of objective function is
    $$\left(\pi_a^i-\hat{\pi}_a^i\right) dr_a^i+\left(r_a^i-\hat{r}_a^i\right) d\pi_a^i=\left(\pi_a^i-\hat{\pi}_a^i\right) \left(dv^i-\pi_{Aaa'}^{ij-}U_A^i d\pi_{a'}^j\right) +\left(r_a^i-\hat{r}_a^i\right) d\pi_a^i.$$
    It follows from $\mathbf{1}_a\hat{\pi}_a^i-\mathbf{1}^i=0$ that $(\pi_a^i-\hat{\pi}_a^i)dv^i=0$, and thus the differential is
    $$\left(\left(r_{a'}^j-\hat{r}_{a'}^j\right)-\left(\pi_a^i-\hat{\pi}_a^i\right)\pi_{Aaa'}^{ij-}U_A^i\right)d\pi_{a'}^j=\left(\pi_a^i-\hat{\pi}_a^i\right)\left({\rm Diag}(r_a^i)-\pi_{Aaa'}^{ij-}U_A^i\circ\pi_{a'}^j\right)(d\pi_{a'}^j/\pi_{a'}^j).$$

    (iii) (i) implies that Brouwer function $\hat{\pi}_a^i=M(\mu_a^i)(\pi_a^i)$ is indeed a map, such that for every $\pi_a^i$ there is a unique $\hat{\pi}_a^i$.
    Note also that $\hat{\pi}_a^i=M(\mu_a^i)(\pi_a^i)$ is continuous, since equation \eqref{ukkt_equ1} is continuous.
    Thus, by Brouwer's fixed point theorem, there is always a fix point that satisfies $\hat{\pi}_a^i=\pi_a^i$, that is, a point that satisfies unbiased KKT conditions \eqref{ukkt_equ}.
\end{proof}

\begin{proof}[Proof of Theorem~\ref{equil_bund_equiv}]
    (i) and (ii) directly follow from definition of the equilibrium bundle and unbiased KKT conditions \eqref{ukkt_equ}.

    (iii) By Theorem~\ref{ipm_equiv}, $\pi_a^i$ is a Nash equilibrium if and only if $\mathbf{0}_a^i\in B(\pi_a^i)$, then the equivalence is directly implied.

\end{proof}

\begin{proof}[Proof of Theorem~\ref{equil_bund_theo}]
    (i) From unbiased KKT conditions \eqref{ukkt_equ} we have
    $$\pi_a^i\circ v^i-\pi_a^i\circ\pi_{Aa}^{i-}U_A^i=\hat{\mu}_a^i\mu'$$
    and $\mathbf{1}_a\pi_a^i=\mathbf{1}^i$.
    As $\mu' \to +\infty$, $\pi_a^i$ and $\pi_a^i\circ\pi_{Aa}^{i-}U_A^i$ is bounded, and $v^i\to +\infty$.
    Then
    $$\lim_{\mu' \to +\infty} \pi_a^i\circ \frac{v^i}{\mu'}=\hat{\mu}_a^i.$$
    Hence we obtain $\lim_{\mu' \to +\infty} \pi_a^i=\hat{\mu}_a^i/(\mathbf{1}_a\hat{\mu}_a^i)$.

    (ii) Differential of unbiased KKT conditions \eqref{ukkt_equ} is
    \begin{align*}
        \begin{bmatrix}
            \pi_a^i\circ dr_a^i+d\pi_a^i\circ r_a^i-d\mu_a^i \\
            dr_a^i-dv^i+\pi_{Aaa'}^{ij-}U_A^i d\pi_{a'}^j    \\
            \mathbf{1}_a d\pi_a^i
        \end{bmatrix}=0.
    \end{align*}
    Eliminating $dr_a^i$, we have
    \begin{align*}
        \begin{bmatrix}
            \pi_a^i\circ dv^i-\left(\pi_a^i\circ\pi_{Aaa'}^{ij-}U_A^i\circ\pi_{a'}^j\right)  \frac{d\pi_{a'}^j}{\pi_{a'}^j}+\mu_a^i\circ \frac{d\pi_a^i}{\pi_a^i} \\
            \mathbf{1}_a(\pi_a^i\circ \frac{d\pi_a^i}{\pi_a^i})
        \end{bmatrix}=\begin{bmatrix}d\mu_a^i\\0\end{bmatrix}.
    \end{align*}
    Transform it into a linear equation system with respect to $(d\pi_{a'}^j/\pi_{a'}^j)/(d\mu_{a''}^k/\mu_{a''}^k)$ and $dv^l/(d\mu_{a''}^k/\mu_{a''}^k)$, and we obtain equation \eqref{tan_vec}.

    (iii) The equation is obtained from the proof of Theorem~\ref{ipm_theo} (ii).
    $(\pi_a^i,\mu_a^i)\in E$ is a solution of unbiased KKT conditions \eqref{ukkt_equ}, then by Theorem~\ref{ipm_equiv} (v), $(\pi_a^i,\mu_a^i)$ is a KKT point of unbiased barrier problem \eqref{ubarr_equ}.

    First, when ${\rm C}_{(j,a')\cup l}^{(i,a)\cup m}$ is non-singular, the Jacobian matrix of unbiased KKT conditions \eqref{ukkt_equ} is non-singular, and then by the implicit function theorem, $(\pi_a^i,\mu_a^i)$ is the only solution of \eqref{ukkt_equ} in its neighborhood.
    Thus, $(\pi_a^i,\mu_a^i)$ is the only KKT point of \eqref{ubarr_equ}, namely, unbiased barrier problem \eqref{ubarr_equ} is locally strictly convex at $(\pi_a^i,\mu_a^i)$.

    Conversely, when ${\rm C}_{(j,a')\cup l}^{(i,a)\cup m}$ is singular, the KKT point of \eqref{ubarr_equ} is not unique in the neighborhood, namely, unbiased barrier problem \eqref{ubarr_equ} is not locally strictly convex at $(\pi_a^i,\mu_a^i)$.

    (iv) On the fiber $B(\pi_a^i)$ over $\pi_a^i$, $\mu_a^i/\pi_a^i$ tends to $v^i\mathbf{1}_a$ for some $v^i$ as $\mu_a^i\to \infty$, and $\pi_{Aaa'}^{ij-}U_A^i\circ\pi_{a'}^j$ is bounded.
    Thus, matrix ${\rm C}_{(j,a')\cup l}^{(i,a)\cup m}$ tends to
    \begin{align*}
        \begin{bmatrix}
            {\rm Diag}(v^i\mathbf{1}_a)                                      & \left(I^{il}\mathbf{1}_a\right) _l^{(i,a)} \\
            \pi_{(j,a')}\circ \left(I^{jm}\mathbf{1}_{a'}\right) _{(j,a')}^m & \mathbf{0}_l^m
        \end{bmatrix}.
    \end{align*}
    By elementary column transformation, the determinant of this matrix is $(-1)^{\left\lvert N\right\rvert }\prod_{i\in N} (v^i)^{\left\lvert \mathcal{A}\right\rvert -1}$.
    It follows that the determinant of ${\rm C}_{(j,a')\cup l}^{(i,a)\cup m}$ diverges as $\mu_a^i\to \infty$ in $B(\pi_a^i)$.
    Thus, there always exists $\check{\mu}_a^i\in B(\pi_a^i)$ such that for every $\mu_a^i>\check{\mu}_a^i$, the determinant of ${\rm C}_{(j,a')\cup l}^{(i,a)\cup m}$ is non-zero on $(\pi_a^i,\mu_a^i)$.

\end{proof}

\begin{proof}[Proof of Theorem~\ref{odd_thm}]
    (i) First, unbiased KKT conditions \eqref{ukkt_equ} derive a polynomial function that maps $(\pi_a^i,v^i)$ to $\mu_a^i$, and polynomial functions are analytic on their whole domain.
    Second, function with respect to $\mu_a^i$ is given by a polynomial equation, such that the function expands to a Puiseux's series with respect to $\mu_a^i$ at every point according to the Newton-Puiseux theorem.
    Then, this function with respect to $\mu_a^i$ is analytic in the punctured neighbourhood of every point on it.
    Furthermore, this function is analytic in the neighbourhood of every point on it, because the Puiseux's series also converges at the point itself since the function is the inverse of a polynomial function.

    (ii) For any $\hat{\mu}_a^i>0$ and $\mu''\geq 0$, there is an algebraic curve ${\rm AC}=\{(\pi_a^i,\hat{\mu}_a^i\mu')\in E|\mu'> \mu''\}$.
    Denote ${\rm EP}$ as the set points on ${\rm AC}$ as $\mu'\to \mu''$.
    By (i), at every point on an algebraic curve, either non-singular or singular, there is a unique analytic continuation beyond that point.
    It follows that the endpoints of an algebraic curve are always connected in pairs.
    And Theorem~\ref{equil_bund_theo} (i) shows that there is a unique endpoint of algebraic curve ${\rm AC}$ as $\mu'\to \infty$, which is called the starting point.
    Thus, there is exactly one point in ${\rm EP}$ that is connected to the starting point of ${\rm AC}$ by a branch of ${\rm AC}$, and all the other points in ${\rm EP}$ are connected in pairs by the rest branches of ${\rm AC}$.
    In other words, there are always an odd number of points in ${\rm EP}$.

    Then we show that ${\rm EP}$ almost always equals to $\{(\pi_a^i,\hat{\mu}_a^i\mu'')\in E\}$.
    For a point $(\pi_a^i,\mu_a^i)$ where ${\rm C}_{(j,a')\cup l}^{(i,a)\cup m}$ is non-singular,
    $\pi_a^i$ is the unique solution in its neighborhood for the given $\mu_a^i$ by the implicit function theorem.
    It follows that if $(\pi_a^i,\hat{\mu}_a^i\mu'')\in {\rm EP}$ is a non-singular point,
    then for every $\hat{\mu}_a^i>0$, the algebraic curve $\{(\pi_a^i,\hat{\mu}_a^i\mu'+\hat{\mu}_a^i\mu'')\in E|\mu'> 0\}$ tends to $(\pi_a^i,\hat{\mu}_a^i\mu'')$ as $\mu'\to 0$.
    In other words, $\{(\pi_a^i,\hat{\mu}_a^i\mu'')\in E\}$ is contained in ${\rm EP}$ if every point in ${\rm EP}$ is non-singular, and thus they are equal if so.
    Note that almost all points given by a polynomial equation are non-singular.
    Thus, ${\rm EP}$ almost always equals to $\{(\pi_a^i,\hat{\mu}_a^i\mu'')\in E\}$.

\end{proof}

\subsection*{Proofs in section \ref{dp_sec}}

\begin{lemma}
    \label{stoc_mat_lemm}
    Let $T_\pi\in \mathbb{R}^{n\times n}$ be a matrix that satisfies $T_\pi\geq 0$ and $T_\pi\mathbf{1}=\mathbf{1} $, and let $\gamma \in [0,1)$.
    Then
    \begin{enumerate}
        \item $(I-\gamma T_\pi)$ is invertible.
        \item For any $X\in \mathbb{R}^n$, the formula $(I-\gamma T_\pi)X\geq 0\rightarrow X\geq 0\rightarrow (\gamma T_\pi)X\geq 0$ holds.
    \end{enumerate}
\end{lemma}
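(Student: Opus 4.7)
The plan is to exploit the fact that the row-stochastic structure $T_\pi \geq 0$, $T_\pi \mathbf{1} = \mathbf{1}$ forces $\|T_\pi\|_\infty = 1$, so that $\gamma T_\pi$ is a strict contraction in the max-row-sum norm whenever $\gamma \in [0,1)$. Both parts of the lemma will then fall out of the Neumann series for $(I - \gamma T_\pi)^{-1}$.

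For (i), I would first observe that for any vector $Y$, $\|T_\pi Y\|_\infty \leq \max_k \sum_j (T_\pi)_{kj} \|Y\|_\infty = \|Y\|_\infty$ since the rows of $T_\pi$ are nonnegative and sum to one. Thus $\|\gamma T_\pi\|_\infty = \gamma < 1$, so the geometric series $\sum_{k=0}^\infty (\gamma T_\pi)^k$ converges in operator norm. A standard telescoping argument then shows the limit is a two-sided inverse of $(I - \gamma T_\pi)$, proving invertibility.

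For (ii), the second implication $X \geq 0 \Rightarrow (\gamma T_\pi) X \geq 0$ is immediate entrywise, since $\gamma \geq 0$ and $T_\pi \geq 0$. For the first implication, the key is that the inverse constructed above is itself nonnegative: each term $(\gamma T_\pi)^k$ in the Neumann series is a product of nonnegative matrices scaled by a nonnegative scalar, hence entrywise nonnegative, and nonnegativity is preserved under the limit. Therefore if $(I - \gamma T_\pi)X \geq 0$, applying $(I - \gamma T_\pi)^{-1}$ on both sides (which preserves the inequality since the inverse is entrywise nonnegative) yields $X \geq 0$.

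I don't foresee a genuine obstacle here; the only subtlety is being careful that the norm bound $\|T_\pi\|_\infty \leq 1$ really uses both $T_\pi \geq 0$ (to drop absolute values inside the row sum) and $T_\pi \mathbf{1} = \mathbf{1}$ (to bound each row sum by one), and that nonnegativity of the inverse matters for (ii). Everything else is a routine consequence of the Neumann series.
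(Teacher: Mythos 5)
Your proof is correct, but it follows a genuinely different route from the paper's. You prove both parts at once via the Neumann series: the row-stochastic structure gives $\lVert \gamma T_\pi\rVert_\infty=\gamma<1$, the series $\sum_{k\ge 0}(\gamma T_\pi)^k$ converges to $(I-\gamma T_\pi)^{-1}$, and since every term is entrywise nonnegative the inverse is entrywise nonnegative, which immediately yields the implication $(I-\gamma T_\pi)X\ge 0\Rightarrow X\ge 0$. The paper instead argues (i) spectrally — the eigenvalues of $I-\gamma T_\pi$ are $1-\gamma\lambda_i$ and the bound $T_\pi X\le \max(X)$ forces $\lambda_i\le 1$ for real eigenvalues, hence no zero eigenvalue — and proves (ii) by a direct minimum-component argument using $T_\pi X\ge \min(X)$, so that $(I-\gamma T_\pi)X\ge 0$ forces $(1-\gamma)\min(X)\ge 0$, without ever constructing or using positivity of the inverse. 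Your approach buys a single unified mechanism and a slightly stronger, reusable conclusion (entrywise nonnegativity of $(I-\gamma T_\pi)^{-1}$, the standard monotone/M-matrix property, which is also what Proposition~\ref{cone_prop} implicitly exploits via $V_{\pi s}^i=(I_{s'}^s-\gamma T_{\pi s'}^s)^{-1}u_\pi^{s'i}$); it also sidesteps the paper's slight looseness in (i), where only real eigenvalues are discussed even though complex eigenvalues must be bounded as well (they satisfy $|\lambda_i|\le 1$, so the conclusion stands either way). The paper's argument, in exchange, is more elementary, using only componentwise max/min inequalities rather than operator-norm convergence of an infinite series.
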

\begin{proof}
    (i) The eigenvalues of $(I-\gamma T_\pi)$ is given by $1-\gamma\lambda_i$, where $\lambda_i$ are the eigenvalues of $T_\pi$.
    Note that inequality $T_\pi X\leq \max(X)$ holds, and thus for real eigenvalue $\lambda_i$ that $T_\pi X=\lambda_i X$, we have $\lambda_i\leq 1$, and then $1-\gamma\lambda_i>0$.
    Hence $(I-\gamma T_\pi)$ has no zero eigenvalues, and it's invertible.

    (ii) From $T_\pi\geq 0$ and $\gamma \in [0,1)$, we obtain $X\geq 0\rightarrow (\gamma T_\pi)X\geq 0$.
    Note that inequality $T_\pi X\geq \min(X)$ holds, and it follows that
    $$\min((I-\gamma T_\pi)X)\leq \min(X)-\gamma \min(T_\pi X)\leq (1-\gamma)\min(X).$$
    Hence we obtain $(I-\gamma T_\pi)X\geq 0\rightarrow X\geq 0$.
\end{proof}

\begin{proof}[Proof of Proposition~\ref{cone_prop}]
    (i) By Lemma~\ref{stoc_mat_lemm} (i), for any policy $\pi_a^{si}$, there is a unique value function $$V_{\pi s}^i=(I_{s'}^s-\gamma T_{\pi s'}^s)^{-1}u_\pi^{s'i}$$ that satisfies $V_{\pi s}^i=\pi_A^s(u_A^{si}+\gamma T_{s'A}^s V_{\pi s'}^i)$.
    Then by definition of $C_\pi$, we obtain $V_{\pi s}^i\in C_\pi$.
    It follows from $V_s^i \in C_\pi$ that
    $$\left(I_{s'}^s-\gamma T_{\pi s'}^s\right)\left(V_s^i-V_{\pi s}^i\right)=V_s^i-\pi_A^s\left(u_A^{si}+\gamma T_{s'A}^s V_{s'}^i\right)\geq 0.$$
    Then by Lemma~\ref{stoc_mat_lemm} (ii), we have $V_s^i-V_{\pi s}^i\geq 0$, and thus for all $V_s^i \in C_\pi$, $V_s^i\geq V_{\pi s}^i$.

    (ii) Considering $V_s^i-Y_{xs}^i=d_x^i\mathbf{1}_s$, there is
    $$\left(Y_{xs}^i-D_\pi(Y_{xs}^i)\right)(x)=\left(V_s^i-D_\pi(V_s^i)-(1-\gamma)d_x^i\mathbf{1}_s\right)(x).$$
    Then let $d_x^i=(V_s^i-D_\pi(V_s^i))/(1-\gamma)$, and thus formula \eqref{cone_boundary} is satisfied for every $x \in \mathcal{S} $.
    Note that $d_x^i$ is unique, and hence $Y_{xs}^i$ is unique.

    Similarly, considering $V_s^i-\hat{Y}_{xs}^i=\hat{d}_x^i\mathbf{1}_s$, there is
    $$\left(\hat{Y}_{xs}^i-\hat{D}_\pi(\hat{Y}_{xs}^i)\right)(x)=\left(V_s^i-\hat{D}_\pi(V_s^i)-(1-\gamma)\hat{d}_x^i\mathbf{1}_s\right)(x).$$
    then let $\hat{d}_x^i=(V_s^i-\hat{D}_\pi(V_s^i))/(1-\gamma)$, and thus formula \eqref{coneb_boundary} is satisfied for every $x \in \mathcal{S} $.
    Note that $\hat{d}_x^i$ is unique, and hence $\hat{Y}_{xs}^i$ is unique.

    (iii) For any $V_s^i \in \mathcal{V} $, we have
    $$V_s^i+m^i\mathbf{1}_s-\pi_{Aa}^{si-}\left(u_A^{si}+\gamma T_{s'A}^s \left(V_{s'}^i+m^i\mathbf{1}_{s'}\right)\right)=V_s^i-\pi_{Aa}^{si-}\left(u_A^{si}+\gamma T_{s'A}^s V_{s'}^i\right)+(1-\gamma)m^i\mathbf{1}_s,$$
    where $V_s^i-\pi_{Aa}^{si-}(u_A^{si}+\gamma T_{s'A}^s V_{s'}^i)$ is constant.
    Then there exists an $M^i>0$ such that for any $m^i>M^i$ the above formula is greater than 0,
    and hence $V_s^i+m^i\mathbf{1}_s \in \hat{C}_\pi$.

    By definition, for any $V_s^i \in \hat{C}_\pi$, we have
    $$V_s^i\geq \pi_{Aa}^{si-}\left(u_A^{si}+\gamma T_{s'A}^s V_{s'}^i\right)\geq\pi_A^s\left(u_A^{si}+\gamma T_{s'A}^s V_{s'}^i\right).$$
    Then $V_s^i \in C_\pi$, and hence $\hat{C}_\pi \subseteq C_\pi$.
\end{proof}

\begin{proof}[Proof of Theorem~\ref{cone_dp}]
    (i) It follows directly from definition of $C_\pi$.

    (ii) Note that equation
    \begin{align*}
        D_\pi(V_s^i)-D_\pi\left(D_\pi(V_s^i)\right) =\left(I_{s'}^s-\gamma T_{\pi s'}^s\right)\left(V_s^i-V_{\pi s}^i-\left(V_s^i-D_\pi(V_s^i)\right)\right) =\gamma T_{\pi s'}^s\left(V_s^i-D_\pi(V_s^i)\right)
    \end{align*}
    holds.
    Then if $V_s^i \in C_\pi$, by Lemma~\ref{stoc_mat_lemm} (ii), we obtain $D_\pi(V_s^i) \in C_\pi$.

    (iii) Note that equation
    $$D_\pi(V_s^i)-\hat{D}_\pi\left(D_\pi(V_s^i)\right)=\gamma \tilde{\pi}_a^{si}\pi_{Aa}^{si-}T_{s'A}^s\left(V_s^i-D_\pi(V_s^i)\right)+D_\pi(V_s^i)-\hat{D}_\pi(V_s^i)$$
    holds.
    By Lemma~\ref{stoc_mat_lemm} (ii), if $D_\pi(V_s^i)= \hat{D}_\pi(V_s^i)$, then for any $V_s^i \in C_\pi$, $D_\pi(V_s^i)\geq \hat{D}_\pi(D_\pi(V_s^i))$, and hence $D_\pi(V_s^i) \in \hat{C}_\pi$.

    (iv) It follows directly from the proof of Proposition~\ref{cone_prop} (ii).

    (v) The iteration formula is equivalent to $\tilde{V}_{s,k+1}^i=D_\pi(\tilde{V}_{s,k}^i)$, where $\tilde{V}_{s,k}^i=V_{s,k}^i-\gamma m^i\mathbf{1}_s/(1-\gamma)$.
    It follows from $V_{s,0}^i\in C_\pi$ that $V'_0\in C(\pi)$ by (ii), and thus $\tilde{V}_{s,k}^i\geq \tilde{V}_{s,k+1}^i$ and $\tilde{V}_{s,k}^i\in C_\pi$ for all $k\in\mathbb{N}$ by (i).
    According to the monotone convergence theorem, $\tilde{V}_{s,k}^i$ converges as $k \to \infty$, and the limit is the unique solution $V_{\pi s}^i$ of $V_{\pi s}^i=D_\pi(V_{\pi s}^i)$.
    Hence we obtain
    $$\lim_{k \to \infty} V_{s,k}^i=V_{\pi s}^i+\frac{\gamma}{1-\gamma}m^i\mathbf{1}_s \quad \wedge \quad \lim_{k \to \infty} \left(V_{s,k}^i-D_\pi(V_{s,k}^i)\right)=\gamma m^i\mathbf{1}_s.$$
\end{proof}

\begin{proof}[Proof of Theorem~\ref{cone_equil}]
    (i) By definition, $\pi_a^{si}$ is a perfect equilibrium if and only if
    $$V_{\pi s}^i=\max_a\pi_{Aa}^{si-}\left(u_A^{si}+\gamma T_{s'A}^s V_{\pi s'}^i\right),$$
    and $V_{\pi s}^i \in \hat{C}_\pi$ if and only if
    $$V_{\pi s}^i\geq \pi_{Aa}^{si-}\left(u_A^{si}+\gamma T_{s'A}^s V_{\pi s'}^i\right).$$
    Note that the equality can be established in the above inequality, because $V_{\pi s}^i=\pi_A^s\left(u_A^{si}+\gamma T_{s'A}^s V_{\pi s'}^i\right)$.
    Then the two formulas are equivalent, and hence $\pi_a^{si}$ is a perfect equilibrium if and only if $V_{\pi s}^i \in \hat{C}_\pi$.

    It directly follows from the definitions that $V_{\pi s}^i \in \hat{C}_\pi$ if and only if $V_{\pi s}^i$ is the global optimal point of dynamic programming problem \eqref{dp_problem}.

    (ii) For every $x \in \mathcal{S} $, $\pi_a^{si}(x)$ is a Nash equilibrium if and only if
    $$D_\pi(V_s^i)(x)=\hat{D}_\pi(V_s^i)(x).$$

    First, suppose $\pi_a^{si}(x)$ is a Nash equilibrium.
    Consider $Y_{xs}^i$ and $d_x^i$ that satisfies formula \eqref{cone_boundary},
    and then substituting with $V_s^i=Y_{xs}^i+d_x^i\mathbf{1}_s$, we have
    $$D_\pi(Y_{xs}^i)(x)=\hat{D}_\pi(Y_{xs}^i)(x),$$
    and further there is
    $$\left(Y_{xs}^i-\hat{D}_\pi(Y_{xs}^i)\right)(x)=0.$$
    By the uniqueness of the pair of $\hat{Y}_{xs}^i$ and $\hat{d}_x^i$, we obtain $Y_{xs}^i=\hat{Y}_{xs}^i$.

    Conversely, suppose $Y_{xs}^i=\hat{Y}_{xs}^i$, and it follows that $d_x^i=\hat{d}_x^i$.
    By formula \eqref{cone_boundary} and \eqref{coneb_boundary} there are
    \begin{align*}
        \left(V_s^i-D_\pi(V_s^i)-(1-\gamma)d_x^i\mathbf{1}_s\right)(x)=             & 0, \\
        \left(V_s^i-\hat{D}_\pi(V_s^i)-(1-\gamma)\hat{d}_x^i\mathbf{1}_s\right)(x)= & 0.
    \end{align*}
    Then $D_\pi(V_s^i)(x)=\hat{D}_\pi(V_s^i)(x)$, and hence $\pi_a^{si}(x)$ is a Nash equilibrium.
\end{proof}

\begin{proof}[Proof of Proposition~\ref{naive_iter}]
    Note that the limit is always a perfect equilibrium as long as $V_{s,k}^i$ converges under the assumption that $D_{\pi_k}(V_s^i)=\hat{D}_{\pi_k}(V_s^i)$,
    so it is suffice to show that $V_{s,k}^i$ converges.

    First, use (i) to prove (ii).
    Consider $V_s^i-\delta\leq Y_s^i \leq V_s^i+\delta$, where $V_s^i,Y_s^i\in O(\hat{V}_s^i)$ and $\delta=\lVert V_s^i-Y_s^i\rVert _\infty$ is small enough.
    Using $V_s^i\leq Y_s^i\rightarrow D(V_s^i)\leq D(Y_s^i)$, there is
    $$D(V_s^i)-\gamma\delta\leq D(Y_s^i) \leq D(V_s^i)+\gamma\delta.$$
    Then we obtain $\lVert D(V_s^i)-D(Y_s^i)\rVert _\infty\leq\gamma\lVert V_s^i-Y_s^i\rVert _\infty$.
    Thus, $D$ is a contraction mapping on $O(\hat{V}_s^i)$, and $V_{s,k}^i\in O(\hat{V}_s^i)$ converges by the contraction mapping theorem.

    By $V_s^i\leq Y_s^i\rightarrow D(V_s^i)\leq D(Y_s^i)$ and $V_{s,0}^i\in C_{\pi_0}$,
    we have $V_{s,k+1}^i\leq V_{s,k}^i$ for all $k$, and hence we obtain $V_{s,k}^i\in C_{\pi_k}$ for all $k$.

    Then use (ii) to prove (iii).
    Monotonicity follows directly from $V_{s,k}^i\in C_{\pi_k}$ for all $k$, and convergence already holds, and hence (iii) is obtained.

    Then use (iii) to prove (iv).
    It is obtained directly from monotonicity.

    Finally, use (iv) to prove (iii).
    It follows from $V_{s,k+1}^i\leq V_{s,k}^i\rightarrow D(V_{s,k+1}^i)\leq D(V_{s,k}^i)$ that
    $$V_{s,k}^i\in C_{\pi_k}\rightarrow V_{s,k+1}^i \in C_{\pi_{k+1}},$$
    and thus $V_{s,k}^i\in C_{\pi_k}$ for all $k$.
    Considering $V_{s,0}^i\in C_{\pi_0}$, it follows that $V_{s,k}^i$ monotonically decreases by Theorem~\ref{cone_dp} (i).

    Note that $V_s^i<\pi_A^s(u_A^{si}+\gamma T_{s'A}^s V_{s'}^i)$ for any $\pi_a^{si}$ when $V_s^i<\min_A^{si} u_A^{si}/(1-\gamma)$,
    that is, $\bigcup_\pi C_\pi$ is bounded, and thus there is a lower bound for $V_{s,k}^i$.
    Hence $V_{s,k}^i$ converges by the monotone convergence theorem.
\end{proof}

\begin{proof}[Proof of Theorem~\ref{cone_iter}]
    The existence of a sequence $\{m_k^i\}_{k\in \mathbb{N} }$ such that $V_{s,k}^i+m_k^i\mathbf{1}_s \in \hat{C}_{\pi_k}$ for all $k\in \mathbb{N}$ is guaranteed by Proposition~\ref{cone_prop} (iii).

    First, suppose $V_{s,k}^i$ converges to a perfect equilibrium value function $\tilde{V}_s^i$, then $\tilde{V}_s^i=D_{\tilde{\pi}}(\tilde{V}_s^i)$ for some $\tilde{\pi}_a^{si}$, and thus $\lim_{k \to \infty}m_k^i=0$.

    Conversely, suppose $\lim_{k \to \infty}m_k^i=0$, then $V_{s,k}^i$ monotonically decreases when $k$ is large enough since $V_{s,k}^i+m_k^i\mathbf{1}_s \in \hat{C}_{\pi_k}$.
    Note that $V_{\pi s}^i$ is bounded below, and thus $V_{s,k}^i$ satisfies the monotone convergence property and converges to some $\tilde{V}_s^i$.
    It follows from $\lim_{k \to \infty}m_k^i=0$ that $\tilde{V}_s^i=V_{\tilde{\pi} s}^i$ and $\tilde{V}_s^i \in \hat{C}_{\tilde{\pi}}$ for some $\tilde{\pi}_a^{si}$.
    Hence, $\tilde{V}_s^i$ is a perfect equilibrium value function.
\end{proof}

\subsection*{Proofs in section \ref{hybrid_sec}}

\begin{proof}[Proof of Theorem~\ref{per_equil_thm}]
    Note that $U_A^{si}=u_A^{si}+\gamma T_{s'A}^s V_{s'}^i$, then there is
    \begin{align*}
          & \hat{D}_\pi(V_s^i)-D_\pi(V_s^i)=\max_a \pi_{Aa}^{si-}(u_A^{si}+\gamma T_{s'A}^s V_{s'}^i)-\pi_A^s(u_A^{si}+\gamma T_{s'A}^s V_{s'}^i)                  \\
        = & \sum_a \left(\pi_a^{si} \circ \left(\max_a \pi_{Aa}^{si-}U_A^{si}-\pi_{Aa}^{si-}U_A^{si}\right)\right) =\mathbf{1}_a\bar{\mu}_a^{si}(V_s^i,\pi_a^{si})
    \end{align*}

    (i) being equivalent to (ii) follows from the definition of perfect equilibrium that $V_s^i=V_{\pi s}^i$ and $\pi_a^{si}$ is a Nash equilibrium of $U_{\pi A}^{si}=u_A^{si}+\gamma T_{s'A}^s V_{\pi s'}^i$ for every state $s$,
    which is equivalent to $\bar{\mu}_a^{si}(V_{\pi s}^i,\pi_a^{si})=0$ by Theorem~\ref{equil_bund_equiv}~(iii).

    (i) being equivalent to (iii) follows from $V_{\pi s}^i=D_\pi(V_{\pi s}^i)$ and Theorem~\ref{cone_equil} (i).

    (i) being equivalent to (iv) follows from (iii) and that the objective $\pi_a^{si} r_a^{si}$ of problem \eqref{regmin_problem} is larger than $0$.
\end{proof}

\begin{proof}[Proof of Theorem~\ref{exist_odd_thm}]
    (i) Similar to Theorem~\ref{ipm_theo}, $\hat{\pi}_a^{si}=M(\mu_a^{si})(\pi_a^{si})$ is still a map, such that the Brouwer's fixed point theorem still applies.

    (ii) Similar to Theorem~\ref{odd_thm}, unbiased KKT conditions \eqref{ukkt_dy_equ} still derive a polynomial function from $\pi_a^{si}$ to $\mu_a^{si}$ by multiplying the determinant $|I_{s'}^s-\gamma T_{\pi s'}^s|$ to the second line,
    where $|I_{s'}^s-\gamma T_{\pi s'}^s|$ is non-zero for every $\pi_a^{si}\in\mathcal{P}$ due to Lemma~\ref{stoc_mat_lemm} (ii), such that the Newton-Puiseux theorem still applies.
\end{proof}

\begin{proof}[Proof of Proposition~\ref{conv_rate}]
    (i) By the iteration formula, the residual $V_s^i-D_\pi(V_s^i+m^i\mathbf{1}_s)$ satisfies
    $$D_\pi(V_s^i+m^i\mathbf{1}_s)-D_\pi(D_\pi(V_s^i+m^i\mathbf{1}_s)+m^i\mathbf{1}_s)=\gamma T_{\pi s'}^s\left(V_s^i-D_\pi(V_s^i+m^i\mathbf{1}_s)\right).$$
    Note that $\gamma\in[0,1)$ and $T_{\pi s'}^s$ are constant, and $V_s^i-D_\pi(V_s^i+m^i\mathbf{1}_s)$ converges to $0$ as iteration.
    Thus, $\lVert V_s^i-D_\pi(V_s^i+m^i\mathbf{1}_s)\rVert _\infty$ converge to $0$ linearly with a rate of $\gamma$.

    (ii) Denote $\tilde{\mu}_a^i(\pi_a^i):=\mu_a^i-(\mathbf{1}_a\mu_a^i)\circ\pi_a^i$, the difference of $\tilde{\mu}_a^i(\pi_a^i)$ is
    $$d\tilde{\mu}_a^i(\pi_a^i)=d\mu_a^i-(\mathbf{1}_a(\mu_a^i+d\mu_a^i))\circ d\pi_a^i-(\mathbf{1}_a d\mu_a^i)\circ\pi_a^i.$$
    According to the Taylor's formula of $d\pi_a^i$ with respect to $d\mu_a^i-(\mathbf{1}_a d\mu_a^i)\circ\pi_a^i$, there is
    $$d\pi_a^i= \pi_a^i\circ\left(\frac{\mu_{a'}^j d\pi_a^i}{\pi_a^i d\mu_{a'}^j}\frac{d\mu_{a'}^j-(\mathbf{1}_{a'} d\mu_{a'}^j)\circ\pi_{a'}^j}{\mu_{a'}^j}\right)+o(d\mu_a^i-(\mathbf{1}_a d\mu_a^i)\circ\pi_a^i).$$
    Then we substitute in $d\pi_a^i$ and $d\mu_a^i=-\eta^i\circ\mu_a^i+\beta^i\circ\pi_a^i$, while noting that the part of $d\pi_a^i$ due to the term $\beta^i\circ\pi_a^i$ is $0$ for any $\beta^i$ on the equilibrium bundle, and obtain
    \begin{align*}
        d\tilde{\mu}_a^i(\pi_a^i) & = -\eta^j\circ\left(I-(1-\eta^i)\circ\mu_a^i\circ\frac{\mu_{a'}^j d\pi_a^i}{\pi_a^i d\mu_{a'}^j}\circ{\mu_{a'}^j}^{-1}\right)\tilde{\mu}_{a'}^j(\pi_a^i)                                                                  \\
                                  & +\eta^j\circ(1-\eta^i)\circ\left((\mathbf{1}_a\mu_a^i)\circ o(\tilde{\mu}_a^i(\pi_a^i))-\tilde{\mu}_a^i(\pi_a^i)\circ\frac{\mu_{a'}^j d\pi_a^i}{\pi_a^i d\mu_{a'}^j}\frac{\tilde{\mu}_{a'}^j(\pi_a^i)}{\mu_{a'}^j}\right)
    \end{align*}
    The second additive is a higher-order infinitesimal with respect to $\tilde{\mu}_a^i(\pi_a^i)$.
    The differential $(\mu_{a'}^j d\pi_a^i)/(\pi_a^i d\mu_{a'}^j)$ of the equilibrium bundle tends to $I$ as $\mu_a^i\to\infty$ on the fiber over $\pi_a^i$, then there always exists a large enough $\mu_a^i$ on the fiber over $\pi_a^i$ such that the eigenvalues of $(I-(1-\eta^i)\circ\mu_a^i\circ((\mu_{a'}^j d\pi_a^i)/(\pi_a^i d\mu_{a'}^j))\circ{\mu_{a'}^j}^{-1})$ are all positive for a given $\eta^i$.
    Note that $\tilde{\mu}_a^i(\pi_a^i)$ converges to $0$ as long as $\pi_a^i$ converges according to the iteration formula, and $\mu_a^i\to 0$ is sufficient to let $\pi_a^i$ converge.
    Thus, $\lVert \tilde{\mu}_a^i(\pi_a^i)\rVert _\infty$ converges to $0$ linearly.

    Then we show $\lVert\bar{\mu}_a^i(\pi_a^i)\rVert _\infty$ also converges to $0$ linearly.
    $$\tilde{\mu}_a^i(\pi_a^i)=\left(\mu_a^i-(\mathbf{1}_a\mu_a^i)\circ\pi_a^i\right)=\pi_a^i \circ \left(\pi_A U_{\pi A}^i-\pi_{Aa}^{i-}U_{\pi A}^i\right)$$
    $$\bar{\mu}_a^i(\pi_a^i)=\pi_a^i \circ \left(\max_a \pi_{Aa}^{i-}U_{\pi A}^i-\pi_{Aa}^{i-}U_{\pi A}^i\right)=\tilde{\mu}_a^i(\pi_a^i)+\pi_a^i \circ \left(\max_a \pi_{Aa}^{i-}U_{\pi A}^i-\pi_A U_{\pi A}^i\right)$$
    Note that when $\lVert \tilde{\mu}_a^i(\pi_a^i)\rVert _\infty<\epsilon$, there exists a coefficient $k$ irrelevant to $\epsilon$ such that either $\pi_a^i<k\epsilon$ or $|\pi_A U_{\pi A}^i-\pi_{Aa}^{i-}U_{\pi A}^i|<k\epsilon$ for every index $(i,a)$.
    Then we have specifically $|\pi_A U_{\pi A}^i-\max_a \pi_{Aa}^{i-}U_{\pi A}^i|<k\epsilon$ for the corresponding index since the algorithm leads to Nash equilibria.
    It follows that $\lVert \tilde{\mu}_a^i(\pi_a^i)\rVert _\infty<\epsilon$ implies $\lVert\bar{\mu}_a^i(\pi_a^i)\rVert _\infty<(k+1)\epsilon$ with $k$ irrelevant to $\epsilon$.
    Hence, $\lVert\bar{\mu}_a^i(\pi_a^i)\rVert _\infty$ converges to $0$ linearly.

    (iii) Consider unbiased barrier problem \eqref{ubarr_equ} with $(\hat{\pi}_a^i,\hat{r}_a^i)$ being variables instead of parameters.
    By the objective function and constraint $r_a^i-v^i+\pi_{Aa}^{i-}U_A^i=0$, and noting that $\mathbf{1}_a\hat{\pi}_a^i-\mathbf{1}^i=0$ implies $(\pi_a^i-\hat{\pi}_a^i)dv^i=0$, the differential of objective function is
    \begin{align*}
          & \left(\pi_a^i-\hat{\pi}_a^i\right) \left(dr_a^i+\mu_a^i/{\pi_a^i}^2\circ d\pi_a^i\right) +\left(r_a^i-\hat{r}_a^i\right) \left(d\pi_a^i+\mu_a^i/{r_a^i}^2\circ dr_a^i\right)                                                                                                                \\
        = & \left(\left(\left(2-\frac{\pi_a^i-\hat{\pi}_a^i}{\pi_a^i}\right)\circ\left(\pi_a^i-\hat{\pi}_a^i\right)\right)\left({\rm Diag}(r_a^i)-\pi_{Aaa'}^{ij-}U_A^i\circ\pi_{a'}^j\right)+\left(\pi_{a'}^j-\hat{\pi}_{a'}^j\right)^2 \frac{dv^j}{d\pi_{a'}^j}\right) \frac{d\pi_{a'}^j}{\pi_{a'}^j} \\
    \end{align*}
    This projected gradient and the projected gradient \eqref{projgrad_formula} only differ by two higher-order infinitesimals.
    Note also that on a non-singular point, unbiased barrier problem \eqref{ubarr_equ} is locally strictly convex as Theorem~\ref{equil_bund_theo} (iii) has proved.
    Considering that projected gradient descent is known to converge sublinearly, projected gradient descent \eqref{projgrad_formula} converges sublinearly if the convergence point is non-singular.

\end{proof}

\end{document}